\documentclass[a4paper,11pt]{article}
 
\usepackage{authblk}
\usepackage{fullpage}
\usepackage{graphicx,url}
\usepackage[utf8]{inputenc}
\usepackage[english]{babel}
\usepackage{amsmath,amsthm}
\usepackage{algorithm}
\usepackage[noend]{algpseudocode}
\usepackage{enumitem}
 
 \usepackage{pgfplots}
\pgfplotsset{compat=1.18}
 
\newtheorem{theorem}{Theorem}

\newtheorem{corollary}[theorem]{Corollary}
\newtheorem{lemma}[theorem]{Lemma}

\DeclareMathOperator{\nim}{nim}
\DeclareMathOperator{\mex}{mex}
\def\xor{\oplus}
 
\newcommand{\nimpc}[1]{\operatorname{nim}_{\ell_k}(P''_{#1})}
\newcommand{\nimpb}[1]{\operatorname{nim}_{\ell_k}(P'_{#1})}
\newcommand{\nimp}[1]{\operatorname{nim}_{\ell_k}(P_{#1})}
\newcommand{\nimptc}[1]{\operatorname{nim}_{\ell_2}(P''_{#1})}
\newcommand{\nimptb}[1]{\operatorname{nim}_{\ell_2}(P'_{#1})}
\newcommand{\nimpt}[1]{\operatorname{nim}_{\ell_2}(P_{#1})}
 
\DeclareMathOperator{\I}{I}
\DeclareMathOperator{\hull}{hull}
\DeclareMathOperator{\CIG}{CIG}
\DeclareMathOperator{\CHG}{CHG}
\DeclareMathOperator{\gc}{g}
\DeclareMathOperator{\mc}{m}

\title{Hull Games of Induced Path Convexities in Graphs}
 
\author[1]{Eurinardo Costa}
\author[1]{Leon Almeida}
\author[2]{Rudini Sampaio}
 
\affil[1]{Campus Russas, Universidade Federal do Ceará, Russas, CE, Brazil}
\affil[2]{Dept. Computação, Universidade Federal do Ceará, Fortaleza, CE, Brazil}

\date{}
\begin{document}
 
\maketitle
 
\begin{abstract}
In 1984, Frank Harary introduced the first convexity games in graphs, all of them based on the geodesic convexity, which is the graph convexity related to shortest paths. In 2024, Ara\'ujo et al. obtained the first PSPACE-hardness proofs on some of these geodesic games and generalized them to any graph convexity.
In this paper, we investigate convexity games on several known path convexities: the monophonic $\mc$-convexity and the $\ell_k$-convexities, based on induced paths and on induced paths of size at most $k$. We prove that the hull games $\CHG_{\mc}$ and $\CHG_{\ell_k}$ are PSPACE-complete for every $k\ge2$ even in graphs with diameter at most 3. We also use the Sprague-Grundy Theory to obtain a polynomial time algorithm to decide the winner of the games $\CHG_{\mc}$ and $\CHG_{\ell_k}$ for any $k\ge2$ in disjoint unions of paths and cycles. For $k\ge3$ odd, we prove that Alice (1st player) wins $\CHG_{\ell_k}$ in the path $P_n$ if and only if $n$ is odd and she wins in the cycle $C_n$ if and only if $n=3$ or $n=\alpha\cdot (k+1)-1$ with $\alpha\ge2$. For $k\ge2$ even, the only periodic nimber sequences of $\CHG_{\ell_k}$ obtained through extensive computational testing occurred for $k=2^h-4$ with $h\ge3$, e.g, $k\in\{4,12,28,60,\ldots\}$. In this case ($k=2^h-4$ with $h\ge3$), we prove that the nimber sequences of $\CHG_{\ell_k}$ in $P_n$ and in $C_n$ are periodic and Alice loses (resp. wins) in $P_n$ (resp. $C_n$) with $n>k$ only when $n=3k+4$ (resp. $n\in\{2k+1,5k+3\}$). Finally, we show that, for $k=2$, the game $\CHG_{\ell_2}$ in paths $P_n$ is closely related to the classical game \emph{Couples-are-Forever} of J. H. Conway: it is still an open problem if the nimber sequence is periodic or not and Alice loses only for 12 values of $n$ up to $10$ million.
\end{abstract}

\section{Introduction}
 
In 1981, Frank Harary \cite{harary81} published the first paper on graph convexities and, in 1984, he proposed the first graph convexity games \cite{harary84,harary84b}.
The \emph{hull game} ($\CHG_\mathcal{C}$) and the \emph{interval game} ($\CIG_\mathcal{C}$) are among the first investigated games \cite{buckley85,haynes-2003,nec-1988}, considering the geodesic convexity.
Despite that, only recently the hull game $\CHG_{\gc}$ was proved PSPACE-complete \cite{araujo24}, while the PSPACE-hardness of the interval game $\CIG_{\gc}$ is still an open problem, where $\gc$ indicates the geodesic convexity.
In this paper, we start the research of these games in the $\mc$-convexity (monophonic) and the $\ell_k$-convexities.
 
Every convexity $\mathcal{C}$ on a graph $G$ can be defined through an interval function $\I_\mathcal{C}(\cdot)$ in such a way that $\I_\mathcal{C}(\emptyset)=\emptyset$, $\I_\mathcal{C}(V(G))=V(G)$ and $S\subseteq V(G)$ is $\mathcal{C}$-convex if and only if $\I_\mathcal{C}(S)=S$.
The most investigated graph convexities are the path convexities, which are based on a given family $\mathcal{P}$ of paths of the graph such that $\I_\mathcal{C}(S)$ consists of $S$ and every vertex in a path of $\mathcal{P}$ between two vertices of $S$. As an example, in the geodesic convexity \cite{harary81}, the monophonic convexity \cite{jamison82}, the $m_k$ convexities \cite{dragan99} and the $\ell_k$ convexities \cite{gutierrez-protti-tondato2023}, the paths considered are the shortest paths, the induced paths, the induced paths of at least $k$ edges and the induced paths of at most $k$ edges, respectively.
The \emph{$\mathcal{C}$-convex hull} $\hull_\mathcal{C}(S)$ of $S\subseteq V(G)$ is the minimum $\mathcal{C}$-convex set that contains $S$. It is known that $\hull_\mathcal{C}(S)$ can be obtained by successive applications of the interval function $I_\mathcal{C}(\cdot)$ until obtaining a $\mathcal{C}$-convex set.
We say that $S\subseteq V(G)$ is an \emph{interval set} (resp. \emph{hull set}) in the convexity $\mathcal{C}$ if $\I_\mathcal{C}(S)=V(G)$ (resp. $\hull_\mathcal{C}(S)=V(G)$).
For more information, we refer the reader to the recent book \cite{araujo-livro25}.
 
In the games $\CHG_\mathcal{C}$ (\emph{hull game}) and $\CIG_\mathcal{C}$ (\emph{interval game}) in a graph $G$, two players, Alice and Bob, starting with Alice, alternate turns selecting a \emph{playable} vertex of $G$, where the notion of a playable vertex depends on the game.
In $\CHG_\mathcal{C}$ (resp. $\CIG_\mathcal{C}$), a vertex is \emph{playable} if it does not belong to $\hull_\mathcal{C}(S)$ (resp. $\I_\mathcal{C}(S)$), where $S$ is the set of vertices selected before the current turn.
Therefore, the game $\CHG_\mathcal{C}$ (resp. $\CIG_\mathcal{C}$) ends when the set $S$ of selected vertices is a hull set (resp. an interval set), that is, when $\hull_\mathcal{C}(S) = V(G)$ (resp. $\I_\mathcal{C}(S) = V(G)$). In the normal variant, the player unable to move loses, that is, except in the empty game, the last to play wins. Note that a game in normal variant has no draw.
 
In this paper, we prove that the hull games $\CHG_{\mc}$ and $\CHG_{\ell_k}$ are PSPACE-complete for every $k\ge2$ even in graphs with diameter at most three. 
We show that the PSPACE-hardness result obtained is also valid for the geodesic hull game $\CHG_{\gc}$ and is much simpler than the one in \cite[Thm. 3.2]{araujo24}.
 
We also use the Sprague-Grundy Theory to obtain a polynomial time algorithm to decide the winner of the games $\CHG_{\mc}$ and $\CHG_{\ell_k}$ for any $k\ge2$ in disjoint unions of paths and cycles.
In such graphs, we prove that Alice wins $\CHG_{\mc}$ if and only if the number of odd paths (in the number of vertices) and triangles (cycles with 3 vertices) is odd.
 
For $k\ge3$ odd, we prove that Alice wins $\CHG_{\ell_k}$ in the path $P_n$ if and only if $n$ is odd and she wins in the cycle $C_n$ if and only if $n=3$ or $n=\alpha\cdot (k+1)-1$ with $\alpha\ge2$.
For $k\ge2$ even, the only periodic nimber sequences of $\CHG_{\ell_k}$ obtained through extensive computational testing occurred for $k=2^h-4$ with $h\ge3$, e.g, $k\in\{4,12,28,60,\ldots\}$. In this case ($k=2^h-4$ with $h\ge3$), we prove that the nimber sequences of $\CHG_{\ell_k}$ in $P_n$ and in $C_n$ are periodic and Alice loses (resp. wins) in $P_n$ (resp. $C_n$) with $n>k$ only when $n=3k+4$ (resp. $n\in\{2k+1,5k+3\}$).
The periodic sequences have long transients and long periodic tails; therefore, the proofs of periodicity are lengthy and highly technical.
 
Finally, we show that, for $k=2$, the game $\CHG_{\ell_2}$ in paths $P_n$ is closely related to the classical game \emph{Couples-are-Forever} of J. H. Conway.
In this game, there is one heap with coins and a move consists in selecting a heap with more than two coins and breaking it into two smaller heaps. At the end, all heaps have size 1 or 2 and the player unable to move loses.
It is still an open problem if the nimber sequence of \emph{Couples-are-Forever} is periodic or not \cite{caines99} and Alice loses for only 14 values of $n$ up to $10$ million:
\[
1,\ 2,\ 5,\ 13,\ 21,\ 31,\ 47,\ 73,\ 99,\ 125,\ 151,\ 177,\ 315,\ 409.
\]
We prove that, up to a shift of one position, the nimber sequence of \emph{Couples-are-Forever} and the nimber sequences of the game $\CHG_{\ell_2}$ on paths $P'_n$ and $P''_n$ (considering that one or two endpoints have already been selected at the start of the game, respectively) are identical.
Nevertheless, after extensive computational experiments for $n$ up to $10$ million, we observed that, up to a shift of three positions, the nimber sequences of the original game $\CHG_{\ell_2}$ (on paths $P_n$) and \emph{Couples-are-Forever} do not coincide, but maintain an extremely high agreement rate, on the order of $99.97\%$. Moreover, Alice loses $\CHG_{\ell_2}$ on the path $P_n$ for only 12 values of $n$ up to $10$ million:
\[
2,\ 10,\ 18,\ 28,\ 44,\ 70,\ 96,\ 122,\ 148,\ 174,\ 312,\ 406.
\]

\section{Hull games are PSPACE-hard}
 
In this section, we prove that the normal variant of hull games $\CHG_{\mc}$ and $\CHG_{\ell_k}$ are PSPACE-complete for every $k\ge2$. Here we consider the games as decision problems: given a graph, does Alice have a winning strategy?
Since the number of turns is at most $n$ and, in each turn, the number of possible vertices to select is at most $n$, all these games are polynomially bounded two player games, which implies that they are in PSPACE~\cite{demaine-2009}.
 
We obtain a reduction from the \textsc{Clique Forming} game, which is PSPACE-complete \cite{schaefer-1978} and was used recently in reductions of other graph convexity games \cite{araujo24, araujo-cocoon25, chandran24}.
In this game on a graph $G$, Alice and Bob alternately select vertices and the subset of the chosen vertices must induce a clique.
 
Although the geodesic hull game $\CHG_{\gc}$ was proven to be PSPACE-complete in \cite{araujo24}, we also include it in the theorem below, given that the following proof is much simpler.
 
\begin{theorem}\label{teo-pspace1}
The normal variant of the games $\CHG_{\ell_k}$, $\CHG_{\mc}$ and $\CHG_{\gc}$ are PSPACE-complete even in graphs with diameter at most three for every $k\ge2$.
\end{theorem}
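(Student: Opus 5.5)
We reduce from \textsc{Clique Forming}, which is PSPACE-complete \cite{schaefer-1978}. Membership in PSPACE holds because the games are polynomially bounded, as noted above. Given an instance $G$ of \textsc{Clique Forming}, we build a graph $H$ in which every legal clique move of $G$ becomes a ``safe'' move, and every move that breaks the clique condition hands the win to the opponent. The same $H$ should work for $\ell_k$ (every $k\ge2$), $\mc$ and $\gc$ at once. To achieve this we only use induced paths of length at most $2$, which are automatically shortest paths, and we keep $\mathrm{diam}(H)\le 3$, so that the three interval functions agree on all the configurations that matter.

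The construction I would try is as follows. Let $H[V(G)]=G$, and add a vertex $c$ adjacent to all of $V(G)$. For each non-edge $uv$ of $G$, add a vertex $z_{uv}$ adjacent to $u$, $v$ and $c$. Finally, add a pendant vertex $p$ attached to $c$. The key steps, in order, are these.

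\textbf{(i)} If $S\subseteq V(G)$ is a clique, then $S$ is convex in all three convexities. Indeed, adjacent vertices of $S$ have only the edge as an induced path, so $\hull(S)=S$. Hence every other vertex of $V(G)$ remains playable.

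\textbf{(ii)} If $S$ contains a non-adjacent pair $u,v\in V(G)$, then $c\in\I(\{u,v\})$ via the path $u\,c\,v$, and $z_{uv}\in\I(\{u,v\})$ via $u\,z_{uv}\,v$. For any other $w\in V(G)$, the path $z_{uv}\,c\,w$ is an induced path of length $2$, which puts $w$ into the hull. In the same way, every $z_{xy}$ is obtained from $x$ and $y$. So $\hull(S)=V(H)\setminus\{p\}$, because $p$ is simplicial and therefore extreme. Consequently, a player who completes a non-clique leaves exactly one playable vertex, and the opponent takes it and wins.

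\textbf{(iii)} We must also control the gadget vertices $c$, $z_{uv}$ and $p$ when they are played. The goal is to show that playing $c$ or some $z_{uv}$ together with any vertex of $V(G)$ outside the current clique already forces the hull to be $V(H)\setminus\{p\}$ (or $V(H)$). This makes such moves losing, or at best equivalent to a clique move. We also need to handle $p$ separately, since it is never in any hull and so is always exactly one extra available move.

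\textbf{(iv)} Finally, we compare the two games. After a maximal clique of $G$ has been selected, every remaining move of $V(G)$ is a non-clique move, and by (ii) it is losing. So the player to move in $H$ is stuck in exactly the same situation as in \textsc{Clique Forming}, up to the parity correction contributed by $p$ and the gadget moves.

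The main obstacle is step (iii) together with the parity bookkeeping. The extreme vertex $p$ is a ``free'' move that can be spent at any moment, which would flip the parity of the game. My first attempt would be to replace the single pendant vertex by a gadget whose extreme vertices come in a pair with a mirror (tweedledum--tweedledee) strategy, so that their net effect on the outcome cancels. If that fails, I would make $p$ non-extreme by giving it a second neighbour in a way that it is generated only once a non-clique has been formed. Alternatively, I would reduce from a variant of \textsc{Clique Forming} with a prescribed parity of the starting player, adjusting who starts in $H$ accordingly. Once the gadget moves are shown to be dominated by clique moves, the equivalence of winning strategies follows by a direct strategy-copying argument. The bound $\mathrm{diam}(H)\le3$ is immediate, because every vertex is within distance $1$ of $c$ except $p$'s neighbourhood, which only adds one to the distances.
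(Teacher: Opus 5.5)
\textbf{There is a genuine gap: the proposal is not yet a proof, and the construction as given is not a reduction.} Step (iii) is only stated as a goal, and the parity problem of step (iv) is left open with three tentative repairs. Worse, step (iii) is false for your graph. Since $c$ is adjacent to every other vertex, $S\cup\{c\}$ is a clique, hence convex in all three convexities, whenever $S$ is a clique of $G$. So $c$ is a free pass, not a losing move. Similarly $\{u,z_{uv}\}$ is a clique, so each $z_{uv}$ is an extra clique-extending move, and $p$ is one more optional move.

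Concretely, let $G$ be the path $a\,b\,d$. Bob wins \textsc{Clique Forming} on $G$, since every first vertex is answered by a neighbour and $G$ is triangle-free. In your $H$, the graph $H-c$ is the $4$-cycle $a\,b\,d\,z_{ad}$ plus the isolated vertex $p$. Because $c$ is universal, no induced path has three or more edges, so all the convexities coincide. Yet Alice wins by playing $a$:
\begin{itemize}
\item if Bob answers $d$, the hull is $V(H)\setminus\{p\}$ and Alice takes $p$;
\item if he answers $p$, Alice plays $d$ and closes the hull;
\item if he answers $b$, $z_{ad}$ or $c$, Alice completes the triangle $\{a,b,c\}$ or $\{a,z_{ad},c\}$. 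After that, every move of Bob either closes the hull up to $p$ (Alice takes $p$) or is $p$ (Alice plays $d$).
\end{itemize}
Your repair ideas all target $p$ and leave $c$ and the $z_{uv}$ untouched. A prescribed parity of the starting player cannot compensate for moves that either player may use or decline at will.

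Your claim that the three interval functions agree is also unjustified. If $u,v,s$ are pairwise non-adjacent in $G$, then $z_{uv}\,u\,z_{us}\,s$ is an induced path with three edges. So playing $z_{uv}$ against $\{s\}$ yields $V(H)\setminus\{p\}$ in $\mc$ and in $\ell_k$ with $k\ge3$, but only $\{z_{uv},c,s\}$ in $\gc$.

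\textbf{How the paper avoids this.} Its construction has no unpaired safe gadget move. To a non-complete instance it adds $u_1,u_1',u_2,u_2'$, each adjacent to the whole instance, the edges $u_1u_1'$ and $u_2u_2'$, and a vertex $w$ with $N(w)=\{u_1,u_1'\}$.
\begin{itemize}
\item A non-clique move puts the four $u$'s and then the whole instance into the hull. This leaves only the simplicial vertex $w$, which the opponent takes.
\item Taking $w$ early is punished: the opponent answers $u_2$ or $u_2'$. The induced paths $w\,u_1''\,v\,u_2$ with three edges, together with one non-edge of the instance, then close the whole graph. (The printed proof names $u_1'$ as the answer here, but it is $u_2$ or $u_2'$ that makes the argument work.)
\item Symmetrically, $u_2$ and $u_2'$ are punished by $w$.
\item The only harmless gadget moves, $u_1$ and $u_1'$, form a mirror pair. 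The winner of \textsc{Clique Forming} copies its strategy and answers $u_1\leftrightarrow u_1'$. This is your tweedledum--tweedledee idea, but applied to every safe extra move.
\end{itemize}
The punishment of an early $w$ genuinely needs three-edge induced paths. That is why the graph has diameter three, and why the paper handles $\ell_2$ separately via the diameter-two construction of \cite{araujo24}, rather than seeking a single two-edge gadget for all the convexities.
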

 
\begin{proof}
In the case of the $\ell_2$-convexity, it is not difficult to check that the PSPACE-hardness proof of \cite[Thm. 3.2]{araujo24} for the geodesic hull game $\CHG_{\gc}$ is also valid for the game $\CHG_{\ell_2}$, since the graph constructed there has diameter two and the relevant paths, which are the geodesics, has length two. Curiously, the proof below is also valid for the geodesic hull game $\CHG_{\gc}$, but it is not for the game $\CHG_{\ell_2}$.
 
So, consider the monophonic $\mc$-convexity or some $\ell_k$-convexity with $k\ge3$.
Let $H$ be an instance of the \textsc{Clique Forming} game. We may assume that $H$ is not complete.
Let $G$ be the graph obtained from $H$ by adding five new vertices $u_1$, $u_1'$, $u_2$, $u_2'$ and $w$ and including every edge from $H$ to $\{u_1,u_1',u_2,u_2'\}$ and the edges $u_1u_1'$, $u_2u_2'$, $u_1w$ and $u_1'w$. See Figure \ref{fig-pspace}.
 
\begin{figure}[!bt]
\centering\scalebox{1.0}{
\begin{tikzpicture}[scale=1]
\tikzstyle{vertex}=[draw,circle,fill=black!10,minimum size=15pt,inner sep=1pt]
 
\node[vertex] (w) at (-0.5,1.0) {$w$};
\node[vertex] (u1) at (0.5,1.7) {$u_1$};
\node[vertex] (u1b)at (0.5,0.3) {$u_1'$};
\node[vertex] (u2) at (5.5,1.7) {$u_2$};
\node[vertex] (u2b)at (5.5,0.3) {$u_2'$};
 
\draw (3,1) ellipse (0.5cm and 1cm);
\node at (3,1) {\textcolor{red}{$H$}};
\node at (1.7,2.3) {\textcolor{blue}{$G$}};
\draw (u1)--(u1b)--(w)--(u1); \draw (u2)--(u2b);
\path[-]
(u1)edge(2.4,1)edge(2.5,0.2)edge(2.5,1.8)edge(2.4,1.4)edge(2.4,0.6);
\path[-]
(u2)edge(3.6,1)edge(3.5,0.2)edge(3.5,1.8)edge(3.6,1.4)edge(3.6,0.6);
\path[-]
(u1b)edge(2.4,0.9)edge(2.5,0.1)edge(2.5,1.7)edge(2.4,1.3)edge(2.4,0.5);
\path[-]
(u2b)edge(3.6,0.9)edge(3.5,0.1)edge(3.5,1.7)edge(3.6,1.3)edge(3.6,0.5);
\end{tikzpicture}}
\caption{Reduction of Theorem \ref{teo-pspace1} from the \textsc{Clique Forming} game to $\CHG_{\mc}$ and $\CHG_{\ell_k}$ with $k\ge3$, where $H$ is not complete.}
\label{fig-pspace}
\end{figure}

Notice that the only induced paths of length greater than two are inside $H$ or are of type $wu_1''vu_2''$, where $v\in V(H)$, $u_1''\in\{u_1,u_1'\}$ and $u_2''\in\{u_2,u_2'\}$.
 
Let $X\in\{Alice, Bob\}$ be a player with winning strategy in the clique-forming game on $H$, and let $Y$ be its opponent.
Then $X$ always follows the winning strategy in $H$, playing in the same vertices, unless $Y$ plays outside $H$.
If $Y$ plays in $w$, then $X$ plays in $u_1'$ and wins immediately.
If $Y$ plays in $u_2$ or $u_2'$, then $X$ plays in $w$ and wins immediately.
If $Y$ plays in $u_1$ (resp. $u_1'$), then $X$ plays in $u_1'$ (resp. $u_1$).
If $Y$ plays in a vertex of $H$ not adjacent to the other selected vertices, then $X$ plays in $w$ and wins immediately, since $u_1$ and $u_2$ are taken after $Y$'s move and they take all vertices of $G-w$.
 
Since $X$ wins the clique forming game in $H$, then $Y$ is the first to play in a vertex of $\{w,u_2,u_2'\}$ or in a vertex of $H$ not adjacent to the other selected vertices. This implies that the game ends after the next move of $X$, which is the last to play and wins the hull game of the considered convexity.
 
It is not difficult to check that this proof is also valid for the geodesic hull game $\CHG_{\gc}$ and is much simpler than the one of \cite[Thm. 3.2]{araujo24}.
\end{proof}

\section{Hull games are polynomial in paths and cycles}\label{sec:poly}
 
We first summarize the Sprague-Grundy Theory regarding finite impartial games in the normal variant.
We say that a game is \emph{finite} if it always ends after a finite number of moves.
The \emph{length} of a position of a finite combinatorial game is the maximum possible number of moves until the end of the game.
We say that a game is \emph{impartial} if the possible moves in any position of the game do not depend on the current player.
The \emph{nimber} of a position of a finite impartial game is recursively defined (on the length of the position) as a natural number associated in the following way.
If the position has length $0$ (and then there are no possible moves and the 1st player loses), then its nimber is $0$.
Otherwise, its nimber is the \emph{minimum excluded value} $\mex\{N_1,\ldots,N_\ell\}$, where $\{N_1,\ldots,N_\ell\}$ is the set of all nimbers of the positions that can be achieved after one move and the \emph{minimum excluded value} ($\mex$) is the smallest natural number not belonging to the set.
 
To state the next theorem, we also need the definition of sum of games.
Given disjoint positions $J_1,\ldots,J_k$ of finite impartial games, the \emph{sum} $J_1+\ldots+J_k$ is the game position in which each player, when making their move, chooses one of the $k$ positions (say $J_i$) and makes a move in $J_i$.
It is easy to see that the sum of games is commutative and associative.
 
\begin{theorem}[Sprague-Grundy \cite{grundy39,sprague36}]\label{thm-sprague}
The 1st player wins an impartial finite game in the normal variant if and only if the nimber is positive.
Moreover, the nimber of the sum of $k$ disjoint positions with nimbers $n_1,\ldots,n_k$ is equal to $n_1\xor\ldots\xor n_k$, where $\xor$ is the bitwise xor (exclusive-or) operation.
\end{theorem}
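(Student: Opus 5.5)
We prove the Sprague--Grundy theorem in the two stated parts, both by strong induction on the length of positions. This is possible because every game considered is finite, so the nimber is well defined by the recursion above.

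\emph{First part.} We show that a position $J$ is a win for the player about to move if and only if $\nim(J)>0$.
\begin{itemize}
\item If $J$ has length $0$, the player to move cannot move and loses, and $\nim(J)=0$.
\item Suppose $\nim(J)>0$. Since $0$ is not the $\mex$ of the option nimbers, some option $J'$ has $\nim(J')=0$. Every option has strictly smaller length, so by induction the player to move in $J'$ loses. The first player moves to $J'$ and wins.
\item Suppose $\nim(J)=0$. Then no option has nimber $0$, so every move leads to a position with positive nimber. By induction the opponent wins from there, so the first player loses.
\end{itemize}
Impartiality is what makes this well defined: the set of options does not depend on who moves, so ``win for the player to move'' is a property of the position alone.

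\emph{Second part.} By associativity and commutativity of the sum, and associativity of $\xor$, it suffices to treat $k=2$ and then induct on $k$. So we show $\nim(J_1+J_2)=\nim(J_1)\xor\nim(J_2)$ by strong induction on the length of $J_1+J_2$, which is the sum of the two lengths. Write $a=\nim(J_1)$ and $b=\nim(J_2)$. An option of $J_1+J_2$ has the form $J_1'+J_2$ or $J_1+J_2'$, with shorter length. By induction its nimber is $a'\xor b$ or $a\xor b'$, where $a'$ (resp.\ $b'$) is the nimber of an option of $J_1$ (resp.\ $J_2$). We must show that $\mex$ of these values equals $a\xor b$, which takes two checks.
\begin{itemize}
\item \emph{Exclusion.} The value $a\xor b$ is not attained, since $a'\xor b=a\xor b$ would force $a'=a$, contradicting $a=\mex$ of the option nimbers of $J_1$. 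The case $a\xor b'$ is symmetric.
\item \emph{Attainment.} Every $c<a\xor b$ is attained. Let $d=c\xor(a\xor b)$ and let $2^t$ be the highest bit of $d$. This bit is set in $a\xor b$ and not in $c$, since $c<a\xor b$ and they first differ there. Hence it is set in exactly one of $a,b$, say $a$. Then $a\xor d<a$. Since $a$ is a $\mex$, some option $J_1'$ of $J_1$ has nimber $a\xor d$, and the option $J_1'+J_2$ has nimber $(a\xor d)\xor b=c$.
\end{itemize}

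The main technical step is the attainment check, that is, the bit argument showing every smaller value is reachable; everything else is routine unwinding of the definitions. Finally, we induct on $k$ using associativity: $J_1+\dots+J_k=(J_1+\dots+J_{k-1})+J_k$, which gives the stated formula.
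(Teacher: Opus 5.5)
Your proof is correct and is the standard argument. The win/loss characterization comes from whether $0$ lies among the option nimbers, and the sum rule comes from the exclusion check plus the highest-differing-bit attainment check, followed by induction on $k$. The paper does not prove this statement at all; it only cites the classical Sprague--Grundy theorem, so there is no route in the paper to compare against.
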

 
With this in hand, we can determine the winner of the normal variant of the hull game for paths and cycles by calculating its nimbers. We start with the (monophonic) $\mc$-convexity.
Let $P'_n$ be the game considering that exactly one endpoint is already selected in the start of the game.
 
\begin{theorem}\label{thm-nimber-monof}
Let $n\ge1$.
Then the nimber $\nim_{\mc}(P'_n)=n-1$.
Moreover, the nimber $\nim_{\mc}(P_n)=1$, if $n$ is odd, and $\nim_{\mc}(P_n)=0$, otherwise. Finally, $\nim_{\mc}(C_3)=1$ and $\nim_{\mc}(C_n)=0$ if $n\ge4$.
\end{theorem}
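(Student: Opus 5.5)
In paths and cycles, the induced paths between two vertices are easy to describe. In a path $P_n$, the unique path between two selected vertices is induced, so the monophonic hull of $S$ is the segment from its leftmost to its rightmost selected vertex. In a cycle $C_n$ with $n\ge4$, two non-adjacent vertices $x,y$ are joined by two induced paths (the two arcs), so $\hull_{\mc}(\{x,y\})=V(C_n)$. If $x,y$ are adjacent, the other arc is not induced because of the chord $xy$ (when $n\ge4$), so $\hull_{\mc}(\{x,y\})=\{x,y\}$. The plan is to compute nimbers by induction on $n$ from these descriptions, using Theorem~\ref{thm-sprague} for sums.

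\textbf{Step 1: $P'_n$.} Let the selected endpoint be $v_1$. Selecting $v_j$ ($2\le j\le n$) fills $v_1,\dots,v_j$, which leaves a position isomorphic to $P'_{n-j+1}$ with $v_j$ as its selected endpoint. The moves from $P'_n$ therefore lead exactly to $P'_1,\dots,P'_{n-1}$. Since $P'_1$ has no move, $\nim_{\mc}(P'_1)=0$. By induction, $\nim_{\mc}(P'_n)=\mex\{0,\dots,n-2\}=n-1$.

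\textbf{Step 2: $P_n$.} The first move at $v_j$ splits the path into $P'_j$ (vertices $v_1,\dots,v_j$) and $P'_{n-j+1}$ (vertices $v_j,\dots,v_n$), with $v_j$ as the selected endpoint of both. These two components evolve independently: any later move on one side extends the hull only on that side, and the hull of the whole set is the union of the two hulls. So the resulting position is the sum of the two, with nimber $(j-1)\xor(n-j)$. This xor is $0$ exactly when $j-1=n-j$, which is possible only for odd $n$ with $j=(n+1)/2$. Hence $\nim_{\mc}(P_n)\ge1$ iff $n$ is odd.

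To show the value is exactly $1$ for odd $n$, I need $1$ never to appear among the options. That requires $(j-1)\xor(n-j)\ne1$. For odd $n$, the sum $(j-1)+(n-j)=n-1$ is even, so the two numbers have the same parity and their xor is even, hence never $1$. For even $n$ the only candidate value is $0$, which never appears, so the nimber is $0$. This parity argument is the only delicate point.

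\textbf{Step 3: cycles.} For $C_3$, any two vertices are adjacent and no induced path has length $\ge2$. The hull of any set is therefore the set itself, so $C_3$ plays like three independent tokens and its nimber is $1$.

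For $n\ge4$, after the first move at $x$ every other vertex is playable. I will show that every position reached after the first move has positive nimber, so the first move always leads to a nimber-$0$-free situation and $\nim_{\mc}(C_n)=0$. From the position $\{x\}$, the second player can pick any vertex non-adjacent to $x$, which exists since $n\ge4$. This makes the hull the whole cycle and ends the game, so the option value $0$ is reachable and $\{x\}$ has positive nimber. Consequently every option of the initial position has positive nimber, so $0$ is excluded from none of them, and $\nim_{\mc}(C_n)=\mex(\text{positive values})=0$.
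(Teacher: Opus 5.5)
Your proof is correct and follows essentially the same route as the paper. It uses the recurrence $\nim_{\mc}(P'_n)=\mex\{\nim_{\mc}(P'_1),\ldots,\nim_{\mc}(P'_{n-1})\}$, the split of $P_n$ after a first move at $v_j$ into $P'_j+P'_{n-j+1}$ with the parity argument on $(j-1)\xor(n-j)$, and, for $C_n$ with $n\ge4$, the reply at a vertex non-adjacent to the first one. You are more explicit than the paper about the hull structure in paths and cycles and about why the two sides are independent. One wording fix: the phrase ``$0$ is excluded from none of them'' in your last paragraph should simply say that $0$ is not among the option values.
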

 
\begin{proof}
Notice that $\nim_{\mc}(P'_1)=0$, since there is no move. So, let $v_1,\ldots v_n$ be the vertices of the path $P'_n$, assuming that $v_n$ is already selected in the start of the game. Considering all possible moves for the 1st player, in the vertices $v_1,\ldots,v_{n-1}$, we have
\[
\nim_{\mc}(P'_n)\ =\ \mex\{\nim_{\mc}(P'_1),\ldots,\nim_{\mc}(P'_{n-1})\}.
\]
Then, by induction, $\nim_{\mc}(P'_n)=\mex\{0,\ldots,n-2\}=n-1$.
Also note that, by playing in the vertex $v_{i+1}$ with $i=0,\ldots,n-1$,
\[
\nim_{\mc}(P_n)\ =\ \mex\{\nim_{\mc}(P'_{i+1})\xor\nim_{\mc}(P'_{j+1}):\ i+j=n-1,\ \ i,j\ge0\}.
\]
Then
\[
\nim_{\mc}(P_n)\ =\ \mex\{i\xor j:\ i+j=n-1,\ \ i,j\ge0\}.
\]
If $n$ is even, then $i+j$ is odd and, consequently, $i\ne j$ and $i\xor j\ne 0$, implying that $\nim_{\mc}(P_n)=0$.
If $n$ is odd, then $i+j$ is even (have the same parity) and consequently $i\xor j=0$ when $i=j=(n+1)/2$ and $i\xor j$ is never equal to $1$, implying that $\nim_{\mc}(P_n)=1$.
 
Finally, it is easy to check that $\nim_{\mc}(C_3)=1$ and that $\nim_{\mc}(C_n)=0$ with $n\ge4$, since the first player always loses (the opponent just select a non-adjacent vertex of the 1st selected vertex).
\end{proof}

Now we consider the $\ell_k$-convexities. Let $P''_n$ be the game considering that both endpoints are already selected in the start of the game.
 
\begin{theorem}\label{thm-nimber-Pn}
Let $k\ge2$.
The nimbers $\nimp{n}$, $\nimpb{n}$ and $\nimpc{n}$ in the game $\CHG_{\ell_k}$ can be determined in quadratic time by the following recurrences: $\nimp{1}=1$, $\nimp{2}=0$, $\nimpb{1}=0$, $\nimpb{2}=1$ and $\nimpc{n}=0$ for $n=1,\ldots,k+1$. Moreover,
\[
\nimp{n}\ =\ \mex\Big\{\nimpb{i} \xor\nimpb{n-i+1}\ :\ i=1\ldots\lceil n/2\rceil\Big\},\ \ \ \ \mbox{for $n\ge3$};
\]
\[
\nimpb{n}\ =\ \mex\Big\{\nimpc{i} \xor\nimpb{n-i+1}\ :\ i=2\ldots n\Big\},\ \ \ \ \mbox{for $n\ge3$};
\]
\[
\nimpc{n}\ =\ \mex\Big\{\nimpc{i} \xor\nimpc{n-i+1}\ :\ i=2\ldots\lceil n/2\rceil\Big\},\ \ \ \ \mbox{for $n\ge k+2$}.
\]
\end{theorem}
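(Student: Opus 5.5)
The plan is to understand the structure of the $\ell_k$-hull in a path, then apply Theorem~\ref{thm-sprague} to the decomposition into independent segments.

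\textbf{Hull of a selected set in a path.} In a path the only induced path between two vertices is the subpath joining them. So if two selected vertices $v_a,v_b$ have $b-a\le k$ (the subpath has at most $k$ edges), every vertex between them enters the interval. If $b-a>k$, the interval adds nothing. Consequently $\hull_{\ell_k}(S)$ is obtained by filling every gap between consecutive selected vertices whose distance is at most $k$. Filling never creates new short gaps: it only adds vertices inside a gap, and the resulting consecutive hull vertices are adjacent. Vertices outside $[\min S,\max S]$ are never in the hull.

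\textbf{Decomposition into the three game types.} A position is determined by the segments between consecutive hull vertices together with the two end segments. An interior segment whose two boundary selected vertices are at distance $d$ (with $d>k$, so $d+1$ vertices counting both ends) is exactly the game $P''_{d+1}$. An end segment is a $P'$ game. These segments are independent: a move inside one segment splits only that segment, and the hull inside it depends only on its two ends. So the nimber of a position is the xor of the segment nimbers.

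Base cases:
\begin{itemize}[nosep]
\item $P''_n$ with $n\le k+1$ has ends at distance $\le k$, so all vertices are already in the hull. It has no moves and nimber $0$.
\item $P'_1$ has no move, so its nimber is $0$.
\item In $P'_2$ the unique move ends the game, so its nimber is $1$.
\item For $P_1$ the single move ends the game, giving $1$.
\item In $P_2$ each move leaves $P'_2$, which has nimber $1$, so $\nimp{2}=0$.
\end{itemize}

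\textbf{Recurrences.} Selecting vertex $v_i$ in $P_n$ yields $P'_i+P'_{n-i+1}$, with the selected vertex shared by both. By symmetry, $i\le\lceil n/2\rceil$ suffices. In $P'_n$ with $v_n$ selected, selecting $v_{i'}$ gives an end part $P'$ and an interior part $P''$ whose ends are $v_{i'}$ and $v_n$. Reindexing as in the statement, $i$ counts the vertices of the $P''$ part and ranges over $2,\dots,n$. When $i\le k+1$ the $P''$ part contributes $0$, correctly modelling that the segment is absorbed into the hull. The case $i=n$ is playing $v_1$, which gives $P''_n+P'_1$.

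For $P''_n$ with $n\ge k+2$, all interior vertices are playable. Selecting one splits the position into $P''_i+P''_{n-i+1}$ with $2\le i\le n-1$, and symmetry reduces this to $i\le\lceil n/2\rceil$. The mex formula then follows from Theorem~\ref{thm-sprague}.

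\textbf{Complexity.} Compute the three tables for $n=1,\dots,N$ in increasing order. Each entry takes $O(n)$ xors plus an $O(n)$ mex computation, since the mex of $O(n)$ values is at most $O(n)$ and can be found with a boolean array. This gives $O(N^2)$ in total.

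The main obstacle is making rigorous the claim that, after any sequence of moves, the playable vertices and the future evolution of the game decompose exactly into independent $P'$ and $P''$ segments. This includes checking that a vertex which is an endpoint of a short gap never becomes selectable, and that hull closure is local to each segment. Both follow from the gap-filling characterization above, so I would establish that characterization carefully first, as a lemma.
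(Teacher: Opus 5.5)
Your proposal is correct and follows essentially the same route as the paper: a move splits the position into independent $P'$/$P''$ components, the Sprague--Grundy theorem gives the xor, symmetry restricts the index range, and dynamic programming over the three tables gives $O(n^2)$ time. The only difference is that you state and justify the gap-filling description of the $\ell_k$-hull in a path, which rigorously supports the independence of segments that the paper takes for granted.
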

 
\begin{proof}
Clearly, $\nimpb{1}=0$ and $\nimpc{n}=0$ if $n\in\{2,\ldots,k+1\}$, since there is no move. Moreover, $\nimp{1}=\mex\{0\}=1$, $\nimp{2}=\mex\{1\}=0$ and $\nimpb{2}=\mex\{0\}=1$.
So, let $v_1,\ldots v_n$ be the vertices of the path.
 
First consider the game in $P_n$ with $n\ge3$.
By symmetry, we may consider that the first move is in a vertex $v_i$ for $i\in\{1,\ldots,\lceil n/2\rceil\}$.
This move produces two independent games, $P'_{i}$ and $P'_{n-i+1}$, and we are done by the Sprague-Grundy Theorem \ref{thm-sprague}.
 
Now consider the game in $P'_n$ with $n\ge3$ and $v_1$ already selected. Then the possible moves are in a vertex $v_i$ among $v_2,\ldots,v_n$. This move produces two independent games, $P''_i$ and $P'_{n-i+1}$, and we are done from Sprague-Grundy Theorem \ref{thm-sprague}.
 
Finally consider the game in $P''_n$ with $n\ge k+2$ and $v_1$ and $v_n$ already selected. Then the possible moves are in the vertices $v_2,\ldots,v_{n-1}$. By symmetry, we may assume that the move is in a vertex $v_i$ with $i\in\{2,\ldots,\lceil n/2\rceil\}$. This move produces two independent games, $P''_i$ and $P''_{n-i+1}$, and we are done from Sprague-Grundy Theorem \ref{thm-sprague}.
 
With a dynamic programming algorithm, it is possible to determine $\nimp{n}$, $\nimpb{n}$ and $\nimpc{n}$ in time $O(n^2)$.
\end{proof}
 
We can also determine the winner for any cycle $C_n$.
 
\begin{corollary}\label{cor-nimber-Cn}
Let $k\ge2$.
The nimber $\nim(C_n)$ of the cycle $C_n$ with $n\ge3$ vertices in the normal variant of the game $\CHG_{\ell_k}$ is $1$ for $n=3$, is $0$ for $4\le n\le2k$, and is $\mex\{\nimpc{n+1}\}$, otherwise. Therefore, its value is either $0$ or $1$, and can be determined in quadratic time.
\end{corollary}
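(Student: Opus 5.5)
The plan is to analyze the first move on $C_n$ directly. After it, reduce the remaining position to a path game already handled by Theorem \ref{thm-nimber-Pn}. By vertex-transitivity of $C_n$, every first move is equivalent. So $\nim(C_n)=\mex\{N\}$, where $N$ is the nimber of the single position reached after one vertex $v$ is selected. The conclusion that the value lies in $\{0,1\}$ is then automatic, since a mex of a singleton set is $0$ or $1$. The quadratic running time follows from Theorem \ref{thm-nimber-Pn}.

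First I treat $n=3$. Selecting one vertex of the triangle makes the hull just that vertex, because $\ell_k$-paths between a single vertex and itself add nothing. The second player then selects another vertex. The remaining vertex is not on any induced path between the two selected adjacent vertices, since the only induced path between adjacent vertices is the edge. So a third move exists and the game lasts exactly three moves. Equivalently, the nimber after one move is $\mex\{0\}=1$ (every second move leads to a position with one move left), so $\nim(C_3)=\mex\{1\}=0$?

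This needs care: the length is three, so the first player wins and the nimber must be $1$. I will verify this by direct recursion. The position with all three vertices selected has nimber $0$. The position with two selected has nimber $\mex\{0\}=1$. The position with one selected has nimber $\mex\{1\}=0$. The empty position has nimber $\mex\{0\}=1$. So $\nim(C_3)=1$, as claimed.

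For $n\ge4$, after the first move at $v$, I view the position as the path $P''_{n+1}$ with both endpoints being copies of $v$. The key check is that the hull of any selected set $S\ni v$ in $C_n$ corresponds exactly to the hull in $P''_{n+1}$. Here the main obstacle is $n\ge4$, where the induced paths in $C_n$ between two selected vertices $x,y$ are exactly the two arcs, except that an arc of length $n-1$ between adjacent vertices is not induced. An arc contributes only if its length is at most $k$. Every path between consecutive selected vertices in the cyclic order is an arc of the path obtained by cutting at $v$, including the arc through $v$, which becomes the two end segments of $P''_{n+1}$. Arcs containing other selected vertices add nothing new, since their interior is already covered by the consecutive arcs. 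So the hull is the union of the consecutive gaps of length at most $k$, each closing independently, exactly as in $P''_{n+1}$ with the decomposition used in Theorem \ref{thm-nimber-Pn}. The only subtlety is the first move alone: with $|S|=1$, nothing is generated in either model, and $P''_{n+1}$ with $n+1\ge k+2$ also has moves available. Hence $N=\nimpc{n+1}$ whenever $n\ge 2k+1$.

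For $4\le n\le 2k$, I need $N=1$ rather than $\nimpc{n+1}$. This is because a gap of length $\le k$ in $P''_{n+1}$ closes, but in the cycle with only $v$ selected nothing closes yet. So the position after one move is not $P''_{n+1}$ itself but "choose $u\ne v$". After the second player selects a vertex $u$ non-adjacent to $v$, both arcs have length at most $k$ because $n\le 2k$. Both arcs are induced, since $u,v$ are non-adjacent, so the hull becomes all of $V$. Thus the position after one move has an option of nimber $0$.

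I will then argue $N\neq 0$ directly: the second player wins, so $N\ge1$ and $\nim(C_n)=\mex\{N\}=0$. I only need $N\ne0$, which follows because the move to $u$ non-adjacent to $v$ reaches a terminal position. Such a $u$ exists since $n\ge4$. This gives the stated trichotomy.
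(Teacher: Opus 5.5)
Your route is the same as the paper's. By symmetry $\nim(C_n)=\mex\{N\}$, where $N$ is the nimber after the first move. Then $C_3$ is settled because all three vertices get played, the range $4\le n\le 2k$ is settled by an immediately winning reply, and $n\ge 2k+1$ is settled by identifying the position after the first move with $P''_{n+1}$. Your justification of that identification is more detailed than the paper's and is correct: gaps between cyclically consecutive selected vertices close independently, and the non-induced long arc between adjacent vertices is harmless because its length $n-1$ exceeds $k$.

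The step that fails as written is in the case $4\le n\le 2k$. You assert that after a reply at \emph{any} vertex $u$ non-adjacent to $v$, both arcs have length at most $k$ ``because $n\le 2k$''. That is false in general. For $k=3$, $n=6$ and $u$ at distance $2$ from $v$, the arcs have lengths $2$ and $4>k$, so only the short arc enters the hull and the position is not terminal.

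The reply must be a vertex farthest from $v$, at distance $\lfloor n/2\rfloor$. Then $\lfloor n/2\rfloor\ge 2$ (as $n\ge4$) makes both arcs induced, and $\lceil n/2\rceil\le k$ (as $n\le 2k$) makes both of them paths with at most $k$ edges, so the hull is all of $V(C_n)$. This is exactly the paper's choice (``Bob can select the farthest vertex''). With it, your conclusion $N\neq 0$, and hence $\nim(C_n)=0$, goes through.

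Two smaller slips should be removed. First, your remark that you ``need $N=1$'' is wrong: for $k=2$, $n=4$, a reply adjacent to $v$ leads to a position of nimber $1$, so $N=\mex\{0,1\}=2$. Only $N\neq0$ is needed, which is what you actually use. Second, the aborted computation ending in ``$\nim(C_3)=\mex\{1\}=0$?'' should be deleted, keeping only the correct recursion that yields $1$.
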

 
\begin{proof}
Clearly $\nim_{\ell_k}(C_3)=1$ for any $k\ge2$, since all the three vertices will be selected during the game.
Moreover, if $4\le n\le2k$, then $\nim_{\ell_k}(C_n)=0$, since, after Alice's first move, Bob can select the farthest vertex, winning the game immediately.
So let $n>2k$. Notice that the first move of Alice generates the game in $P''_{n+1}$, by duplicating the chosen vertex and considering them as the endpoints of $P''_{n+1}$ (recall that $n>2k$ and then all internal vertices of $P''_{n+1}$ are playable, just as in $C_n$ after Alice's first move).
From the Sprague--Grundy Theorem \ref{thm-sprague}, $\nim(C_n)=\mex\{\nim(P''_{n+1})\}$.
From Theorem \ref{thm-nimber-Pn}, it is possible to determine $\nim_{\ell_k}(C_n)$ in time $O(n^2)$.
\end{proof}

\section{Nimbers of the hull game in convexity $\ell_k$}
 
The previous section showed how to compute the nimbers $\nimp{n}$ and $\nim_{\ell_k}(C_n)$ for every $k,n\ge2$. In this section, we show some basic facts regarding $\nimp{n}$, $\nimpb{n}$ and $\nimpc{n}$.
 
Theorem \ref{thm-nimber-Pn} shows the values of $\nimpc{n}=0$ for $n=1,\ldots,k+1$.
The next lemma shows the values for $n=k+2,\ldots,3k+2$.
 
\begin{lemma}\label{lem:phaseA}
Let $k\ge2$ and $1\le n\le 3k+2$. Then
\[
\nimpc{n}=
\begin{cases}
0, &\mbox{if $n<k+1$},\\
n\%(k+1), &\mbox{if $n\ge k+1$},
\end{cases}
\]
where $\%$ is the remainder operator.
\end{lemma}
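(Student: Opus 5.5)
The plan is a strong induction on $n$ using the recurrence for $\nimpc{n}$ from Theorem~\ref{thm-nimber-Pn}. Write $f(m)=\nimpc{m}$. For $n\le k+1$ the claim is exactly the base case of that theorem, and $(k+1)\%(k+1)=0$ agrees with it. For $n\ge k+2$, a move splits $P''_n$ into $P''_a$ and $P''_b$ with $a+b=n+1$, $2\le a\le b$. So
\[
f(n)=\mex\{f(a)\xor f(b):\ a+b=n+1,\ 2\le a\le b\}.
\]
By induction all $a,b\le n-1$ obey the formula. Concretely:
\begin{itemize}
\item $f(m)=0$ for $m\le k+1$;
\item $f(m)=m-(k+1)$ for $k+1\le m\le 2k+1$;
\item $f(m)=m-2(k+1)$ for $2k+2\le m\le 3k+2$.
\end{itemize}
I split into the two ranges of $n$ and show directly which values are attained and which are missing.

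\emph{Range $k+2\le n\le 2k+1$.} Here $a\le (n+1)/2\le k+1$, so $f(a)=0$ and the option value is just $f(b)$ with $b\in[\lceil (n+1)/2\rceil,\,n-1]$. Since $(n+1)/2\le k+1$, every $b\in[k+1,n-1]$ is allowed, so the values $b-(k+1)$ cover exactly $\{0,\ldots,n-k-2\}$. Every other $b$ gives $0$. Hence $f(n)=n-k-1=n\%(k+1)$.

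\emph{Range $2k+2\le n\le 3k+2$.} Let $c=n-2k-2$ be the target value.

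First, every $t\in\{0,\ldots,c-1\}$ is attained. Take $b=2k+2+t$ and $a=n-2k-1-t$. Then $2\le a\le k+1\le b$, so $f(a)\xor f(b)=0\xor t=t$.

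Second, $c$ is never attained, which needs a case split on $a$.
\begin{itemize}
\item If $a\le k+1$, then $f(a)=0$ and $b=n+1-a\ge k+2$.
  \begin{itemize}
  \item If $b\le 2k+1$, the value is $n-a-k$. This equals $c$ only if $a=k+2$, which is impossible.
  \item If $b\ge 2k+2$, the value is $n-1-a-2k$. This equals $c$ only if $a=1$, which is impossible.
  \end{itemize}
\item If $a\ge k+2$, then $b=n+1-a\le 2k+1$, so both $a$ and $b$ lie in the middle band. Put $x=a-k-1\ge1$ and $y=b-k-1\ge1$; the value is $x\xor y$ with $x+y=c+1$. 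Since $x\xor y\equiv x+y \pmod 2$, the value has parity opposite to $c=x+y-1$, so it cannot equal $c$.
\end{itemize}
Thus $f(n)=c=n\%(k+1)$.

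The main obstacle is this exclusion step in the second range, specifically the pairs with both parts in the middle band. There the parity argument for xor is the key; everything else is bookkeeping of which band $a$ and $b$ fall into. The case $a\le k+1$ could also be organised by a monotonicity argument, but checking the two sub-bands directly seems simplest.
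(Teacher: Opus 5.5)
Your proof is correct and follows essentially the same route as the paper. Both use strong induction on the mex recurrence of Theorem~\ref{thm-nimber-Pn}: pairs with one part of size at most $k+1$ supply all the smaller values, and the parity identity $x\xor y\equiv x+y \pmod 2$ excludes the target for pairs with both parts in the band $[k+2,2k+1]$. The differences are cosmetic: you fold $n=2k+2$ into the last range, and you rule out $c$ on the trivial-side pairs by direct computation rather than through the paper's set $\{0,\ldots,k\}\setminus\{r\}$.
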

 
\begin{proof}
From Theorem \ref{thm-nimber-Pn}, $\nimpc{n}=0$ for $n\leq k+1$. So, let $n>k+1$ and let $r=n\%(k+1)$.
The $\mex$ function of the recurrence of $\nimpc{n}$ in Theorem \ref{thm-nimber-Pn} includes all terms
$\nimpc{i}\xor\nimpc{j}$ such that $i+j=n+1$ and $i\ge j\ge2$.
 
First assume that $n\in\{k+2,\ldots,2k+1\}$.
Since $i+j=n+1\leq 2k+1$ and $i\ge j$, then $2\le j\le k+1$ and $\nimpc{j}=0$. Then the $\mex$ function exactly contains the $k$ values $\nimpc{n-1},\ldots,\nimpc{n-k}$ and, by induction, $\nimpc{n}=\mex(\{0,\ldots,r-1\})=r$.
 
Now assume that $n=2k+2$. Then,
\[
\nimpc{n}\ =\ \nimpc{2k+2}\ =\ \mex\{\nimpc{k+2},\ldots,\nimpc{2k+1}\}\ =\ \mex\{1,\ldots,k\}=0.
\]
 
Finally, assume that $n\in\{2k+3,\ldots,3k+2\}$.
As before, the $\mex$ function for $\nimpc{n}$ contains the $k$ values $\nimpc{n-1},\ldots,\nimpc{n-k}$, which are by induction $\{0,\ldots,k\}\setminus\{r\}$.
We have to show that $r$ does not appear in the $\mex$ function. It also contains the values $\nimpc{i}\xor\nimpc{j}$ with $i+j=n+1$ and $i\ge j\ge k+2$, which are by induction $r_i\xor r_j$, where $r_i=i\%(k+1)$ and $r_j=j\%(k+1)$.
Since $j\ge k+2$, then $i\le 2k+1$.
Therefore $r_i\xor r_j$ has the same parity of $i\xor j$, which is the same parity of $n+1$ and also of $(n+1)\%(k+1)$, because $n\in\{2k+3,\ldots,3k+2\}$. Consequently $r_i\xor r_j$ cannot be $r=n\%(k+1)$, and we are done.
\end{proof}
 
\begin{theorem}\label{thm:shift}
For every $k\ge2$ and every $n\ge1$,
\[
\nimpb{n} = \nimpc{n+k}.
\]
\end{theorem}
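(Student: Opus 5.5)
Since both sides are defined by the recurrences of Theorem \ref{thm-nimber-Pn}, I will prove $\nimpb{n}=\nimpc{n+k}$ by strong induction on $n$, rewriting the recurrence for $\nimpc{n+k}$ so that it matches the recurrence for $\nimpb{n}$ term by term.

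\textbf{Base cases.} For $n=1$ we have $\nimpb{1}=0$ and $\nimpc{k+1}=0$. For $n=2$ we have $\nimpb{2}=1$, and Lemma \ref{lem:phaseA} gives $\nimpc{k+2}=(k+2)\%(k+1)=1$.

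\textbf{Rewriting the recurrence for $P''$.} Let $n\ge3$, so $n+k\ge k+2$ and the $P''$ recurrence applies. The $\mex$ in $\nimpc{n+k}$ ranges over all values $\nimpc{i}\xor\nimpc{j}$ with $i+j=n+k+1$ and $i,j\ge2$. Dropping the symmetry restriction $i\le\lceil n/2\rceil$ changes nothing, since swapping $i$ and $j$ gives the same value. In each such pair, either $j\ge k+1$, or $j\le k$. In the second case $i=n+k+1-j\ge n+1>k+1$, so writing $i=m+k$ places it in the first case with the roles of $i$ and $j$ swapped. So every pair can be written, after swapping if necessary, with $j=m+k$ for some $m\ge1$ and $i=n+1-m$. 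The condition $i\ge2$ then means $m\le n-1$. Hence the $\mex$ set for $\nimpc{n+k}$ is exactly
\[
\{\nimpc{n+1-m}\xor\nimpc{m+k}\ :\ 1\le m\le n-1\}.
\]

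\textbf{Matching with the recurrence for $P'$.} The recurrence for $\nimpb{n}$, with $i=2,\ldots,n$ and substituting $m=n-i+1\in\{1,\ldots,n-1\}$, gives the set
\[
\{\nimpc{n+1-m}\xor\nimpb{m}\ :\ 1\le m\le n-1\}.
\]
By the induction hypothesis, $\nimpb{m}=\nimpc{m+k}$ for every $m<n$. So the two index sets are identical, the two $\mex$ sets coincide, and therefore $\nimpb{n}=\nimpc{n+k}$.

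\textbf{Main point of care.} The only delicate step is checking that the reindexing of the $P''$ recurrence is a bijection onto $m\in\{1,\ldots,n-1\}$. No pair should be lost: a pair with both $i,j\le k$ would need $n+k+1\le 2k$, which is impossible since $n\ge 3$. Duplicated pairs do no harm, because the $\mex$ depends only on the set of values. The boundary cases $m=1$ (where $j=k+1$ and $\nimpc{k+1}=0=\nimpb{1}$) and $i=2$ also have to be checked, and they fit the same pattern.
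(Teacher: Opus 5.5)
Your overall strategy (strong induction plus reindexing the $P''$ recurrence so it matches the $P'$ recurrence) is right, but the reindexing step fails as written, and the check in your last paragraph contains an arithmetic error. The inequality $i=n+k+1-j\ge n+1>k+1$ requires $n>k$; to get $i\ge k+1$ you need at least $n\ge k$. Your claim that $n+k+1\le 2k$ is impossible for $n\ge3$ is also false: it is equivalent to $n\le k-1$, which is compatible with $n\ge3$ as soon as $k\ge4$.

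So for $3\le n\le k-1$, the recurrence for $\nimpc{n+k}$ contains pairs $(i,j)$ with $i,j\in\{n+1,\ldots,k\}$. These cannot be written as $(n+1-m,\,m+k)$ with $m\ge1$. For instance, with $k=4$ and $n=3$, the pair $(4,4)$ occurs in the recurrence of $\nimpc{7}$. Hence your statement that the two index sets are identical is false in this range. (Also, the symmetry restriction you drop should read $i\le\lceil (n+k)/2\rceil$, not $i\le\lceil n/2\rceil$.)

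The conclusion still holds with one extra observation. Every stray pair has both indices in $\{2,\ldots,k\}$, so it contributes $\nimpc{i}\xor\nimpc{j}=0\xor0=0$. The value $0$ is already in your $m$-set via $m=1$, because $\nimpc{n}\xor\nimpc{k+1}=0\xor0$ whenever $n\le k+1$. Thus the sets of \emph{values} coincide even though the index sets do not, and that is all the $\mex$ needs.

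With this patch, your proof is a uniform version of the paper's. The paper sidesteps the issue by treating $3\le n\le k+1$ separately. There it computes $\nimpb{n}=\mex\{\nimpb{1},\ldots,\nimpb{n-1}\}=n-1$ by induction and $\nimpc{n+k}=n-1$ by Lemma~\ref{lem:phaseA}. It uses exactly your reindexing only for $n\ge k+2$, where $\lceil(n+k)/2\rceil<n$ guarantees that every pair has an index in $\{2,\ldots,n\}$. Your patched version has the small advantage of not needing Lemma~\ref{lem:phaseA} beyond the base case $n=2$.
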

 
\begin{proof}
We argue by induction on $n$. First notice that $\nimpb{1}=\nimpc{k+1}=0$ and $\nimpb{2}=\nimpc{k+2}=1$ from Theorems \ref{thm-nimber-Pn} and \ref{lem:phaseA}. First assume that $3\le n\le k+1$.
Since, $\nimpc{i}=0$ for every $i=2,\ldots,k+1$, then, by induction and by Theorem \ref{thm-nimber-Pn},
\[
\nimpb{n}\ =\ \mex\{\nimpb{1},\ldots,\nimpb{n-1}\}\ =\ \mex\{0,\ldots,n-2\}\ =\ n-1,
\]
and consequently $\nimpb{n}=(n-1)\%(k+1)=(n+k)\%(k+1)=\nimpc{n+k}$, since $n\leq k+1$.
 
Finally assume that $n\ge k+2$.
From Theorem \ref{thm-nimber-Pn}, $\nimpb{n}=\mex_{i=2}^{n}\{\nimpc{i}\ \xor\ \nimpb{n-i+1}\}$. Since $\nimpb{n-i+1}=\nimpc{n-i+k+1}$ for $i=2,\ldots,n$ by induction, then
\[
\nimpb{n}\ =\ \mex_{i=2}^{n}\{\nimpc{i}\xor\nimpc{n-i+k+1}\}.
\]
Since $n>\lceil(n+k)/2\rceil$, we may write the recurrence of Theorem \ref{thm-nimber-Pn} for $\nimpc{n+k}$ as
\[
\nimpc{n+k}\ =\ \mex_{i=2}^{n}\{\nimpc{i}\xor\nimpc{n-i+k+1}\},
\]
and we are done.
\end{proof}
 
\section{Hull game in convexity $\ell_2$ and the game Couples-are-Forever}\label{sec:forever}
 
In this section, we show that, for $k=2$, the game $\CHG_{\ell_2}$ in paths $P_n$ is closely related to the classical game \emph{Couples-are-Forever} of J. H. Conway \cite{caines99}.
In this game, there is one heap with coins and a move consists in selecting a heap with more than two coins and breaking it into two smaller heaps.
At the end, all heaps have size 1 or 2.
Let $CF_n$ be the nimber of the game Couples-are-Forever starting with one heap with $n$ coins.
Then $CF_1=CF_2=0$ and, by the Sprague-Grundy Theorem \ref{thm-sprague},
\[
CF_n=\mex\{CF_k\xor CF_{n-k}\ :\ k=1\ldots\lfloor n/2\rfloor\}.
\]
It is still an open problem if the nimber sequence of \emph{Couples-are-Forever} is periodic or not \cite{caines99} and Alice only loses to 14 values of $n$ up to $10$ million:
\[
1,\ 2,\ 5,\ 13,\ 21,\ 31,\ 47,\ 73,\ 99,\ 125,\ 151,\ 177,\ 315,\ 409
\]
 
We first prove that, starting from a certain position, the nimber sequence of \emph{Couples-are-Forever} is identical to the full nimber sequence of the game $\CHG_{\ell_2}$ on paths $P'_n$ and $P''_n$ (considering that one or two endpoints have already been selected at the start of the game, respectively).
 
\begin{theorem}
Let $n\ge1$. Then $\nimptb{n}=CF_{n+1}$ and $\nimptc{n+1}=CF_{n}$.
\end{theorem}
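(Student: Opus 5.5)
With $k=2$, Theorem \ref{thm:shift} gives $\nimptb{n}=\nimptc{n+2}$ for every $n\ge1$. So $\nimptb{n}=CF_{n+1}$ is just the second identity at index $n+1$, and it suffices to prove $\nimptc{n+1}=CF_n$ for all $n\ge1$. I will prove this by strong induction on $n$.

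For the base cases, Theorem \ref{thm-nimber-Pn} with $k=2$ gives $\nimptc{2}=\nimptc{3}=0$, which matches $CF_1=CF_2=0$. I will also set aside $\nimptc{1}=0$; it never appears in the recurrences for $n\ge2$.

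For $n\ge3$ we have $n+1\ge4=k+2$, so the recurrence of Theorem \ref{thm-nimber-Pn} applies:
\[
\nimptc{n+1}=\mex\Big\{\nimptc{i}\xor\nimptc{n-i+2}\ :\ i=2\ldots\lceil (n+1)/2\rceil\Big\}.
\]
Reindex with $j=i-1$, so $j$ runs over $1,\ldots,\lceil (n+1)/2\rceil-1$. A parity check shows $\lceil (n+1)/2\rceil-1=\lfloor n/2\rfloor$. The term becomes $\nimptc{j+1}\xor\nimptc{(n-j)+1}$, and both indices $j$ and $n-j$ lie in $\{1,\ldots,n-1\}$. By the induction hypothesis the term equals $CF_j\xor CF_{n-j}$. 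The resulting set is exactly the one whose $\mex$ defines $CF_n$, so $\nimptc{n+1}=CF_n$.

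The only delicate points are checking that the range of indices matches exactly, including the symmetric middle term when $n$ is even, and confirming that the recurrence for $P''$ is valid from index $4$ on so that the base cases cover indices $2$ and $3$. Neither is a real obstacle. The work is index bookkeeping plus the appeal to Theorem \ref{thm:shift}.
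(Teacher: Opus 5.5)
Your proof is correct and follows the paper's argument. Both reindex the $P''_{n+1}$ recurrence of Theorem~\ref{thm-nimber-Pn} to match the Couples-are-Forever recurrence, conclude by induction, and then transfer the result to $P'_n$ via Theorem~\ref{thm:shift} with $k=2$. Your index bookkeeping is slightly cleaner than the paper's: you correctly start the recurrence at $n=3$ (index $4=k+2$) and use the upper limit $\lceil (n+1)/2\rceil-1=\lfloor n/2\rfloor$, whereas the paper's display states $n\ge4$ and $\lceil n/2\rceil-1$.
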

 
\begin{proof}
From Theorem \ref{thm-nimber-Pn}, $\nim_{\ell_2}(P''_2)=CF_1=0$ and $\nim_{\ell_2}(P''_3)=CF_2=0$. Moreover, for $n\ge 4$,
\[
\nimptc{n+1}\ =\ \mex\Big\{\nimptc{i+1} \xor\nimptc{n-i+1}\ :\ i=1\ldots\lceil n/2\rceil-1\Big\}.
\]
Then, by induction,
\[
\nimptc{n+1}\ =\ \mex\Big\{CF_i \xor CF_{n-i}\ :\ i=1\ldots\lfloor n/2\rfloor\Big\}\ =\ CF_{n}.
\]
Finally, from Theorem \ref{thm:shift}, $\nimptb{n}=\nimptc{n+2}=CF_{n+1}$.
\end{proof}
 
Moreover, after extensive computational experiments for $n$ up to $10$ million, we observed that the nimber sequences of the original game $\CHG_{\ell_2}$ (on paths $P_n$) and \emph{Couples-are-Forever} are such that $\nimpt{n}=CF_{n+3}$ with probability $99.97\%$ (see Figure \ref{fig.grafico}).
 
Summarizing, with $n\ge1$ up to 10 million,
\[
\nimptc{n}=0\ \iff\ n\in\{1,2,3,6,14,22,32,48,74,100,126,152,178,316,410\}
\]
\[
\nimptb{n}=0\ \iff\ n\in\{1,4,12,20,30,46,72,98,124,150,176,314,408\}
\]
\[
\nimpt{n}=0\ \iff\ n\in\{2,10,18,28,44,70,96,122,148,174,312,406\}
\]
 
In other words, Alice loses $\CHG_{\ell_2}$ on the path $P_n$ for only 12 values of $n$ up to $10$ million.
 
\begin{figure}\centering
\begin{tikzpicture}
\begin{axis}[
    width=12cm, height=8cm,
    title={Cumulative divergences between $\nimpt{n}$ and $CF_{n+3}$},
    xlabel={$n$},
    ylabel={},
    xmin=0, xmax=10200000,
    ymin=0, ymax=3000,
    xtick={0,2000000,4000000,6000000,8000000,10000000},
    xticklabels={0,2M,4M,6M,8M,10M},
    ytick={0,1000,2000,3000},
    yticklabels={0,1000,2000,3000},
    scaled x ticks=false,
    grid=both,
    grid style={gray!20},
    axis lines=left,
    axis line style={gray!60},
    tick label style={font=\small, gray!70!black},
    label style={font=\small},
    title style={font=\small\bfseries},
]
\addplot[
    color=gray!70!black,
    line width=1pt,
    mark=*,
    mark size=1.4pt,
    mark options={fill=gray!70!black, draw=none},
] coordinates {
    (10000,    30)
    (500000,   157)
    (2738176,  795)
    (3811328,  1082)
    (4613120,  1304)
    (5292544,  1486)
    (5893120,  1652)
    (6444032,  1808)
    (6956032,  1959)
    (7427584,  2091)
    (7869440,  2215)
    (8289792,  2360)
    (8692224,  2468)
    (9076736,  2579)
    (9441792,  2678)
    (9788416,  2780)
    (10000000, 2833)
};
\end{axis}
\end{tikzpicture}
\caption{\label{fig.grafico}Number of divergences between $\nimpt{n}$ and $CF_{n+3}$. There are 2833 divergences with $n$ up to 10 million. Divergence rate $0.02833\%$ and Agreement rate $99.97\%$.}
\end{figure}

\section{Hull game in convexity $\ell_k$ with $k$ odd}
 
The previous sections showed how to compute the nimbers $\nimp{n}$ and $\nim_{\ell_k}(C_n)$ for every $k,n\ge2$. In this section, we show that the nimber sequences of these games are periodic when $k$ is odd.
We start by proving the periodicity of $\nimpb{n}$ and $\nimpc{n}$.
 
\begin{lemma}\label{lem:k-odd}
Let $k\ge3$ odd and $n\ge1$. Then $\nimpb{n}=(n-1)\%(k+1)$ and
\[
\nimpc{n}=
\begin{cases}
0, &\mbox{if $n<k+1$},\\
n\%(k+1), &\mbox{if $n\ge k+1$},
\end{cases}
\]
where $\%$ is the remainder operator.
\end{lemma}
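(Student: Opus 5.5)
We prove the formula for $\nimpc{n}$ by strong induction on $n$, and then obtain the formula for $\nimpb{n}$ from Theorem~\ref{thm:shift}. That second step is immediate once the first is done: for $n\ge1$ we have $n+k\ge k+1$, so
\[
\nimpb{n}=\nimpc{n+k}=(n+k)\%(k+1)=(n-1)\%(k+1).
\]

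For $\nimpc{n}$, the cases $n\le 3k+2$ are exactly Lemma~\ref{lem:phaseA}; note that lemma holds for every $k\ge2$. So let $n\ge 3k+3$ and set $r=n\%(k+1)$. By Theorem~\ref{thm-nimber-Pn}, the $\mex$ set consists of the values $\nimpc{i}\xor\nimpc{j}$ with $i+j=n+1$ and $i\ge j\ge 2$. We split this set according to whether $j\le k+1$ or $j\ge k+2$.

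\textbf{Pairs with $2\le j\le k+1$.} Here $\nimpc{j}=0$, so these pairs contribute exactly $\nimpc{n-1},\ldots,\nimpc{n-k}$. All these indices are at least $k+1$, so by induction they equal the residues of $n-1,\ldots,n-k$ modulo $k+1$. These $k$ consecutive integers cover every residue class except that of $n$. Hence this part of the set is precisely $\{0,\ldots,k\}\setminus\{r\}$, which already shows $\nimpc{n}\ge r$, i.e.\ every value below $r$ appears.

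\textbf{Pairs with $i\ge j\ge k+2$.} By induction these give $r_i\xor r_j$, where $r_i=i\%(k+1)$ and $r_j=j\%(k+1)$. This is where oddness of $k$ enters. Since $k+1$ is even, reduction modulo $k+1$ preserves parity, so $r_i\equiv i$ and $r_j\equiv j \pmod 2$. Because xor adds the lowest bits mod 2,
\[
r_i\xor r_j\equiv i+j=n+1 \pmod 2,
\]
whereas $r\equiv n\pmod 2$. So none of these values equals $r$.

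Combining the two parts, $r$ is absent from the $\mex$ set while $0,\ldots,r-1$ are present, so $\nimpc{n}=r$, completing the induction.

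The only real point is this parity argument. It replaces the window-restricted argument in Lemma~\ref{lem:phaseA}, where $i\le 2k+1$ was needed to relate residues to parities. For $k$ odd no such restriction is needed, which is why the pattern persists for all $n$. I expect no serious obstacle beyond carefully checking that the two index ranges $j\le k+1$ and $j\ge k+2$ exhaust the recurrence of Theorem~\ref{thm-nimber-Pn} without gaps.
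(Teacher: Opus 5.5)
Your proof is correct and follows the paper's own proof essentially step by step. Both use Lemma~\ref{lem:phaseA} for $n\le 3k+2$. Both get $\{0,\ldots,k\}\setminus\{r\}$ from the pairs with one index in $\{2,\ldots,k+1\}$. Both rule out $r$ among the pairs with both indices at least $k+2$ by the parity argument, which works because $k+1$ is even. Both then obtain $\nimpb{n}$ from Theorem~\ref{thm:shift}.
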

 
\begin{proof}
First consider $P''_n$.
Theorem \ref{thm-nimber-Pn} and Lemma \ref{lem:phaseA} obtained these values for $n$ from $1$ to $3k+2$. So, let $n\ge 3k+3$ and $r=n\%(k+1)$.
Notice that, in the recurrence of $\nimpc{n}$ given by Theorem \ref{thm-nimber-Pn}, the $\mex$ function contains the values $\nimpc{2}\xor\nimpc{n-1},\ldots,\nimpc{k+1}\xor\nimpc{n-k}$. Since the values $\nimpc{2},\ldots,\nimpc{k+1}$ are $0$ from Theorem \ref{thm-nimber-Pn}, then, by induction, the $\mex$ function contains the values $\{0,\ldots,k\}\setminus\{r\}$, that is, except $r=n\%(k+1)$.
 
Moreover, the $\mex$ function also includes the values $\nimpc{i}\xor\nimpc{j}$ such that $i+j=n+1$ and $i,j\ge k+2$. Since $k+1$ is even, then, by induction, $\nimpc{i}$ is even if and only if $i$ is even. Similarly for $j$. Moreover $i\xor j$ is even if and only if $i\%2=j\%2$, which is true if and only if $n$ is odd, since $i+j=n+1$. Therefore, $\nimpc{i}\xor\nimpc{j}$ is even if and only if $n$ is odd. Consequently, $\nimpc{i}\xor\nimpc{j}$ can never be equal to $r=n\%(k+1)$, which has the same parity of $n$, and the $\mex$ function does not include $r$. Since all possible values are of this type, we have from the Sprague-Grundy Theorem \ref{thm-sprague} that $\nimpc{n}=r$.
 
Now consider $P'_n$. From Theorem \ref{thm:shift},
\[
\nimpb{n}\ =\ \nimpc{n+k}\ =\ (n+k)\%(k+1)\ =\ (n-1)\%(k+1).
\]
\end{proof}
 
With this, we have the following nimber sequences for $\nimpb{n}$ and $\nimpc{n}$ with $k$ odd:
 
\[
\nimpb{n}:\ \ \
\underset{\scriptscriptstyle 1}{0},\dots,\underset{\scriptscriptstyle k+1}{k},\;
\underset{\scriptscriptstyle k+2}{0},\underset{\scriptscriptstyle k+3}{1},\dots,\underset{\scriptscriptstyle 2k+1}{k-1},\underset{\scriptscriptstyle 2k+2}{k},\;
\underset{\scriptscriptstyle 2k+3}{0},\dots,\underset{\scriptscriptstyle 3k+2}{k-1},\underset{\scriptscriptstyle 3k+3}{k}.
\]
 
\[
\nimpc{n}:\ \ \
\underset{\scriptscriptstyle 1}{0},\dots,\underset{\scriptscriptstyle k+1}{0},\;
\underset{\scriptscriptstyle k+2}{1},\underset{\scriptscriptstyle k+3}{2},\dots,\underset{\scriptscriptstyle 2k+1}{k},\underset{\scriptscriptstyle 2k+2}{0},\;
\underset{\scriptscriptstyle 2k+3}{1},\dots,\underset{\scriptscriptstyle 3k+2}{k},\underset{\scriptscriptstyle 3k+3}{0}.
\]

Now we determine the winner in $P_n$ when $k$ is odd.
 
\begin{theorem}\label{thm:k-odd}
Let $k\ge3$ odd and $n\ge1$. Then
\[
\nimp{n}\ =\
\begin{cases}
0, &\mbox{if $n$ is even},\\
1, &\mbox{if $n$ is odd}.
\end{cases}
\]
Therefore, Alice wins the game $\CHG_{\ell_k}$ if and only if $n$ is odd.
\end{theorem}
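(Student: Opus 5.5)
We combine the recurrence for $\nimp{n}$ from Theorem~\ref{thm-nimber-Pn} with the closed form $\nimpb{m}=(m-1)\%(k+1)$ from Lemma~\ref{lem:k-odd}. The cases $n=1,2$ are given directly by Theorem~\ref{thm-nimber-Pn} ($\nimp{1}=1$, $\nimp{2}=0$). So fix $n\ge3$. Then
\[
\nimp{n}=\mex\Big\{\big((i-1)\%(k+1)\big)\xor\big((n-i)\%(k+1)\big)\ :\ i=1,\ldots,\lceil n/2\rceil\Big\}.
\]
We write $a=i-1$ and $b=n-i$, so that $a+b=n-1$ with $a,b\ge0$. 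Every such pair with $a\le b$ occurs, and the value set is symmetric in $a$ and $b$.

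The key observation is a parity invariant. Since $k$ is odd, $k+1$ is even, so $x\%(k+1)\equiv x \pmod 2$ for every $x\ge0$. Hence every value $(a\%(k+1))\xor(b\%(k+1))$ has the parity of $a\xor b$. That parity equals the parity of $a+b=n-1$.

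\textbf{Case $n$ even.} Every option value is odd, so $0$ never appears in the $\mex$ set and $\nimp{n}=0$.

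\textbf{Case $n$ odd.} Every option value is even, so $1$ is never in the set. It then suffices to exhibit the value $0$. Take the middle vertex $i=(n+1)/2$, which lies in $\{1,\ldots,\lceil n/2\rceil\}$. Then $a=b=(n-1)/2$, and the option value is $x\xor x=0$. Hence $\mex=1$ and $\nimp{n}=1$.

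The winner statement then follows from the Sprague--Grundy Theorem~\ref{thm-sprague}. The only point needing care is that $(k+1)$ even makes the remainder map parity-preserving. Everything else is immediate from Lemma~\ref{lem:k-odd}, so no real obstacle is expected. We just note that the index range $i\le\lceil n/2\rceil$ includes the middle vertex for odd $n$.
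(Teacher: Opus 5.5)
Your proof is correct and follows essentially the same route as the paper's: both plug the closed form $\nimpb{m}=(m-1)\%(k+1)$ of Lemma~\ref{lem:k-odd} into the recurrence of Theorem~\ref{thm-nimber-Pn}, use that $k+1$ is even so every option value has the parity of $n-1$, and take the middle vertex $i=(n+1)/2$ to get the value $0$ when $n$ is odd. Your explicit treatment of $n=1,2$ through the base values, and your check that the middle index lies in $\{1,\ldots,\lceil n/2\rceil\}$, are small clarifications that the paper leaves implicit.
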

 
\begin{proof}
Notice that, in the recurrence of $\nimp{n}$ given by Theorem \ref{thm-nimber-Pn}, the $\mex$ function contains the values $\nimpb{i}\xor\nimpb{j}$ such that $i+j=n+1$ and $i,j\ge1$. Since $k+1$ is even, then, by Lemma \ref{lem:k-odd}, $\nimpb{i}\xor\nimpb{j}$ has the same parity of $(i-1)\xor(j-1)$.
 
First assume that $n$ is even. Then $i$ is even if and only if $j$ is odd (since $i+j=n+1$ is odd) and consequently $\nimpb{i}\xor\nimpb{j}$ is always odd. Thus, the $\mex$ function does not include $0$ and then $\nimp{n}=0$.
 
Now assume that $n$ is odd. Then $i$ and $j$ has the same parity (since $i+j=n+1$ is even) and consequently $\nimpb{i}\xor\nimpb{j}$ is always even. Moreover, taking $i=j=(n+1)/2$, we have the value $0$ in the $\mex$ function and no odd value. Then, $\nimp{n}=1$.
\end{proof}

\begin{corollary}\label{cor:k-odd}
Let $k\ge3$ odd and $n\ge3$. Then
\[
\nim_{\ell_k}(C_n)\ =\ 
\begin{cases}
1, &\mbox{if $n=3$ or $n=\alpha\cdot(k+1)-1$ with $\alpha\ge2$},\\
0, &\mbox{otherwise}.
\end{cases}
\]
\end{corollary}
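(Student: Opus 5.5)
The plan is to combine Corollary~\ref{cor-nimber-Cn} with the closed form for $\nimpc{n}$ in Lemma~\ref{lem:k-odd}. By Corollary~\ref{cor-nimber-Cn}, $\nim_{\ell_k}(C_3)=1$, and $\nim_{\ell_k}(C_n)=0$ for $4\le n\le 2k$. For $n>2k$ it gives
\[
\nim_{\ell_k}(C_n)=\mex\{\nimpc{n+1}\},
\]
which equals $1$ exactly when $\nimpc{n+1}=0$, and equals $0$ otherwise. So the whole task is to decide, for $n\ge 2k+1$, when $\nimpc{n+1}$ vanishes.

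Since $n+1\ge 2k+2\ge k+1$, Lemma~\ref{lem:k-odd} gives $\nimpc{n+1}=(n+1)\%(k+1)$. This is zero if and only if $n+1=\alpha(k+1)$ for some integer $\alpha$, that is, $n=\alpha(k+1)-1$. The constraint $n\ge 2k+1$ becomes $\alpha(k+1)\ge 2k+2$, i.e.\ $\alpha\ge 2$. Conversely, every $n=\alpha(k+1)-1$ with $\alpha\ge2$ satisfies $n\ge 2k+1>2k$, so it falls in the range where the $\mex$ formula applies.

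Finally, I check that the small range contributes nothing new. For $\alpha=1$ we get $n=k$, and since $k\ge3$ this lies in $4\le n\le 2k$ only when $k\ge4$; in that case the nimber is $0$, consistent with the statement. When $k=3$ we get $n=3$, which is covered by the separate case $n=3$, where the nimber is $1$. So the formula holds in every case.

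There is no real obstacle here. The only care needed is at the boundary $n=2k+1$, to confirm it is handled by the $\mex$ branch, and to make sure the $\alpha=1$ case is not mistakenly included.
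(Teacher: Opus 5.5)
Your proposal is correct and takes the same route as the paper, which derives this corollary directly from Corollary~\ref{cor-nimber-Cn} and Lemma~\ref{lem:k-odd}. You also spell out the boundary checks the paper leaves implicit: that $n>2k$ holds if and only if $\alpha\ge2$, and that the $\alpha=1$ case gives $n=k$, which is either $n=3$ or falls in the range $4\le n\le 2k$.
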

 
\begin{proof}
It follows directly from Corollary \ref{cor-nimber-Cn} and Lemma \ref{lem:k-odd}.
\end{proof}
 
\section{Hull game in convexity $\ell_k$ with $k$ even}
 
An interesting empirical observation is that, across extensive computational experiments, the nimber sequences $\nimpc{n}$ for even $k$ never appeared to become periodic except when $k=2^h-4$ for $h\ge3$, that is, $k=4,12,28,60,124,252,\dots$
The rest of this section focuses exclusively on this family of values of $k=2^h-4$ for an integer $h\ge 3$.
We prove that the nimber sequences $\nimpc{n}$, $\nimpb{n}$ and $\nimp{n}$ for $n\ge1$ are periodic:
\begin{align*}
\nimpc{n}:&\ \ 0^{k+1}, 1, \ldots,k, 0, 1, \ldots, k+1, 1, \ldots, k+1, 1, \ldots, k-1, 0, k+1, [1,\ldots, k+1]^*\\
\nimpb{n}:&\ \ \ \ \ \ 0, 1, \ldots,k, 0, 1, \ldots, k+1, 1, \ldots, k+1, 1, \ldots, k-1, 0, k+1, [1,\ldots, k+1]^*\\
\nimp{n}:&\ \ \ \ \ \ [1,0]^{k/2}, 1, 2, 2, B^b, 1, 1, 1, 2, 2, B^b, 1, 1, 3, 0, 2, [B^b, C]^*,
\end{align*}
where $b=(k-4)/8$, $B=11331122$ and $C=11323$.
This will imply that Bob wins the game $\CHG_{\ell_k}$ in the path $P_n$ with $n>k$ vertices only for $n=3k+4$.
 
\vspace{0.5cm}
The proof below is a long case analysis, since the periodic structures are non-trivial and depend on $k$. We start by proving in Theorem \ref{thm:family-closed-form} and Corollary \ref{cor:Pn2} the periodic structure of the sequences $\nimpc{n}$ and $\nimpb{n}$. Later we prove Lemmas \ref{lem:nimpb-small} to \ref{lem:tail} in order to prove Theorem \ref{thm:Pn} with the periodic structure of $\nimp{n}$.
 
We start by proving the following fact, valid for any $k$ even.
 
\begin{lemma}\label{lem:phaseBwrap}
For even $k$, $\nimpc{3k+3} = k+1$.
\end{lemma}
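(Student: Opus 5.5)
The plan is to evaluate the recurrence of Theorem~\ref{thm-nimber-Pn} at $n=3k+3$ directly, using only the values of Lemma~\ref{lem:phaseA}. Since $k\ge2$, we have $3k+3\ge k+2$, so
\[
\nimpc{3k+3}=\mex\{\nimpc{i}\xor\nimpc{j}\ :\ i+j=3k+4,\ i\ge j\ge 2\}.
\]
Here $j$ ranges over $2,\ldots,(3k+4)/2$, which is an integer because $k$ is even. In every pair we have $i\le 3k+2$, so Lemma~\ref{lem:phaseA} gives all the needed values. I will split the pairs into three groups according to $j$, show that together they produce every value in $\{0,\ldots,k\}$, and then show that none of them produces $k+1$.

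\emph{Group 1: $2\le j\le k+1$.} Here $\nimpc{j}=0$, and $i=3k+4-j$ runs over $2k+3,\ldots,3k+2$. By Lemma~\ref{lem:phaseA}, $\nimpc{i}=i\%(k+1)$, and these values run over $1,\ldots,k$. So this group contributes exactly $\{1,\ldots,k\}$.

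\emph{Group 2: $j=k+2$.} Then $i=2k+2$, and the contribution is $\nimpc{2k+2}\xor\nimpc{k+2}=0\xor1=1$, which is already covered.

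\emph{Group 3: $k+3\le j\le (3k+4)/2$.} Then $i=3k+4-j$ lies in $\{(3k+4)/2,\ldots,2k+1\}$. Both $i$ and $j$ lie in $\{k+2,\ldots,2k+1\}$, so $r_i=i-(k+1)$ and $r_j=j-(k+1)$, with $r_i+r_j=k+2$ and $r_i,r_j\in\{1,\ldots,k\}$. For the middle pair $i=j=(3k+4)/2$ we get $r_i\xor r_j=0$, so $0$ is in the set. This is where evenness of $k$ is used: for odd $k$ the middle pair does not exist.

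Combining the three groups, $\{0,1,\ldots,k\}$ is contained in the set. It remains to check that $k+1$ never occurs.

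\begin{itemize}
\item In Group 1 the values are at most $k$.
\item In Group 2 the value is $1$.
\item In Group 3 we use the identity $a\xor b=a+b-2(a\wedge b)$, where $a\wedge b$ denotes the bitwise AND of $a$ and $b$. It gives $r_i\xor r_j\equiv r_i+r_j=k+2\pmod 2$, so $r_i\xor r_j$ is even. But $k+1$ is odd.
\end{itemize}

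Hence the mex is $k+1$. The only delicate point is the parity argument that excludes $k+1$ in Group 3; the rest is bookkeeping of index ranges and residues.
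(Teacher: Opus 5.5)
Your proposal is correct and follows essentially the same route as the paper. Both proofs get $\{1,\ldots,k\}$ from the pairs with $j\le k+1$ and $0$ from the self-pair $i=j=(3k+4)/2$. Both note that $(2k+2,k+2)$ contributes only $1$, and both exclude $k+1$ on the remaining pairs by parity, since the residues sum to $k+2$. Your version just makes the index-range bookkeeping (e.g.\ that both indices in Group~3 lie in $\{k+2,\ldots,2k+1\}$) a bit more explicit.
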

 
\begin{proof}
Let $n=3k+3$, which is odd, since $k$ is even.
By Theorem \ref{thm-nimber-Pn}, the $\mex$ function for $\nimpc{n}$ includes every value $\nimpc{i}\xor\nimpc{j}$ such that $i+j=3k+4$ and $i\ge j\ge2$. Taking $i=j=(3k+4)/2$ gives the value $0$.
For $j=2,\dots,k+1$, $i=3k+4-j$ ranges over $\{2k+3,\dots,3k+2\}$, inside the range of Lemma~\ref{lem:phaseA}, where $\nimpc{j}=0$; combined with the value $0$ obtained above, this gives the set $\{0,1,\dots,k\}$, missing only $k+1$.
 
As a further check, the $\mex$ function also includes the value $\nimpc{2k+2}\xor\nimpc{k+2}=0\xor 1=1\ne k+1$, from Lemma \ref{lem:phaseA}.
For the remaining pairs $(2k+1,k+3), (2k,k+4),\ldots$ with $2k+1\ge i\ge j$ and $i+j=3k+4$, the residues $i\%(k+1)$ and $j\%(k+1)$ have the same parity, so $\nimpc{i}\xor\nimpc{j}$ is even, and can never equal $k+1$, which is odd.
Hence $\nimpc{3k+3}=\mex\{0,\dots,k\}=k+1$.
\end{proof}
 
Therefore, with $k$ even, the nimber sequence of $\nimpc{n}$ is (up to $n=3k+3$)
\[
\nimpc{n}:\ \ \ \underset{\scriptscriptstyle 1}{0},\dots,\underset{\scriptscriptstyle k+1}{0},\;
\underset{\scriptscriptstyle k+2}{1},\underset{\scriptscriptstyle k+3}{2},\dots,\underset{\scriptscriptstyle 2k+1}{k},\underset{\scriptscriptstyle 2k+2}{0},\;
\underset{\scriptscriptstyle 2k+3}{1},\dots,\underset{\scriptscriptstyle 3k+2}{k},\underset{\scriptscriptstyle 3k+3}{k+1}.
\]
 
 
\subsection{Nimbers of $P'_n$ and $P''_n$ in $\CHG_{\ell_k}$ with $k=2^h-4$ and $h\ge3$}
 
Throughout this subsection we use repeatedly the elementary identity
\[
a+b\ =\ (a\xor b) + 2\,(a\mathbin{\&}b), \tag{$\ast$}
\]
valid for all nonnegative integers $a,b$, where $\mathbin{\&}$ denotes the bitwise AND. In particular, $a\xor b$ always has the same parity as $a+b$.
 
Let $\rho = \dfrac{k+2}{2} = 2^{h-1}-1$, a \emph{binary repunit}: in base $2$, $\rho$ consists of $h-1$ consecutive $1$-bits (bits $0$ through $h-2$). This property of $\rho$ is what makes the family $k=2^h-4$ special: the obstructions encountered below never materialize precisely because the low bits of $\rho$ are entirely full.
 
By Lemma~\ref{lem:phaseA} (valid for $1\le n\le 3k+2$) together with Lemma~\ref{lem:phaseBwrap} (the single further point $n=3k+3$), the sequence $\nimpc{n}$ is completely known up to $n=3k+3$: it equals $n\%(k+1)$ throughout, except that the multiple of $k+1$ at $n=3(k+1)$ takes the value $k+1$ instead of $0$. The next theorem shows that this same rule continues to hold for every larger $n$, with a single further isolated exception.
 
\begin{theorem}
\label{thm:family-closed-form}
Let $k=2^h-4$ ($h\ge3$). For every $n\ge 3k+4$,
\[
\nimpc{n} =
\begin{cases}
0, & n = 5k+4,\\[2pt]
k+1, & n\equiv 0\!\!\pmod{k+1},\\[2pt]
n\%(k+1), & \text{otherwise.}
\end{cases}
\]
Then the nimber sequence $\nimpc{n}$ is periodic with period $k+1$, the transient has length exactly $5k+5$, and the only values of $n$ for which $\nimpc{n}=0$ with $n>k+1$ are $n=2k+2$ and $n=5k+4$.
\end{theorem}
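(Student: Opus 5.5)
We argue by strong induction on $n\ge 3k+4$. Write $f(n)$ for the claimed value: $f(n)=0$ for $n\le k+1$, and otherwise $f(n)$ is $n\%(k+1)$, replaced by $k+1$ when $(k+1)\mid n$ and $n\ge 3k+3$, with the single exception $f(5k+4)=0$. Lemmas~\ref{lem:phaseA} and~\ref{lem:phaseBwrap} give $\nimpc{m}=f(m)$ for all $m\le 3k+3$. By Theorem~\ref{thm-nimber-Pn}, $\nimpc{n}$ is the $\mex$ of $f(i)\xor f(j)$ over $i+j=n+1$, $i\ge j\ge 2$. We split these pairs into \emph{short} pairs, with $2\le j\le k+1$, and \emph{long} pairs, with $i,j\ge k+2$. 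Let $r=n\%(k+1)$.

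\textbf{Short pairs.} Since $f(j)=0$ for $j\le k+1$, the short pairs contribute exactly $\{f(n-1),\dots,f(n-k)\}$, the $k$ preceding values. By induction these cover $\{0,1,\dots,k+1\}$ minus the target value and minus one more value. The extra missing value is $0$ whenever no $0$ occurs among the previous $k$ terms, which holds for $n>3k+3$ except in the window following $5k+4$. So the $\mex$ over short pairs is at most the claimed value, and at most $0$ when $0$ is also missing. The task is therefore twofold: for generic $n$, exhibit $0$ among the long pairs when it is needed (for $r\ne 0$) and show the target value never appears among the long pairs; for the exceptional $n$, do the reverse.

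\textbf{Long pairs.} Write $i=a(k+1)+r_i$ and $j=b(k+1)+r_j$, so that $r_i+r_j\equiv n+1 \pmod{k+1}$ and hence $r_i+r_j\in\{r+1,\,r+k+2\}$, up to the substitutions $0\mapsto k+1$ at multiples. When $r+1$ is even, taking $r_i=r_j=(r+1)/2$ (possible for $n\ge 3k+4$) gives the value $0$. When $r+1$ is odd, one takes $r_i=r_j=(r+k+2)/2$, using that $k$ is even. So $0$ is always attainable for $n\ge 3k+4$, except in the cases that produce the exception, which must be checked separately.

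Next we show the target $r$ is never attained. By $(\ast)$, $r_i\xor r_j\equiv r_i+r_j \pmod 2$. If $r_i+r_j=r+1$, the parity differs from $r$, so equality is impossible. If $r_i+r_j=r+k+2=r+2\rho$, equality $r_i\xor r_j=r$ forces $r_i\mathbin{\&}r_j=\rho$, so both $r_i$ and $r_j$ contain all low $h-1$ bits. Since $r_i,r_j\le k+1=2^h-3<2^h$, each of $r_i,r_j$ lies in $\{\rho,\ \rho+2^{h-1}\}$. Because $\rho+2^{h-1}=2^h-1>k+1$, this forces $r_i=r_j=\rho$, and then $r_i\xor r_j=0$, not $r$. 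This is exactly where the repunit structure of $\rho$ is used. Pairs involving a multiple of $k+1$ (value $k+1=2\rho-1$) and the exceptional index $5k+4$ (value $0$) must be handled as separate subcases by the same bit-counting. For $r=0$ the target is $k+1$, which is odd, so we need that no long pair yields $k+1$. The $\&$-argument gives this for the generic pairs, and the special-index subcases must be checked by hand.

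\textbf{The exception and the main obstacle.} For $n=5k+4$, so $r=k$, the previous $k$ values are $k+1,1,\dots,k-1$ and $0$ is absent. We must show that no long pair gives $0$, which requires $f(i)=f(j)$. This reduces to checking $i,j$ with $i+j=5k+5$: same residue forces $2r_i\equiv 5k+5\equiv 0$, so $r_i\in\{0,\ (k+1)/2\}$, and the latter is not an integer since $k+1$ is odd. Hence $i,j$ are both multiples of $k+1$ with sum $5(k+1)$, i.e., $(3k+3,2k+2)$ or $(4k+4,k+1)$. Here $f(3k+3)=k+1\ne 0=f(2k+2)$, and $k+1$ is not a long index. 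So $0$ is missing and $\nimpc{5k+4}=0$. Conversely, for $n>5k+4$ we must show that the short pairs together with the $0$ found among the long pairs restore the generic value, including the window $n\in(5k+4,6k+4]$ where $0$ now sits among the previous $k$ values.

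I expect the main obstacle to be the case analysis around multiples of $k+1$ and the index $5k+4$. In particular, proving that $0$ is attained among the long pairs for every generic $n\ge 3k+4$, and that no second exception arises near $6k+5$, requires careful small-case bookkeeping rather than the clean bit argument above. Periodicity, the transient length $5k+5$, and the list of zeros then follow directly from the closed form.
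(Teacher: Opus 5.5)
Your outline matches the paper's proof in substance, organized uniformly rather than in three stages. The short (trivial-side) pairs reproduce the $k$ preceding values, and $0$ comes from equal-residue long pairs. The target is excluded by parity when the value sum is $r+1$ and by the repunit $\mathbin{\&}$-argument when it is $r+k+2$. At $n=5k+4$ you check directly that no long pair gives $0$. Your repunit step (both values contain all low $h-1$ bits, hence lie in $\{\rho,2^h-1\}$) is a cleaner version of the paper's.

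There is, however, a concrete gap right after the exception. You claim the only extra value missing from the short pairs is $0$, but this fails in the window $n\in(5k+4,6k+4]$. There $[n-k,n-1]$ contains $5k+4$, whose value is $0$ instead of the generic $k$. So the short pairs give $\{0,\ldots,k+1\}\setminus\{t,k\}$, where $t$ is the target. For $n\in[5k+6,6k+4]$ the targets are $1,\ldots,k-1$, so losing $k$ is harmless. At $n=5k+5$, however, the target is $k+1$ and the short pairs give only $\{0,1,\ldots,k-1\}$. Your prescription for this window (``the short pairs together with the $0$ found among the long pairs restore the generic value'') then yields $\mex=k$, not $k+1$. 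Every later value depends on this one, so the induction breaks at this point.

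The missing step is to produce $k$ from a long pair, which is exactly what the paper does. The pair $(4k+4,k+2)$ gives $\nimpc{4k+4}\xor\nimpc{k+2}=(k+1)\xor 1=k$, because $k$ is even. You must also confirm that $k+1$ is not produced. Regular long pairs have value sum $\equiv 5k+6\equiv 1 \pmod{k+1}$, hence equal to $k+2$, which is even while $k+1$ is odd. The only irregular long index available, $2k+2$, pairs with $3k+4$ and gives $0\xor 1=1$; the index $5k+4$ pairs only with the short index $2$.

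Relatedly, the point you flag as dangerous, $6k+5$, is routine: the self-pair $(3k+3,3k+3)$ supplies $0$, and no pair yields $k$. There is also a small slip in saying $0$ is needed from the long pairs only for $r\ne 0$. When $r=0$ the preceding values are $\{1,\ldots,k\}$, so $0$ is needed there too, although your choice $r_i=r_j=\rho$ does supply it.
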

 
\begin{proof}
We argue by induction on $n$, in three stages: first $n=3k+4,\ldots,4k+4$, then $n=4k+5,\ldots,5k+5$, and finally every $n\ge5k+6$. In each stage we use the formula already established for smaller values of $n$, together with the description of $n\le 3k+3$ recalled above.
 
Stage 1 covers $n=3k+4,\ldots,4k+4$; write $n=3k+4+j$ with $j=0,\ldots,k$, so the claimed value is $j+1$.
 
For $v=2,\ldots,k+1$ (the trivial side of the recurrence of Theorem~\ref{thm-nimber-Pn}), $n+1-v$ ranges over $\{2k+4+j,\ldots,3k+3+j\}$, splitting into the range $2k+3+a$, $a=j+2,\ldots,k+1$, of values $j+2,\ldots,k+1$ (Lemma~\ref{lem:phaseA} together with Lemma~\ref{lem:phaseBwrap} at $a=k+1$), and the already-established prefix of the present stage (offsets $0,\ldots,j-1$, values $1,\ldots,j$). Together this yields
\[
\{1,\ldots,j\}\cup\{j+2,\ldots,k+1\} = \{1,\ldots,k+1\}\setminus\{j+1\},
\]
so none of these terms produces $j+1$.
 
For the value $0$: if $j$ is odd, $n$ is odd (since $3k+4$ is even, as $k$ is even) and the self-pair supplies $0$. If $j$ is even, $n$ is even; take $a=(j+2)/2\in\{1,\ldots,k\}$ and pair $v=(k+1)+a$, of value $a$ by Lemma~\ref{lem:phaseA}, with $n+1-v=2(k+1)+a$, also of value $a$ by Lemma~\ref{lem:phaseA}. The sum condition $3(k+1)+2a=n+1$ holds precisely at this $a$, giving $a\xor a=0$.
 
It remains to rule out $j+1$ being produced elsewhere. A pair of indices $v=(k+1)+a$, $n+1-v=(k+1)+b$ with $a,b\in\{1,\ldots,k+1\}$ (Lemma~\ref{lem:phaseA}'s range) requires $a+b=j+2$; by $(\ast)$, $a\xor b$ has the parity of $j+2$, i.e.\ of $j$, the opposite parity of $j+1$, so this is impossible. All combinations involving an index already in the established prefix, or two indices in the range covered by Lemma~\ref{lem:phaseA}/Lemma~\ref{lem:phaseBwrap} above, or two indices $v,n+1-v$ both of the form $(k+1)+a$ with $a+b$ exceeding $2k$, are eliminated exactly as in the proof of Lemma~\ref{lem:phaseA}. One case, however, needs the hypothesis $k=2^h-4$:
 
Two indices $v=(k+1)+a$ and $n+1-v=(k+1)+b$ with $a,b\in\{1,\ldots,k\}$ require $a+b=k+3+j$. If $a\xor b=j+1$, then by $(\ast)$, $a\mathbin{\&}b = \bigl((k+3+j)-(j+1)\bigr)/2 = (k+2)/2 = \rho$, independently of $j$. Writing $j+1$ in binary, a valid split of $a,b$ with this AND exists only if $\rho \mathbin{\&} (j+1)=0$, that is, the set bits of $\rho$ and of $j+1$ are disjoint, one contributing to the AND and the other to the XOR. Since $\rho = 2^{h-1}-1$ occupies bits $0,\dots,h-2$ completely, this forces $j+1$ to be a multiple of $2^{h-1}$; as $1\le j+1\le k+1 = 2^h-3 < 2\cdot 2^{h-1}$, the only candidate is $j+1 = 2^{h-1}$, itself a single isolated bit. A number with a single set bit cannot be split between $a$ and $b$ except by assigning it entirely to one side, forcing the unique unordered pair
\[
\{a,b\} = \{\,\rho,\; \rho+2^{h-1}\,\} = \{\,2^{h-1}-1,\; 2^h-1\,\}.
\]
But $2^h-1 > 2^h-4 = k$, so this pair lies outside the valid range $\{1,\dots,k\}$, and hence never occurs.
 
Thus $j+1$ is unreachable, $\{0,1,\dots,j\}$ is reachable, and $\nimpc{n}=j+1$.
 
Stage 2 covers $n=4k+5,\ldots,5k+5$; write $n=4k+5+j$ with $j=0,\ldots,k$. Stage 1 is now fully established, with no exception anywhere on its range, which makes this stage easier to handle than the last.
 
In the range $j=0,\ldots,k-2$, the same argument as in Stage 1 applies: the trivial-side pairs ($v\in\{2,\ldots,k+1\}$) split $n+1-v$ between the tail of Stage 1 (offsets $j+2,\ldots,k+1$, values $j+2,\ldots,k+1$) and the established prefix of the present stage (offsets $0,\ldots,j-1$, values $1,\ldots,j$, with no exception yet since $j-1\le k-3$), yielding $\{1,\ldots,k+1\}\setminus\{j+1\}$.
 
For the value $0$: writing $n=4k+5+j$, $n$ is even iff $j$ is odd. If $j$ is even, we are done by the self-pair. If $j$ is odd, take $a=(k+3+j)/2\in\{1,\ldots,k-1\}$ (an integer since $k+3$ is odd) and pair $v=(k+1)+a$ (value $a$ by Lemma~\ref{lem:phaseA}) with $n+1-v=2(k+1)+a$ (value $a$ by Lemma~\ref{lem:phaseA}); one verifies $3(k+1)+2a=n+1$ holds exactly here.
 
Two indices of the form $(k+1)+a,(k+1)+b$ with $a,b\in\{1,\ldots,k\}$ now need $a+b=2k+4+j$, which exceeds $2k$ (the maximum possible value of $a+b$) for every $j\ge0$: this combination does not occur here, so it can never leak $j+1$. A pairing between such an index and one in Stage 1's range needs $a+c=2+j$; by parity, the resulting xor has the parity of $j$, not of $j+1$, so this is again impossible. All remaining combinations are eliminated by sum considerations exactly as before. Hence $\nimpc{n}=\mex\{0,\ldots,j\}=j+1$.
 
The point $j=k-1$, where $n=5k+4$, deserves separate treatment, since this is where $\nimpc{n}$ turns out to equal $0$. The trivial-side pairs give, from the top of Stage 1 (offset $k+1$, value $k+1$, occurring at $v=k+1$) together with the established prefix of the present stage (offsets $0,\ldots,k-2$, values $1,\ldots,k-1$), the set $\{1,\ldots,k-1,k+1\}$. The usual supplier of the value $0$ breaks down here: the pairing described above would require offset $a=k+1$, which is precisely the boundary point where $\nimpc{2k+2}=0$ instead of $k+1$ (Lemma~\ref{lem:phaseA}), so this pair contributes $0\xor\nimpc{3k+3}=0\xor(k+1)=k+1$ (Lemma~\ref{lem:phaseBwrap}), not $0$. Checking every remaining combination shows each is either out of range or requires an equation of the form $2x=k+1$ with $k+1$ odd, which is impossible for an integer $x$. Hence $0$ is not reachable, and $\nimpc{5k+4}=0$ regardless of anything else.
 
The last point, $j=k$, where $n=5k+5$, returns to the value $k+1$. Here the trivial-side pairs map entirely into the just-established prefix (offsets $0,\ldots,k-1$, values $1,\ldots,k-1,0$), giving $\{0,1,\ldots,k-1\}$. The value $k$ is supplied by the pair $a=1$ (index $k+2$, value $1$ by Lemma~\ref{lem:phaseA}) and the top of Stage 1 (index $4k+4$, value $k+1$): since $k$ is even, $k+1=k\xor1$, so $\nimpc{k+2}\xor\nimpc{4k+4}=1\xor(k+1)=k$. This completes $\{0,\ldots,k\}$. Finally, $k+1$ cannot leak: a pairing between an offset-$1$-type index and Stage 1 has sum $a+c=k+2$ (even), forcing an even xor, while $k+1$ is odd; a pairing between two indices of the form $2(k+1)+b$ has sum $b_1+b_2=k+2$ (even), with the same parity obstruction; all other combinations are out of range. Hence $\nimpc{5k+5}=k+1$.
 
Stage 3 (the periodic part, of size $k+1$) covers every $n\ge 5k+6$. Write $n=5k+6+q(k+1)+j$ with $q=0,1,2,\ldots$ and $j=0,\ldots,k$; here $q$ is the period index and $j$ the position inside the period.
 
We first cover $\{1,\ldots,j\}\cup\{j+2,\ldots,k+1\}$. As before, trivial-side pairs split $n+1-v$ between the immediately preceding length-$(k+1)$ range (offsets $j+1,\ldots,k$, values $j+2,\ldots,k+1$) and the established prefix of the current range (offsets $0,\ldots,j-1$, values $1,\ldots,j$). If $q\ge1$, the preceding range lies entirely inside Stage 3 itself and carries no exception, giving exactly $\{1,\ldots,j\}\cup\{j+2,\ldots,k+1\}$. If $q=0$, the preceding range is Stage 2, whose single anomalous point (value $0$ at its offset $k-1$) falls inside this same span and is automatically included, giving $\{0,1,\ldots,j\}\cup\{j+2,\ldots,k+1\}\setminus\{k\}$.
 
Next consider the value $0$, for $q\ge1$. If $n$ is odd, we are done by the self-pair. If $n$ is even, we instead use two clean, already-established length-$(k+1)$ ranges at a common offset $p$: the one given by Lemma~\ref{lem:phaseA} (offsets $1,\ldots,k+1$ at $n'=k+1+p$), and the range at
\[
n'' = \begin{cases} (4+q)(k+1)+1+p, & q \text{ even (then } j \text{ even, } p=j/2),\\ (3+q)(k+1)+1+p, & q \text{ odd (then } j \text{ odd, } p=(k+1+j)/2), \end{cases}
\]
which satisfies the required sum condition exactly. In either case the second range's index, $4+q$ or $3+q$, is even, hence never equal to $5$, so this pairing never touches Stage 2's isolated exception. Since $n$ even forces $j$ to have the parity used in each branch, self-pairing and this construction jointly cover every $(q,j)$ with $q\ge1$.
 
It remains to check that $j+1$ is never leaked, for every remaining pair type. A pairing of the type used in the coverage argument above never produces $j+1$, by construction. A pairing of two indices of the form $k+1+a$ needs $a+b=(3+q)(k+1)+2+j$, which exceeds $2k$ for every $q\ge0$: it never occurs. For a pairing of an index $k+1+a$ against a clean length-$(k+1)$ range at offset $p$ (any of the ranges established so far), the sum forces $a+p=(5+q-r)(k+1)+1+j$ for the range's index $r\in\{2,\ldots,5+q\}$, and since $a+p\in[1,2k]$, the coefficient of $k+1$ can only be $0$, forcing $r=5+q$ (the immediately preceding range) and $a+(p+1)=j+2$; by parity, $a\xor(p+1)$ has the parity of $j$, never of $j+1$, so this is impossible. (Stage 2's single anomaly, at offset $k-1$ of its own range, gives a value $\ne j+1$ by direct check whenever it is reachable at all, exactly as in the two special points of Stage 2.)
 
It remains to treat a pairing of two clean length-$(k+1)$ ranges, indices $r_1,r_2\in\{2,\ldots,5+q\}$ at offsets $p_1,p_2$, with values $x=p_1+1,\,y=p_2+1\in\{1,\ldots,k+1\}$. The sum condition forces $p_1+p_2=c(k+1)+j$ for an integer $c$; since $0\le p_1+p_2\le 2k$, necessarily $c\in\{0,1\}$.
 
If $c=0$: $x+y=j+2$ generically, so by $(\ast)$, $x\xor y$ has the parity of $j$, never of $j+1$, regardless of which two ranges realize the split. If instead one range is Stage 2 with $p_1=k-1$ (its exception, actual value $0$), the pair gives $0\xor(j-p_1+1)=j-k+2$, and $j-k+2=j+1$ would force $k=1$, impossible since $k\ge4$.
 
If $c=1$: $x+y=j+k+3$ generically, of the same parity as $j+1$ (since $k$ is even), so parity alone does not settle the matter. If $x\xor y=j+1$ held, $(\ast)$ would force $x\mathbin{\&}y=(k+2)/2=\rho$, independently of $j$ and of which two ranges supply $x,y$, and the repunit argument of Stage 1 shows the only candidate pair is $\{2^{h-1}-1,2^h-1\}$, excluded because $2^h-1>k+1$: both $x,y$ must lie in $\{1,\ldots,k+1\}$, a bound that holds no matter which two ranges they come from. If instead one range is Stage 2 with $p_1=k-1$, forcing $x=0$, the pair gives $0\xor(j-p_1+k+2)=0\xor(j+3)=j+3\ne j+1$.
 
This exhausts every possible pair type for every $q\ge0$ and every $j$, and none of them produces $j+1$.
 
Hence $\{0,\ldots,j\}$ is reachable and $j+1$ is not, so $\nimpc{n}=j+1$.
 
Collecting the three stages: for $n\ge3k+4$, $\nimpc{n}=n\%(k+1)$ except at the single point $n=5k+4$ (value $0$ instead of the generic $k$) and at every multiple of $k+1$ from $3(k+1)$ onward (value $k+1$ instead of $0$), which is exactly the closed form claimed.
\end{proof}

\begin{corollary}\label{cor:Pn2}
Let $k=2^h-4$ ($h\ge3$). Then the nimber sequence $\nimpb{n}$ is periodic with period $k+1$, the transient has length exactly $4k+5$, and the only values of $n$ for which $\nimpb{n}=0$ are $n=1$, $n=k+2$ and $n=4k+4$.
\end{corollary}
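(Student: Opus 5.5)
The corollary follows by transporting the closed form of $\nimpc{n}$ through Theorem~\ref{thm:shift}, which gives $\nimpb{n}=\nimpc{n+k}$ for every $n\ge1$. So the $P'$-sequence is just the $P''$-sequence shifted left by $k$, and all three claims (period, transient length, zero set) become statements about $\nimpc{m}$ with $m=n+k\ge k+1$.

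The first step is to write down $\nimpc{m}$ for $m\ge k+1$ by combining the three earlier results.
\begin{itemize}
\item Lemma~\ref{lem:phaseA} covers $k+1\le m\le 3k+2$, where $\nimpc{m}=m\%(k+1)$.
\item Lemma~\ref{lem:phaseBwrap} gives $\nimpc{3k+3}=k+1$.
\item Theorem~\ref{thm:family-closed-form} covers every $m\ge 3k+4$.
\end{itemize}
Translating to $n=m-k$:
\begin{itemize}
\item for $1\le n\le 2k+2$, $\nimpb{n}=(n+k)\%(k+1)=(n-1)\%(k+1)$;
\item for $n\ge 2k+3$, $\nimpb{n}=k+1$ when $n+k\equiv0\pmod{k+1}$, i.e.\ $n\equiv1\pmod{k+1}$;
\item $\nimpb{4k+4}=0$, coming from $m=5k+4$;
\item otherwise $\nimpb{n}=(n-1)\%(k+1)$.
\end{itemize}

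The zero set follows by inspection. For $n\le 2k+2$, zeros occur exactly when $n\equiv1\pmod{k+1}$, i.e.\ at $n=1$ and $n=k+2$. Note that $n=2k+3$ is the first index with $m=3k+3$, where the value is $k+1$ rather than $0$. Beyond that, $(n-1)\%(k+1)=0$ only when $n\equiv1\pmod{k+1}$, and those positions carry the value $k+1$. The single exceptional zero is at $n=4k+4$. The zero set is therefore $\{1,k+2,4k+4\}$, matching the statement. This is also consistent with the zero set $\{2k+2,5k+4\}$ for $m>k+1$ in Theorem~\ref{thm:family-closed-form}, shifted by $k$, together with $m=k+1$ giving $n=1$ and $m=2k+2$ giving $n=k+2$.

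For periodicity and the transient, Theorem~\ref{thm:family-closed-form} states that $\nimpc{m}$ has period $k+1$ with transient of length exactly $5k+5$. Under the shift $n=m-k$, periodicity in $m$ from index $5k+6$ onward becomes periodicity in $n$ from index $4k+6$ onward. The transient of $\nimpb{n}$ therefore has length $5k+5-k=4k+5$.

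The only delicate point, and the main obstacle, is confirming that this transient is exact rather than merely an upper bound. The shift discards the indices $m\le k$ of the $P''$-sequence, so it could in principle shorten the pre-period, but it does not here. The last deviation from the periodic pattern $[1,\ldots,k+1]^*$ occurs at $n=4k+4$, where the value is $0$ instead of $k$. Since $n=4k+4$ is within distance $k+1$ of the periodic tail, the pre-period cannot end any earlier. I would verify explicitly that $\nimpb{4k+4}=0\ne k=\nimpb{4k+4+(k+1)}$, which pins the transient length at $4k+5$. I would also check that the earlier anomalies at $n\le 2k+3$ (where $0$ and $k+1$ alternate at positions $\equiv1 \pmod{k+1}$) do not affect this conclusion, since they all lie before the $n=4k+4$ exception.
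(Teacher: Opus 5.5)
Your route is the paper's: its entire proof is the one-line transfer through Theorem~\ref{thm:shift}. Your explicit piecewise formula for $\nimpb{n}$, the period $k+1$, and the zero set $\{1,k+2,4k+4\}$ are all correct.

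The step that fails as written is your extra ``exactness'' check. Under your own convention (a transient of length $t$ means periodicity from index $t+1$ on), the witness $\nimpb{4k+4}=0\ne k=\nimpb{5k+5}$ only shows that periodicity breaks at $n=4k+4$, i.e.\ $t\ge 4k+4$. In fact periodicity already holds from $n=4k+5$: $\nimpb{4k+5}=\nimpc{5k+5}=k+1=\nimpc{6k+6}=\nimpb{5k+6}$. Moreover, for every $n\ge 4k+5$ the indices $n+k$ and $n+2k+1$ lie beyond $5k+4$, where the closed form of Theorem~\ref{thm:family-closed-form} has no exception. So under the minimal-preperiod reading the exact transient is $4k+4$ (and $5k+4$ for $\nimpc{n}$), and your check cannot pin it at $4k+5$.

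The value $4k+5$ is exact only under the convention implicit in the paper's display $\ldots,0,k+1,[1,\ldots,k+1]^*$. There the tail must consist of whole blocks $1,\ldots,k+1$, each beginning at an index $n\equiv 2\pmod{k+1}$. Under that convention your witness is exactly what is needed: the candidate block starting at $3k+5$ contains the anomaly at $4k+4$, so the tail starts at $4k+6$. But you must state this convention and argue it that way.

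Your worry that discarding the indices $m\le k$ might shorten the transient by more than $k$ is unnecessary. The last deviating index of $\nimpc{m}$ is $5k+4>k$, so the shift lowers the transient by exactly $k$ under either convention. That is all the paper's one-line proof relies on.
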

 
\begin{proof}
Directly from the fact that $\nimpb{n}=\nimpc{n+k}$ of Theorem \ref{thm:shift}
\end{proof}
 
\begin{corollary}\label{cor:Cn2}
Let $k=2^h-4$ ($h\ge3$). Then, for every $n>k$, $\nim_{\ell_k}(C_n)=1$ if and only if $n\in\{2k+1,5k+3\}$; otherwise $\nim_{\ell_k}(C_n)=0$. In particular, Alice wins $\CHG_{\ell_k}$ on $C_n$, with $n>k$, if and only if $n\in\{2k+1,5k+3\}$.
\end{corollary}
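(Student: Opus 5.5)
The argument combines Corollary~\ref{cor-nimber-Cn} with the closed form for $\nimpc{n}$ from Lemma~\ref{lem:phaseA}, Lemma~\ref{lem:phaseBwrap} and Theorem~\ref{thm:family-closed-form}. For $n>2k$, Corollary~\ref{cor-nimber-Cn} gives $\nim_{\ell_k}(C_n)=\mex\{\nimpc{n+1}\}$, which is $1$ exactly when $\nimpc{n+1}=0$ and $0$ otherwise. For $k<n\le 2k$ the same corollary gives $\nim_{\ell_k}(C_n)=0$, since $n\ge k+1\ge 5>3$ (we have $k\ge4$). So the whole statement reduces to a single question: for which $n>2k$, equivalently $m=n+1>2k+1$, does $\nimpc{m}=0$ hold?

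We read off the zeros of $\nimpc{m}$ for $m\ge 2k+2$ by splitting into three ranges.
\begin{itemize}
\item For $2k+2\le m\le 3k+2$, Lemma~\ref{lem:phaseA} gives $\nimpc{m}=m\%(k+1)$. This vanishes only at $m=2k+2$, because the next multiple of $k+1$ is $3k+3$.
\item At $m=3k+3$, Lemma~\ref{lem:phaseBwrap} gives $\nimpc{m}=k+1\neq0$.
\item For $m\ge 3k+4$, Theorem~\ref{thm:family-closed-form} gives a zero only at $m=5k+4$. Multiples of $k+1$ take the value $k+1$, and every other residue is a nonzero remainder.
\end{itemize}
Hence the zeros of $\nimpc{m}$ with $m\ge2k+2$ are exactly $m\in\{2k+2,\,5k+4\}$. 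These correspond to $n=m-1\in\{2k+1,\,5k+3\}$, and both values exceed $2k$, so Corollary~\ref{cor-nimber-Cn} does apply to them.

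Finally, by the Sprague--Grundy Theorem~\ref{thm-sprague}, Alice wins exactly when the nimber is positive. This gives the claimed characterization.

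I do not expect a real obstacle, since all the work is already done in Theorem~\ref{thm:family-closed-form}. The only points that need care are two boundary checks. First, the interval $k<n\le2k$ must be handled by the explicit clause of Corollary~\ref{cor-nimber-Cn} rather than by the $\mex$ formula. Second, the exceptional value at $m=3k+3$ must be treated separately, so that the multiple $3(k+1)$ is not mistaken for a zero.
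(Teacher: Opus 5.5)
Your proposal is correct and follows essentially the paper's route: the paper's proof is the one-line combination of Corollary~\ref{cor-nimber-Cn} with Lemma~\ref{lem:phaseA} and Theorem~\ref{thm:family-closed-form}, which reduces the claim to locating the zeros of $\nimpc{m}$ for $m\ge 2k+2$. Your explicit handling of the range $k<n\le 2k$, and of the point $m=3k+3$ via Lemma~\ref{lem:phaseBwrap}, spells out boundary checks that the paper leaves implicit.
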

 
\begin{proof}
Directly from Corollary \ref{cor-nimber-Cn}, Lemma \ref{lem:phaseA} and Theorem \ref{thm:family-closed-form}.
\end{proof}
 
 
\subsection{Nimbers of $P_n$ in $\CHG_{\ell_k}$ with $k=2^h-4$ and $h\ge3$}
 
The sequence $\nim_{\ell_k}(P_n)$, $n\ge1$, decomposes into six initial blocks (the transient) followed by an infinite periodic tail; Table~\ref{tab:blocks} summarizes this decomposition -- the range of $n$ covered by each block, its length, the value(s) of $\nimp{n}$ it asserts, and the lemma in which it is proved -- as a roadmap for the whole subsection.
 
\begin{table}[ht]
\centering
\begin{tabular}{c|c|c|c|c}
Block & Range of $n$ & Length & $\nim_{\ell_k}(P_n)$ & Proved in\\
\hline
1 & $1,\ldots,k$ & $k$ & $[1,0]^{k/2}$ & Lemma~\ref{lem:block1}\\
2 & $k+1,\ldots,k+3$ & $3$ & $1,\,2,\,2$ & Lemma~\ref{lem:block2}\\
3 & $k+4,\ldots,2k-1$ & $8b$ & $B^b$ & Lemma~\ref{lem:block3}\\
4 & $2k,\ldots,2k+4$ & $5$ & $1,\,1,\,1,\,2,\,2$ & Lemma~\ref{lem:block4}\\
5 & $2k+5,\ldots,3k$ & $8b$ & $B^b$ & Lemma~\ref{lem:block5}\\
6 & $3k+1,\ldots,3k+5$ & $5$ & $1,\,1,\,3,\,0,\,2$ & Lemma~\ref{lem:block6}\\
\hline
Tail & $n\ge3k+6$ & period $k+1$ & $[B^b,C]^*$ & Lemma~\ref{lem:tail}\\
\end{tabular}
\caption{Decomposition of the sequence $\nim_{\ell_k}(P_n)$, $n\ge1$, into the six blocks of the transient (of total length $3k+5$) and the periodic tail, with $B=11331122$, $C=11323$ and $b=(k-4)/8$ as in Theorem~\ref{thm:Pn}. Blocks~3 and~5 each have length $8b=k-4$.}
\label{tab:blocks}
\end{table}
 
We first record two direct consequences of Theorem \ref{thm:shift} together with Lemma~\ref{lem:phaseA} and Lemma~\ref{lem:phaseBwrap}, giving $\nimpb{n}$ explicitly on the small range that will drive the whole computation of $\nimp{n}$ below; this holds for every even $k\ge2$, not only for our family.
 
\begin{lemma}\label{lem:nimpb-small}
Let $k\ge2$ be even. Then
\[
\nimpb{n} =
\begin{cases}
n-1, & 1\le n\le k+1,\\
n-(k+2), & k+2\le n\le 2k+3.
\end{cases}
\]
\end{lemma}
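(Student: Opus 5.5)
The plan is to read $\nimpb{n}$ off $\nimpc{\cdot}$ using Theorem~\ref{thm:shift}, which gives $\nimpb{n}=\nimpc{n+k}$ for every $n\ge1$, and then substitute the known values of $\nimpc{m}$ for $k+1\le m\le 3k+3$. That range is exactly what the two cases of the statement cover, since $n\in\{1,\ldots,2k+3\}$ corresponds to $m=n+k\in\{k+1,\ldots,3k+3\}$. Note that the values themselves are supplied by Lemma~\ref{lem:phaseA} and Lemma~\ref{lem:phaseBwrap}, which hold for all even $k$; no further recurrence computation will be needed.

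For the first case, $1\le n\le k+1$, we have $m=n+k\in\{k+1,\ldots,2k+1\}\subseteq\{1,\ldots,3k+2\}$. Lemma~\ref{lem:phaseA} gives $\nimpc{m}=m\%(k+1)$, which equals $0$ when $m=k+1$ and $m-(k+1)=n-1$ otherwise. In both subcases the value is $n-1$. Alternatively, the proof of Theorem~\ref{thm:shift} already computed $\nimpb{n}=n-1$ directly for $3\le n\le k+1$, and the base cases $n=1,2$ are in Theorem~\ref{thm-nimber-Pn}.

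For the second case, $k+2\le n\le 2k+3$, we have $m=n+k\in\{2k+2,\ldots,3k+3\}$. If $m\le 3k+2$, Lemma~\ref{lem:phaseA} gives $\nimpc{m}=m\%(k+1)=m-2(k+1)=n-(k+2)$, because $2(k+1)\le m<3(k+1)$; at $m=2k+2$ this gives $0=n-(k+2)$ with $n=k+2$, as required. The remaining endpoint $n=2k+3$, i.e. $m=3k+3$, is exactly Lemma~\ref{lem:phaseBwrap}, which gives $\nimpc{3k+3}=k+1=(2k+3)-(k+2)$. This is the one spot where the generic remainder formula would give the wrong value ($0$), so the main care is just to notice that the range endpoint $2k+3$ is chosen precisely so that Lemma~\ref{lem:phaseBwrap} applies there. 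Otherwise the argument is routine index arithmetic, and evenness of $k$ is used only through Lemma~\ref{lem:phaseBwrap}.
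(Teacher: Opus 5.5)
Your proposal is correct and is essentially the paper's own proof: you reduce via $\nimpb{n}=\nimpc{n+k}$ (Theorem~\ref{thm:shift}), then apply Lemma~\ref{lem:phaseA} for $m=n+k\le 3k+2$ in both ranges and Lemma~\ref{lem:phaseBwrap} for the single endpoint $m=3k+3$. The index arithmetic matches the paper's case for case, and your remark that evenness of $k$ enters only through Lemma~\ref{lem:phaseBwrap} is accurate.
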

 
\begin{proof}
By Theorem \ref{thm:shift}, $\nimpb{n}=\nimpc{n+k}$ for every $n\ge1$.
 
For $1\le n\le k+1$: $m:=n+k\in\{k+1,\ldots,2k+1\}$, and by Lemma~\ref{lem:phaseA}, $\nimpc{m}=m\%(k+1)=m-(k+1)$ (no wrap, as $m\le 2k+1<2(k+1)$). Hence $\nimpb{n}=\nimpc{m}=m-(k+1)=n-1$.
 
For $k+2\le n\le 2k+2$: $m=n+k\in\{2k+2,\ldots,3k+2\}$, and by Lemma~\ref{lem:phaseA}, $\nimpc{m}=m-2(k+1)$ (as $2(k+1)\le m<3(k+1)$). Hence $\nimpb{n}=m-2(k+1)=n-(k+2)$.
 
For $n=2k+3$: $m=3k+3=3(k+1)$, and by Lemma~\ref{lem:phaseBwrap} (valid for every even $k$), $\nimpc{m}=k+1=n-(k+2)$.
\end{proof}
 
Lemma~\ref{lem:nimpb-small} covers $\nimpb{n}$ only up to $n=2k+3$, using facts valid for every even $k$. To go further we need the family-specific Theorem~\ref{thm:family-closed-form}.
 
\begin{lemma}\label{lem:nimpb-small2}
Let $k=2^h-4$ ($h\ge3$). Then
\[
\nimpb{n} =
\begin{cases}
n-(2k+3), & 2k+4\le n\le 3k+4,\\
n-(3k+4), & 3k+5\le n\le 4k+3.
\end{cases}
\]
\end{lemma}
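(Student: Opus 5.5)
The plan mirrors the proof of Lemma~\ref{lem:nimpb-small}. We combine the shift identity of Theorem~\ref{thm:shift}, $\nimpb{n}=\nimpc{n+k}$, with the closed form for $\nimpc{m}$, $m\ge 3k+4$, given by Theorem~\ref{thm:family-closed-form}. No new mex computation is needed. We only have to translate the range of $n$ into a range of $m=n+k$ and read off the residues, keeping track of the exceptional points (multiples of $k+1$ and $m=5k+4$).

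\emph{First range.} For $2k+4\le n\le 3k+4$ we get $m=n+k\in\{3k+4,\ldots,4k+4\}$. This lies entirely in $m\ge 3k+4$, so Theorem~\ref{thm:family-closed-form} applies. For $3k+4\le m\le 4k+3$ we have $3(k+1)<m<4(k+1)$, so $m$ is not a multiple of $k+1$. It is also not $5k+4$, since $5k+4>4k+4$. Hence $\nimpc{m}=m-3(k+1)=n+k-3k-3=n-(2k+3)$. At the endpoint $m=4k+4=4(k+1)$ the multiple-of-$(k+1)$ case gives $k+1$, and indeed $n-(2k+3)=(3k+4)-(2k+3)=k+1$. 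So the formula holds on the whole range, including the wrap point.

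\emph{Second range.} For $3k+5\le n\le 4k+3$ we get $m\in\{4k+5,\ldots,5k+3\}$. Here $4(k+1)<m<5(k+1)$, so $m$ is again not a multiple of $k+1$. The exceptional value $5k+4$ is excluded because $m\le 5k+3$. Hence $\nimpc{m}=m-4(k+1)=n-(3k+4)$. This explains why the second range stops at $4k+3$: the next value, $n=4k+4$, maps to the exception $m=5k+4$ with nimber $0$, as recorded in Corollary~\ref{cor:Pn2}.

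The only delicate step is the bookkeeping at the block boundaries. We must check that the wrap point $m=4(k+1)$ fits the first formula, and that neither range touches $m=5k+4$ or $m=5(k+1)$. Both are confirmed by the explicit endpoint calculations above, so I expect no real obstacle beyond these checks.
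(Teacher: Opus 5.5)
Your proposal is correct and follows essentially the same route as the paper. Both apply the shift $\nimpb{n}=\nimpc{n+k}$ from Theorem~\ref{thm:shift} and read the values off Theorem~\ref{thm:family-closed-form}, checking that $m=n+k$ never equals $5k+4$ and that the single multiple $m=4(k+1)$ yields $k+1=n-(2k+3)$.
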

 
\begin{proof}
By Theorem \ref{thm:shift}, $\nimpb{n}=\nimpc{n+k}$.
 
For $2k+4\le n\le3k+4$: $m:=n+k\in\{3k+4,\ldots,4k+4\}$, so $3(k+1)<m\le4(k+1)$ throughout (as $3(k+1)=3k+3<3k+4$), and $m\ne5k+4$ (that would need $n=4k+4>3k+4$). By Theorem~\ref{thm:family-closed-form}, $\nimpc{m}=m\%(k+1)=m-3(k+1)$ for $m<4(k+1)$, and $\nimpc{m}=k+1$ at the single multiple-of-$(k+1)$ exception $m=4(k+1)$; both cases equal $m-3(k+1)$ (at $m=4(k+1)$, $m-3(k+1)=k+1$ too). Hence $\nimpb{n}=\nimpc{m}=(n+k)-3(k+1)=n-(2k+3)$ throughout, with no exception needed.
 
For $3k+5\le n\le4k+3$: $m=n+k\in\{4k+5,\ldots,5k+3\}$, so $4(k+1)<m<5(k+1)$ throughout (as $4(k+1)=4k+4<4k+5$ and $5(k+1)=5k+5>5k+3$), and $m\ne5k+4$ is automatic since $5k+4>5k+3\ge m$. By Theorem~\ref{thm:family-closed-form}, $\nimpc{m}=m\%(k+1)=m-4(k+1)$ throughout, with no exception. Hence $\nimpb{n}=\nimpc{m}=(n+k)-4(k+1)=n-(3k+4)$.
\end{proof}
 
Lemmas~\ref{lem:nimpb-small} and~\ref{lem:nimpb-small2} give $\nimpb{\cdot}$ explicitly up to $n=4k+3$, which is all that Blocks~1--6 below need. The periodic tail, however, needs $\nimpb{\cdot}$ on the whole ray $n\ge1$, in one uniform (and much simpler) formula. Define, for every integer $n\ge1$,
\[
\nu(n)\ :=\ \bigl((n-2)\bmod(k+1)\bigr)+1\ \in\ \{1,\ldots,k+1\}.
\]
 
\begin{lemma}\label{lem:nimpb-nu}
Let $k=2^h-4$ ($h\ge3$). For every $n\ge1$, $\nimpb{n}=\nu(n)$, except at the three points $n\in\{1,\,k+2,\,4k+4\}$, where instead $\nimpb{n}=0$ (while $\nu(1)=\nu(k+2)=k+1$ and $\nu(4k+4)=k$).
\end{lemma}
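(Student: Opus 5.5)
The plan is to reduce everything to the closed form of $\nimpc{\cdot}$ through the shift identity $\nimpb{n}=\nimpc{n+k}$ of Theorem~\ref{thm:shift}. Then we compare the resulting values with $\nu(n)$ by elementary residue arithmetic. Set $m=n+k$. Since $m-(k+1)=n-1$, we have $m\equiv n-1 \pmod{k+1}$, and so
\[
\nu(n)=((n-2)\bmod(k+1))+1=((m-2)\bmod(k+1))+1 .
\]
When $m\not\equiv 1\pmod{k+1}$, this expression equals $((m-1)\bmod(k+1))$ shifted appropriately. More concretely, $\nu(n)=r$ where $r\in\{1,\dots,k+1\}$ is the representative of $m\bmod (k+1)$, taking $k+1$ in place of $0$.

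So the claim becomes the following statement: for every $m\ge k+1$, $\nimpc{m}$ equals the representative of $m \bmod (k+1)$ in $\{1,\dots,k+1\}$, except at $m=k+1,\,2k+2,\,5k+4$.

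The first step handles $k+1\le m\le 3k+2$, i.e.\ $1\le n\le 2k+2$. Here Lemma~\ref{lem:phaseA} gives $\nimpc{m}=m\%(k+1)$. This agrees with the representative except at the multiples $m=k+1$ and $m=2k+2$, where the value is $0$ rather than $k+1$. These correspond to $n=1$ and $n=k+2$, and at both points $\nu=k+1$ while $\nimpb{n}=0$. Equivalently, one could just quote Lemma~\ref{lem:nimpb-small} and check $n-1=\nu(n)$ for $2\le n\le k+1$ and $n-(k+2)=\nu(n)$ for $k+3\le n\le 2k+2$ directly.

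The second step covers $m=3k+3$, i.e.\ $n=2k+3$. Lemma~\ref{lem:phaseBwrap} gives $\nimpc{m}=k+1$, which matches the representative, and indeed $\nu(2k+3)=k+1$.

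The third step covers $m\ge 3k+4$ and uses Theorem~\ref{thm:family-closed-form}. Multiples of $k+1$ receive $k+1$, other values receive $m\%(k+1)$, and in both cases this is the representative, except at $m=5k+4$, where the value is $0$. That exceptional point corresponds to $n=4k+4$, and there $\nu(4k+4)=((4k+2)\bmod(k+1))+1=(k-1)+1=k$, as claimed.

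Finally, I would verify the three stated values of $\nu$ directly from the definition: $\nu(1)=((-1)\bmod(k+1))+1=k+1$ and $\nu(k+2)=(k\bmod(k+1))+1=k+1$, together with the computation for $\nu(4k+4)$ above.

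The only real obstacle is bookkeeping: keeping the index shift $m=n+k$ and the off-by-one in $\nu$ consistent across the residue classes, and making sure no case is missed at the boundaries $m=k+1$, $2k+2$, $3k+3$ and $5k+4$. I would double-check these by plugging in $k=4$ numerically against the explicit sequence listed before the subsection.
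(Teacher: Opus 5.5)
Your argument is correct and essentially the paper's proof. You pass to $\nimpc{n+k}$ via Theorem~\ref{thm:shift} and read off the values from Lemmas~\ref{lem:phaseA} and~\ref{lem:phaseBwrap} (equivalently Lemma~\ref{lem:nimpb-small}) for small indices and from Theorem~\ref{thm:family-closed-form} for $m\ge 3k+4$, isolating the exceptions $m\in\{k+1,2k+2,5k+4\}$.

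One slip to fix: since $n-2\equiv m-1\pmod{k+1}$, your display should read $\nu(n)=((m-1)\bmod(k+1))+1$, not $((m-2)\bmod(k+1))+1$. Your ``more concretely'' reformulation, that $\nu(n)$ is the representative of $m$ modulo $k+1$ in $\{1,\dots,k+1\}$, is correct, and it is the only form you actually use.
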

 
\begin{proof}
Consider first the range $1\le n\le 2k+3$. By Lemma~\ref{lem:nimpb-small}, $\nimpb{n}=n-1$ for $n\le k+1$ and $\nimpb{n}=n-(k+2)$ for $k+2\le n\le2k+3$. For $2\le n\le k+1$, $n-2\in\{0,\ldots,k-1\}$ has no wraparound modulo $k+1$, so $\nu(n)=(n-2)+1=n-1=\nimpb{n}$; at $n=1$, $n-2\equiv k\pmod{k+1}$, so $\nu(1)=k+1\ne0=\nimpb{1}$, the first exception. For $k+3\le n\le2k+3$, $n-2\in\{k+1,\ldots,2k+1\}$ reduces once modulo $k+1$, so $\nu(n)=(n-2-(k+1))+1=n-(k+2)=\nimpb{n}$; at $n=k+2$, $n-2=k$ has no wraparound, so $\nu(k+2)=k+1\ne0=\nimpb{k+2}$, the second exception.
 
Now consider $n\ge2k+4$. By Theorem \ref{thm:shift}, $\nimpb{n}=\nimpc{n+k}$, and $m:=n+k\ge3k+4$ lies in the domain of Theorem~\ref{thm:family-closed-form}, which gives $\nimpc{m}=m\%(k+1)$ except $\nimpc m=k+1$ when $m\equiv0\pmod{k+1}$ (and $m\ne5k+4$), and $\nimpc{5k+4}=0$. Since $k\equiv-1\pmod{k+1}$, $m\equiv n-1\pmod{k+1}$.
 
If $n\not\equiv1\pmod{k+1}$: then $m\%(k+1)=(n-1)\%(k+1)\ne0$, so (as long as $m\ne5k+4$) $\nimpc{m}=(n-1)\%(k+1)$; since this is nonzero, $(n-2)\%(k+1)=(n-1)\%(k+1)-1$, so $\nu(n)=(n-1)\%(k+1)=\nimpc{m}=\nimpb{n}$.
 
If $n\equiv1\pmod{k+1}$: then $m\equiv0\pmod{k+1}$, so (as long as $m\ne5k+4$) $\nimpc{m}=k+1$; also $n-2\equiv-1\equiv k\pmod{k+1}$, so $\nu(n)=k+1=\nimpc{m}=\nimpb{n}$.
 
Finally $m=5k+4$ forces $n=4k+4$, giving $\nimpb{4k+4}=\nimpc{5k+4}=0$, the third exception: here $4k+4-2=4(k+1)-2\equiv-2\equiv k-1\pmod{k+1}$, so $\nu(4k+4)=k\ne0$.
\end{proof}
 
The recurrence of Theorem~\ref{thm-nimber-Pn} expresses $\nimp{n}$ directly in terms of $\nimpb{\cdot}$; since the term $\nimpb{i}\xor\nimpb{n-i+1}$ is unchanged under $i\leftrightarrow n-i+1$,
\[
\nimp{n}\ =\ \mex\bigl\{\nimpb{i}\xor\nimpb{n+1-i} : i=1,\ldots,\lceil n/2\rceil\bigr\}\ =\ \mex\bigl\{\nimpb{i}\xor\nimpb{n+1-i} : i=1,\ldots,n\bigr\}.
\]
We use the second, symmetric, form throughout. We will use the following general fact about fixed-sum $\xor$, which lets us decide, directly from $T\bmod 8$, exactly which of the values $0,1,2,3$ a family $\{a\xor(T-a):a=0,\ldots,T\}$ contains.
 
\begin{lemma}\label{lem:xor-realize}
Let $T\ge0$. For a given integer $v\ge0$, there exists $a\in\{0,\ldots,T\}$ with $a\xor(T-a)=v$ if and only if
\[
T\ge v,\qquad T\equiv v\pmod2,\qquad \text{and}\qquad \rho\mathbin{\&}v=0,\ \text{where } \rho=\tfrac{T-v}{2}.
\]
Moreover, whenever these three conditions hold, $a=\rho$ is such a witness.
\end{lemma}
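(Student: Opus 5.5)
The plan is to reduce everything to the identity $(\ast)$, $a+b=(a\xor b)+2(a\mathbin{\&}b)$, applied with $b=T-a$, and then to describe exactly which $a$ realize a prescribed xor and a prescribed and. (Note that the symbol $\rho$ here is local to this lemma; it is not the repunit $(k+2)/2$ used earlier.)

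\emph{Necessity.} Suppose $a\in\{0,\ldots,T\}$ satisfies $a\xor(T-a)=v$, and put $b=T-a$. By $(\ast)$ we get $T=v+2(a\mathbin{\&}b)$. This yields three consequences:
\begin{itemize}
\item $T\ge v$;
\item $T\equiv v\pmod 2$;
\item $a\mathbin{\&}b=(T-v)/2=\rho$.
\end{itemize}
Moreover, no bit can lie both in $a\xor b$ and in $a\mathbin{\&}b$: a bit set in the and is set in both $a$ and $b$, so it cancels in the xor. Hence $\rho\mathbin{\&}v=(a\mathbin{\&}b)\mathbin{\&}(a\xor b)=0$.

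\emph{Sufficiency.} Assume the three conditions hold and take $a=\rho$, with $b=\rho+v$. Since $\rho\mathbin{\&}v=0$, adding $v$ to $\rho$ produces no carries, so $b=\rho\,|\,v$, the bitwise OR.

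The bitwise checks are then immediate:
\begin{itemize}
\item $a\mathbin{\&}b=\rho$;
\item $a\xor b=(\rho\,|\,v)\xor\rho=v$, again because the supports of $\rho$ and $v$ are disjoint.
\end{itemize}
It remains to confirm the range and the sum:
\begin{itemize}
\item $a+b=2\rho+v=T$, so $b=T-a$;
\item $0\le\rho\le T$, since $T-v\ge0$ and $\rho\le T/2$.
\end{itemize}
Thus $a=\rho$ is the required witness.

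There is no real obstacle in this argument. The only point needing care is the no-carry claim that $\rho+v=\rho\,|\,v$ when $\rho\mathbin{\&}v=0$. I would justify it in one line from $(\ast)$: $\rho+v=(\rho\xor v)+2(\rho\mathbin{\&}v)=\rho\xor v$, and for disjoint supports $\rho\xor v=\rho\,|\,v$.
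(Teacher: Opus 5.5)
Your proof is correct and follows essentially the same route as the paper: for necessity, $(\ast)$ gives $a\mathbin{\&}b=\rho$, and the supports of the and and the xor are disjoint; for sufficiency, you use the witness $a=\rho$, $b=\rho+v=\rho\,|\,v$ with no carries. The only cosmetic difference is that you check $a\xor b=v$ bitwise from the disjoint supports, whereas the paper recovers it from $(\ast)$ using $a\mathbin{\&}b=\rho$.
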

 
\begin{proof}
($\Leftarrow$) Suppose the three conditions hold, and set $a=\rho$, $b=\rho+v$. Since $\rho\mathbin{\&}v=0$, the numbers $\rho$ and $v$ have disjoint sets of set bits, so $b=\rho+v=\rho\mathbin{|}v$ (no carries), and hence $a\mathbin{\&}b=\rho\mathbin{\&}(\rho\mathbin{|}v)=\rho$ (every set bit of $\rho$ is also set in $b$). By $(\ast)$, $a\xor b=(a+b)-2(a\mathbin{\&}b)=(2\rho+v)-2\rho=v$, and $a+b=2\rho+v=T-v+v=T$, so $b=T-a$. Also $0\le a=\rho\le T$ since $0\le v\le T$. Thus $a=\rho\in\{0,\ldots,T\}$ is a witness.
 
($\Rightarrow$) Suppose $v=a\xor(T-a)$ for some $a\in\{0,\ldots,T\}$; set $b=T-a$. By $(\ast)$, $T=a+b=v+2(a\mathbin{\&}b)$, so $T\ge v$, $T\equiv v\pmod 2$, and $a\mathbin{\&}b=(T-v)/2=\rho$. Since $a\xor b=v$, every set bit of $v$ is a bit where exactly one of $a,b$ is set, hence not a bit of $a\mathbin{\&}b=\rho$. Thus $\rho\mathbin{\&}v=0$.
\end{proof}
 
Since Lemma~\ref{lem:xor-realize}'s three conditions, for a fixed target $v\in\{0,1,2,3\}$, only ever inspect $T$ through $T\ge v$ (automatic once $T\ge3$) and through the two low bits of $\rho=(T-v)/2$ (because $v\le3$ has no bits above bit~$1$), the set
\[
\Phi(T) := \{\, v\in\{0,1,2,3\} : v=a\xor(T-a) \text{ for some } a\in\{0,\ldots,T\}\,\}
\]
depends, for every $T\ge3$, only on $T\bmod 8$. Direct case-checking of $T=0,\ldots,7$ via Lemma~\ref{lem:xor-realize} gives the table
\begin{equation}\label{eq:Phi-table}
\Phi(T) \text{ for } T\bmod8 = 0,1,2,3,4,5,6,7:\qquad \{0\},\ \{1\},\ \{0,2\},\ \{3\},\ \{0,2\},\ \{1\},\ \{0\},\ \{\}.
\end{equation}
 
We now prove $\nimp{n}$ block by block, following the sequence above. Throughout, $b=(k-4)/8=2^{h-3}-1\ge0$.
 
\begin{lemma}[Block 1]\label{lem:block1}
For $n=1,\ldots,k$, $\nimp{n}=1$ if $n$ is odd and $\nimp{n}=0$ if $n$ is even.
\end{lemma}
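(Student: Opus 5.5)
For $1\le n\le k$, every index $i$ appearing in the symmetric form of the recurrence satisfies $1\le i\le n\le k$. In this range Lemma~\ref{lem:nimpb-small} gives $\nimpb{i}=i-1$. Setting $a=i-1$, $T=n-1$, the recurrence becomes
\[
\nimp{n}=\mex\bigl\{a\xor(T-a):a=0,\ldots,T\bigr\},
\]
which is exactly the computation in the proof of Theorem~\ref{thm-nimber-monof}. The base cases $n=1,2$ are already fixed by Theorem~\ref{thm-nimber-Pn}, and they agree with this formula ($\nimp{1}=\mex\{0\}=1$, $\nimp{2}=\mex\{1\}=0$). So I will treat all $n\in\{1,\ldots,k\}$ uniformly through this reduction.

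Suppose $n$ is even, so $T$ is odd. By $(\ast)$, $a\xor(T-a)$ has the parity of $T$, so every value in the set is odd. Thus $0$ is missing and $\nimp{n}=0$.

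Suppose $n$ is odd, so $T$ is even. Then every value $a\xor(T-a)$ is even, so $1$ never appears. The choice $a=T/2$ gives $0$. Hence $\nimp{n}=\mex=1$.

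Equivalently, both cases follow from Lemma~\ref{lem:xor-realize} with $v=0$ and $v=1$: $v=1$ fails the parity condition when $T$ is even, and $v=0$ fails it when $T$ is odd while $a=T/2$ witnesses it when $T$ is even.

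The only point requiring care is the range check: every argument of $\nimpb{\cdot}$ must stay at most $k+1$, so that the closed form $\nimpb{i}=i-1$ with no wraparound applies. This holds because $i,\,n+1-i\le n\le k$. This is also precisely why the block stops at $n=k$: at $n=k+1$ the index $i=k+2$ is not yet involved, but the pattern changes at the next step.
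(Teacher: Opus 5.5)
Your proof is correct and matches the paper's argument. Both reduce via Lemma~\ref{lem:nimpb-small} to $\mex\{(i-1)\xor(n-i)\}$, apply the parity identity $(\ast)$, and use the self-pair $i=(n+1)/2$. Your explicit check of $n=1,2$ against a recurrence stated only for $n\ge3$ is a small point the paper glosses over. One quibble with your closing remark: the same argument works verbatim at $n=k+1$, where all indices are still $\le k+1$; that is how the paper gets $\nimp{k+1}=1$ in Block~2, so the cut at $n=k$ is presentational rather than forced.
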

 
\begin{proof}
By Lemma~\ref{lem:nimpb-small}, $\nimpb{n}=n-1$ for every $n\le k+1$, so this also holds for every index used below (as $n\le k$). By Theorem~\ref{thm-nimber-Pn}, $\nimp{n}=\mex\{\nimpb{i}\xor\nimpb{n+1-i}:i=1,\ldots,\lceil n/2\rceil\}=\mex\{(i-1)\xor(n-i):i=1,\ldots,\lceil n/2\rceil\}$, and each term has the parity of $(i-1)+(n-i)=n-1$ by $(\ast)$.
 
If $n$ is even, every term is odd, so $0$ is excluded and $\nimp{n}=0$. If $n$ is odd, every term is even; taking $i=(n+1)/2$ gives the self-pair $0$, and $1$ (odd) is excluded, so $\nimp{n}=1$.
\end{proof}
 
\begin{lemma}[Block 2]\label{lem:block2}
$\nimp{k+1}=1$, $\nimp{k+2}=2$ and $\nimp{k+3}=2$.
\end{lemma}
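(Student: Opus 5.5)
The plan is to evaluate the symmetric form of the recurrence,
\[
\nimp{n}=\mex\bigl\{\nimpb{i}\xor\nimpb{n+1-i}:i=1,\ldots,n\bigr\},
\]
for each of the three values $n=k+1,k+2,k+3$. All indices $i,n+1-i$ involved are at most $k+3$. So Lemma~\ref{lem:nimpb-small} gives every needed value of $\nimpb{\cdot}$ explicitly:
\[
\nimpb{m}=m-1 \ \text{for } m\le k+1,\qquad \nimpb{k+2}=0,\qquad \nimpb{k+3}=1.
\]
The pairs split into two kinds. Generic pairs have both indices $\le k+1$, and for them the term is $(i-1)\xor(n-i)$, a fixed-sum xor with $T=n-1$. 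The remaining pairs use the index $k+2$ or $k+3$, and I compute those terms by hand.

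For the generic pairs, Lemma~\ref{lem:xor-realize} and the table~\eqref{eq:Phi-table} determine which of $0,1,2,3$ occur. Since $k=2^h-4$ with $h\ge3$, we have $k\equiv4\pmod 8$. One point needs checking: when $i$ ranges only over a subinterval, the generic family could be smaller than the full set $\Phi(T)$. I handle this as follows.
\begin{itemize}
\item For exclusions (showing a value is absent), a subfamily is harmless.
\item For inclusions (showing a value is present), I verify that the witness $a=\rho=(T-v)/2$ of Lemma~\ref{lem:xor-realize} lies in the allowed range of $i-1$. Alternatively, I use an explicit exceptional pair instead.
\end{itemize}

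\emph{Case $n=k+1$.} Every pair is generic with $T=k\equiv4\pmod8$, so $\Phi(T)=\{0,2\}$. The value $0$ is realized by the self-pair $i=(k+2)/2$. The value $1$ is absent, because every term is even by $(\ast)$. Hence $\nimp{k+1}=1$.

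\emph{Case $n=k+2$.} The pairs $i\in\{1,k+2\}$ give $\nimpb{1}\xor\nimpb{k+2}=0\xor0=0$. The generic pairs $i=2,\ldots,k+1$ have $T=k+1\equiv5\pmod8$, so $\Phi(T)=\{1\}$. The value $1$ is witnessed, for instance, by $i=2$, which gives $1\xor k=k+1$; this is odd, so I instead take the witness $a=\rho=k/2$, i.e. $i=k/2+1$, which lies in range. Every generic term is odd, so $2$ never appears, and the exceptional term is $0$. Hence $\nimp{k+2}=\mex\{0,1,\ldots\}=2$.

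\emph{Case $n=k+3$.} The exceptional pairs are $i=1$ and $i=2$ (and their mirrors), giving $0\xor\nimpb{k+3}=1$ and $1\xor\nimpb{k+2}=1$. The generic pairs $i=3,\ldots,k+1$ have $T=k+2\equiv6\pmod8$, so $\Phi(T)=\{0\}$. The value $0$ comes from the self-pair $i=(k+4)/2$, which lies in range. Thus among $\{0,1,2,3\}$ only $0$ and $1$ occur, and $\nimp{k+3}=2$.

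The only delicate step is the bookkeeping of which pairs are exceptional, together with confirming that the witnesses for $0$ and $1$ fall inside the restricted ranges of $i$. Everything else is immediate from the table~\eqref{eq:Phi-table} and parity.
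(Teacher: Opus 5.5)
Your proof is correct and follows essentially the same route as the paper. You use the same split into pairs with both indices $\le k+1$ versus pairs touching $k+2$ or $k+3$, and the same witnesses: the self-pairs for $0$, and $a=k/2$ for the value $1$ at $n=k+2$. You also exclude $2$ at $n=k+3$ by the same Lemma~\ref{lem:xor-realize} argument, reading it off the table~\eqref{eq:Phi-table} rather than checking the bit of $k/2$ directly.

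The only slip is your aside that $i=2$ witnesses $1$ at $n=k+2$. That pair gives $1\xor k=k+1\neq 1$, so it fails because it is not $1$, not because it is odd. You discard it and use the correct witness anyway.
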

 
\begin{proof}
Write $\rho=\frac{k+2}{2}=2^{h-1}-1$. We use $\nimp{n}=\mex\{\nimpb{i}\xor\nimpb{n+1-i}:i=1,\ldots,n\}$ and Lemma~\ref{lem:nimpb-small}.
 
For $n=k+1$: writing $i=1+a$, $a=0,\ldots,k/2$, one has $n+1-i=k+1-a\le k+1$, so both indices lie in the first range of Lemma~\ref{lem:nimpb-small} and the term is $a\xor(k-a)$. As $k$ is even, every term is even by $(\ast)$, and $a=k/2$ gives the self-pair $0$. Hence $0$ is in the set and $1$ is not (odd), so $\nimp{k+1}=1$.
 
For $n=k+2$: at $i=1$, $\nimpb{1}=0$ and $\nimpb{k+2}=0$ (Lemma~\ref{lem:nimpb-small}), giving the term $0$. For $i=1+a$, $a=1,\ldots,k/2$: $n+1-i=k+2-a\le k+1$, giving the term $a\xor(k+1-a)$, odd by $(\ast)$ (as $a+(k+1-a)=k+1$ is odd). Since $k$ is even, $k/2$ is an integer, and as $k=2^h-4$ is a multiple of $4$, $k/2$ is itself even; at $a=k/2$, $(k/2)\xor(k/2+1)=1$ (consecutive integers with $k/2$ even differ only in the last bit). So $0,1$ are both in the set. Since every other term is either $0$ (at $a=0$) or odd (for $a\ge1$), the value $2$ never occurs, so $\nimp{k+2}=2$.
 
For $n=k+3$: at $i=1,2$, both give the term $1$ (checking directly with Lemma~\ref{lem:nimpb-small}, since $\nimpb{1}=0,\nimpb{k+3}=1$ and $\nimpb{2}=1,\nimpb{k+2}=0$). At $i=1+a$, $a=k/2+1$ (self-pair, $n+1-i=i$), the term is $0$. At $i=1+a$, $a=2,\ldots,k/2$, the term is $a\xor(k+2-a)$. This is never $2$: if $a\xor b=2$ with $a+b=k+2$, then by Lemma~\ref{lem:xor-realize}, $a\mathbin{\&}b=(k+2-2)/2=k/2=\rho-1$. Since $\rho=2^{h-1}-1$ has bits $0,\ldots,h-2$ set and $h\ge3$, $\rho-1$ has bit $1$ set; so $a\mathbin{\&}b$ has bit $1$ set, forcing both $a,b$ to have bit $1$ set, so $a\xor b$ has bit $1$ clear, contradicting $a\xor b=2$. Hence $0,1$ are in the set and $2$ is not, so $\nimp{k+3}=2$.
\end{proof}
 
\begin{lemma}[Block 3]\label{lem:block3}
For $j=0,\ldots,8b-1$, writing $n=k+4+j$,
\[
\nimp{n} = B[j\bmod 8], \qquad B = (1,1,3,3,1,1,2,2).
\]
\end{lemma}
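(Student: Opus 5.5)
The plan is to compute the option set of the recurrence exactly, using Lemma~\ref{lem:nimpb-small} together with Lemma~\ref{lem:xor-realize} and table~\eqref{eq:Phi-table}.

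\textbf{Step 1: split the options into two fixed-sum families.} Fix $n=k+4+j$ with $0\le j\le 8b-1=k-5$, so $k+4\le n\le 2k-1$. Every index $i$ and $n+1-i$ in the symmetric form of the recurrence lies in $\{1,\ldots,2k\}$. Hence Lemma~\ref{lem:nimpb-small} gives $\nimpb{i}$ explicitly, and only Lemma~\ref{lem:nimpb-small} is needed here, not the family-specific results. The unordered pairs $\{i,n+1-i\}$ fall into two families.
\begin{itemize}
\item Family (I): one index satisfies $i\le j+3$ and the other $n+1-i\ge k+2$. Writing $a=i-1$, the term is $a\xor(T_2-a)$ with $T_2=j+2$, and $a$ runs over all of $\{0,\ldots,T_2\}$. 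The endpoint $a=j+2$ corresponds to the pair $(j+3,k+2)$, and $\nimpb{k+2}=0=(k+2)-(k+2)$, so it fits the same formula. So family (I) contributes exactly the values $\{a\xor(T_2-a):0\le a\le T_2\}$.
\item Family (II): both indices lie in $\{j+4,\ldots,k+1\}$. Writing $a=i-1$, the term is $a\xor(T_1-a)$ with $T_1=k+3+j$, but only for $j+3\le a\le k$ and $j+3\le T_1-a\le k$. This is a restricted range, not the full range $\{0,\ldots,T_1\}$.
\end{itemize}

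\textbf{Step 2: exclusion.} All values of $B$ are at most $3$, so it suffices to know which of $0,1,2,3$ occur among the options. Since family (II) is a subset of the full fixed-sum family for $T_1$, the values in $\{0,1,2,3\}$ occurring among the options lie in $\Phi(T_1)\cup\Phi(T_2)$. Because $k=2^h-4\equiv4\pmod 8$, we have $T_1\equiv j+7$ and $T_2\equiv j+2\pmod 8$. Reading off~\eqref{eq:Phi-table} for $j\bmod 8=0,\ldots,7$, the union $\Phi(T_1)\cup\Phi(T_2)$ is, in order,
\[
\{0,2\},\ \{0,3\},\ \{0,1,2\},\ \{0,1,2\},\ \{0,3\},\ \{0,2\},\ \{0,1\},\ \{0,1\}.
\]
The mex of these sets is $1,1,3,3,1,1,2,2$, which is exactly $B$. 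So the claimed value $B[j\bmod 8]$ is never an option. Small $T_2<3$ (that is, $j=0$) is checked directly: $\{0\xor2,\,1\xor1,\,2\xor0\}=\{2,0\}$, which matches.

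\textbf{Step 3: realization (the main obstacle).} It remains to show that every value of $\Phi(T_1)\cup\Phi(T_2)$ below the claimed mex really occurs. For $\Phi(T_2)$ this is immediate, since family (I) covers the full range of $a$. The delicate part is $\Phi(T_1)$, where the range of $a$ is restricted. For a target $v\in\Phi(T_1)$, I will use the witness $a=\rho'=(T_1-v)/2$ from Lemma~\ref{lem:xor-realize}, with partner $\rho'+v$. The window conditions $\rho'\ge j+3$ and $\rho'+v\le k$ both reduce to $j\le k-3-v$.
\begin{itemize}
\item For $v\in\{0,1,2\}$ this holds because $j\le k-5$.
\item For $v=3$ we need $j\le k-6$. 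Now $3\in\Phi(T_1)$ only when $T_1\equiv3\pmod 8$, that is, when $j\equiv4\pmod 8$. The extreme value $j=k-5$ satisfies $j\equiv7\pmod 8$, so whenever $v=3$ is needed we have $j\le k-6$.
\end{itemize}
This is also where the hypothesis $k\equiv4\pmod 8$ and the upper end $j\le 8b-1$ of Block~3 are used. Combining the three steps, the option set contains every value below $B[j\bmod 8]$ and omits $B[j\bmod 8]$ itself, so $\nimp{n}=B[j\bmod 8]$.
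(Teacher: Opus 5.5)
Your proof is correct and is essentially the paper's proof. You use the same split of the pairs into a full fixed-sum family with $T=j+2$ and a restricted family with $T=k+3+j$, the same exclusion via $\Phi(T_A)\cup\Phi(T_B)$ with residues $j+2$ and $j+7 \pmod 8$, and the same realization through the witness $(T_B-v)/2$ lying in the window $\{j+3,\ldots,k\}$. Your separate check that $v=3$ is realizable is harmless but unnecessary: every entry of $B$ is at most $3$, so only values $v\le 2$ ever need a witness.
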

 
\begin{proof}
Since $n=k+4+j\le k+3+8b=2k-1$ (using $8b=k-4$), every $i\in\{1,\ldots,n\}$ and its partner $n+1-i$ lie in $\{1,\ldots,2k-1\}\subset\{1,\ldots,2k+3\}$, the range covered by Lemma~\ref{lem:nimpb-small}. Write $g(m)=m-1$ for $m\le k+1$ and $g(m)=m-(k+2)$ for $k+2\le m\le 2k+3$, so $\nimpb{}\equiv g$ throughout.
 
Since $1\le n-k-1=j+3\le k+1\le n-1$ for every $j$ in range, the indices $i=1,\ldots,n$ split into:
\begin{itemize}[nosep]
\item[(A)] $i=1,\ldots,j+3$: here $i\le k+1$ and $n+1-i\ge k+2$;
\item[(B)] $i=j+4,\ldots,k+1$: here both $i,n+1-i\le k+1$;
\item[(C)] $i=k+2,\ldots,n$: mirrors (A) under $i\leftrightarrow n+1-i$ (same values, by commutativity of $\xor$).
\end{itemize}
Hence $\nimp{n}=\mex(\mathcal V_A\cup\mathcal V_B)$, where, setting $a=i-1$:
\[
\mathcal V_A = \{a\xor(T_A-a) : a=0,\ldots,T_A\},\quad T_A=j+2;
\qquad
\mathcal V_B = \{a\xor(T_B-a) : a=j+3,\ldots,k\},\quad T_B=k+3+j.
\]
(For (A): $g(i)=a$, and since $n+1-i=n-a\ge k+2$, $g(n+1-i)=n-a-(k+2)=(j+2)-a=T_A-a$. For (B): $g(i)=a$ and $g(n+1-i)=n-i=n-1-a=T_B-a$.)
 
On (A), $\mathcal V_A=\{a\xor(T_A-a):a=0,\ldots,T_A\}$ is exactly the full family of Lemma~\ref{lem:xor-realize} with $T=T_A=j+2$, so $\mathcal V_A\cap\{0,1,2,3\}=\Phi(T_A)$, determined by \eqref{eq:Phi-table} via $T_A\bmod8=(j+2)\bmod8$.
 
On (B), let $\mathcal V_B^{\mathrm{full}}=\{a\xor(T_B-a):a=0,\ldots,T_B\}\supseteq\mathcal V_B$ (as $\{j+3,\ldots,k\}\subseteq\{0,\ldots,T_B\}$) be the corresponding full family. By Lemma~\ref{lem:xor-realize}, $\mathcal V_B^{\mathrm{full}}\cap\{0,1,2,3\}=\Phi(T_B)$, determined by $T_B\bmod8$. Since $k=2^h-4$, $T_B=k+3+j=(2^h-1)+j$, and $2^h-1\equiv7\pmod8$ for every $h\ge3$; hence $T_B\bmod8=(j+7)\bmod8$, depending only on $j\bmod8$.
 
Combining these two zones modulo $8$: using \eqref{eq:Phi-table} with $T_A\equiv j+2$ and $T_B\equiv j+7\pmod8$, we get $\Phi(T_A)\cup\Phi(T_B)$ for $j\bmod8=0,\ldots,7$:
\[
\begin{array}{c|c|c|c}
j\bmod8 & \Phi(T_A) & \Phi(T_B) & \Phi(T_A)\cup\Phi(T_B)\\\hline
0 & \{0,2\} & \{\} & \{0,2\}\\
1 & \{3\} & \{0\} & \{0,3\}\\
2 & \{0,2\} & \{1\} & \{0,1,2\}\\
3 & \{1\} & \{0,2\} & \{0,1,2\}\\
4 & \{0\} & \{3\} & \{0,3\}\\
5 & \{\} & \{0,2\} & \{0,2\}\\
6 & \{0\} & \{1\} & \{0,1\}\\
7 & \{1\} & \{0\} & \{0,1\}
\end{array}
\]
and $\mex(\Phi(T_A)\cup\Phi(T_B))=1,1,3,3,1,1,2,2$ for $j\bmod8=0,\ldots,7$ respectively, exactly $B[j\bmod8]$.
 
Two things remain: (i) since $\mathcal V_A\cup\mathcal V_B\subseteq \mathcal V_A\cup\mathcal V_B^{\mathrm{full}}$, the value $B[j\bmod8]$, being absent from $\Phi(T_A)\cup\Phi(T_B)\supseteq(\mathcal V_A\cup\mathcal V_B)\cap\{0,1,2,3\}$, is also absent from $\mathcal V_A\cup\mathcal V_B$; (ii) every $v<B[j\bmod8]$ must be shown present in the actual $\mathcal V_A\cup\mathcal V_B$, not merely in $\mathcal V_A\cup\mathcal V_B^{\mathrm{full}}$, i.e., when such a $v$ comes only from $\Phi(T_B)$, its witness must lie in the restricted range $\{j+3,\ldots,k\}$, not just in $\{0,\ldots,T_B\}$. We check this next.
 
Claim: every $v$ that the table forces to come from $\mathcal V_B$ satisfies $v\le2$, and its Lemma~\ref{lem:xor-realize} witness $a=\rho=(T_B-v)/2$ satisfies $\rho\in\{j+3,\ldots,k\}$.
 
Inspecting the table above (column $\Phi(T_A)\cup\Phi(T_B)$), for every $j\bmod8$ the values needed to certify $\mex=B[j\bmod8]$ are exactly $\{0,\ldots,B[j\bmod8]-1\}\subseteq\{0,1,2\}$ (as $B[j\bmod8]\le3$ always), so no witness for $v=3$ is ever required. For the values $v\in\{0,1,2\}$ that a row's $\Phi(T_B)$ supplies (and $\Phi(T_A)$ does not), we bound $\rho=(T_B-v)/2=\bigl((2^h-1+j)-v\bigr)/2$:
\[
\rho-(j+3) = \frac{(2^h-1+j-v)-2(j+3)}{2} = \frac{(2^h-7-v)-j}{2}\ \ge\ \frac{(2^h-9)-j}{2}\ \ge\ 0,
\]
using $v\le2$ and $j\le 8b-1=k-5=2^h-9$. Hence $\rho\ge j+3$. Also
\[
k-\rho = (2^h-4)-\frac{2^h-1+j-v}{2} = \frac{(2^h-7-j+v)}{2}\ \ge\ \frac{2^h-7-j}{2}\ \ge\ \frac{2^h-7-(2^h-9)}{2}=1>0,
\]
using $v\ge0$ and again $j\le2^h-9$. Hence $\rho\le k$, in fact $\rho\le k-1$. So $\rho\in\{j+3,\ldots,k\}$, proving the claim.
 
By the claim, whenever the table requires a value $v\in\{0,1,2\}$ from $\Phi(T_B)$, the witness $a=\rho$ produced by Lemma~\ref{lem:xor-realize} lies in the true range $\{j+3,\ldots,k\}$ of $\mathcal V_B$, so $v\in\mathcal V_B$ (not merely in $\mathcal V_B^{\mathrm{full}}$). Combined with (i) above, $\mex(\mathcal V_A\cup\mathcal V_B)=B[j\bmod8]$ for every $j=0,\ldots,8b-1$, which is precisely $\nimp{n}$ for $n=k+4+j$.
\end{proof}
 
\begin{lemma}[Block 4]\label{lem:block4}
$\nimp{2k}=\nimp{2k+1}=\nimp{2k+2}=1$ and $\nimp{2k+3}=\nimp{2k+4}=2$.
\end{lemma}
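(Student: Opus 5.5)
The plan is to repeat the zone decomposition from the proof of Lemma~\ref{lem:block3}, with two changes. At $n=2k+3,2k+4$ some index pairs fall outside the linear formula, and the middle zone shrinks to a handful of terms, so those will be evaluated by hand. Write $n=2k+t$ with $t\in\{0,\ldots,4\}$ and use the symmetric form $\nimp{n}=\mex\{\nimpb{i}\xor\nimpb{n+1-i}:i=1,\ldots,n\}$. All indices are at most $2k+4$, so Lemma~\ref{lem:nimpb-small} covers everything except $\nimpb{2k+4}=1$, which comes from Lemma~\ref{lem:nimpb-small2}. The key arithmetic fact is $k=2^h-4\equiv 4\pmod 8$, hence $2k\equiv 0\pmod 8$. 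Also, $k$ has exactly two trailing zero bits, so $(k-1)\xor k=7$.

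\textbf{Zone (A)}: $i\le\min(k+1,\,n-k-1)$ and $n+1-i\ge k+2$. With $a=i-1$ the term is $a\xor(T_A-a)$, where $T_A=n-k-2=k-2+t\equiv t+2\pmod 8$.
\begin{itemize}[nosep]
\item For $t\le2$ the range of $a$ is the full $\{0,\ldots,T_A\}$, so the values in $\{0,1,2,3\}$ are exactly $\Phi(T_A)$ from \eqref{eq:Phi-table}: $\{0,2\}$, $\{3\}$, $\{0,2\}$ for $t=0,1,2$.
\item For $t=3,4$ the range is $a\le k$. The few missing $a$ are exactly the pairs with both indices $\ge k+2$, which I treat separately below.
\item For $t=4$ the term at $i=1$ is exceptional: it equals $\nimpb{1}\xor\nimpb{2k+4}=1$.
\end{itemize}

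\textbf{Zone (B)}: both $i,n+1-i\le k+1$, i.e.\ $a\in\{k-1+t,\ldots,k\}$ with term $a\xor(2k-1+t-a)$. This is nonempty only for $t\le1$. At $t=0$ it gives $(k-1)\xor k=7$. At $t=1$ it is the self-pair $a=k$, giving $0$.

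\textbf{Zone (D)} (both indices $\ge k+2$), for $t\ge3$. At $t=3$ the only pair is the self-pair $(k+2,k+2)$, giving $0$. At $t=4$ the pairs are $(k+2,k+3)$ and $(k+3,k+2)$, giving $0\xor1=1$.

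Assembling the cases:
\begin{itemize}[nosep]
\item $t=0$: values $\{0,2\}\cup\{7\}$, so the mex is $1$.
\item $t=1$: values $\{3\}\cup\{0\}$, so the mex is $1$.
\item $t=2$: values $\{0,2\}$, so the mex is $1$.
\item $t=3$: $\Phi(k+1)$ with $k+1\equiv5\pmod8$ is $\{1\}$, but the range is truncated, so the witness must be checked. Lemma~\ref{lem:xor-realize} gives $a=\rho=(k+1-1)/2=k/2\le k$, which is valid. Every term has odd sum, hence is odd, so $2$ is absent. Together with the $0$ from (D), the mex is $2$.
\item $t=4$: Zone (D) and the $i=1$ term supply $1$. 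For $0$, the parity of $k+2$ allows the self-pair $a=(k+2)/2$; it lies in the range $\{1,\ldots,k\}$ with both indices regular, and gives $0$. For $2$, the witness would need $a\mathbin{\&}b=k/2$ with bit $1$ set, so $2$ is impossible, exactly as in the $n=k+3$ case of Lemma~\ref{lem:block2}. The mex is $2$.
\end{itemize}

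The main obstacle is bookkeeping at the boundaries $t=3,4$. There the truncated range in (A) means $\Phi$ is only an upper bound, so each required witness $\rho$ must be checked to lie in the actual range. There are also the non-linear values $\nimpb{k+2}=0$ and $\nimpb{2k+4}=1$. I would verify each such term individually rather than rely on the $\Phi$ table, and confirm that the excluded value never appears by the parity and repunit-bit argument based on $(\ast)$.
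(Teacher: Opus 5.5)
Your proposal is correct and follows essentially the same route as the paper: you split the indices according to which linear piece of $\nimpb{\cdot}$ each of $i$ and $n+1-i$ lies in, and read off $\Phi(T)$ for $T=k-2,k-1,k,k+1,k+2$ using $k\equiv 4\pmod 8$. You then handle the few boundary terms ($(k-1)\xor k=7$, the self-pairs, $\nimpb{k+2}=0$ and $\nimpb{2k+4}=1$) by hand, exactly as the paper does. The only cosmetic difference is that at $n=2k+4$ you exclude the value $2$ with the bit-$1$ argument from Lemma~\ref{lem:block2}, while the paper cites $\Phi(k+2)=\{0\}$; these are the same fact.
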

 
\begin{proof}
We use $\nimp{n}=\mex\{\nimpb{i}\xor\nimpb{n+1-i}:i=1,\ldots,n\}$ and Lemmas~\ref{lem:nimpb-small} and~\ref{lem:nimpb-small2} (write $g(m)=m-1$ for $m\le k+1$, $g(m)=m-(k+2)$ for $k+2\le m\le2k+3$, and $g(m)=m-(2k+3)$ for $2k+4\le m\le3k+4$, so $\nimpb{}\equiv g$ throughout). For $n=2k,\ldots,2k+3$ every index $i=1,\ldots,n$ and its partner $n+1-i$ lie in $\{1,\ldots,2k+3\}$, so only the first two pieces of $g$ are needed; for $n=2k+4$ the single extra index $i=2k+4$ (and its partner $i=1$) touches the third piece, $g(2k+4)=1$.
 
Throughout we split $i=1,\ldots,n$ according to whether $i$ and $j:=n+1-i$ fall in the first piece ($\le k+1$) or the second piece ($\ge k+2$) of $g$, using $k=2^h-4\equiv4\pmod8$ for every $h\ge3$ (as $2^h\equiv0\pmod8$) to read off $T\bmod8$ for each family below, and \eqref{eq:Phi-table} for $\Phi(T)$.
 
For $n=2k$: both indices are $\le k+1$ only for $i=k,k+1$ (partner $k+1,k$), giving the term $(k-1)\xor k$. Since $k=4(2^{h-2}-1)$ with $2^{h-2}-1$ odd ($h\ge3$), $k$ has $2$-adic valuation exactly $2$, so $(k-1)\xor k=2^3-1=7$. Both indices are never $\ge k+2$ simultaneously (that forces $i\le k-1$, contradicting $i\ge k+2$). For $i=1,\ldots,k-1$ (partner $k+2,\ldots,2k$, both in the stated ranges), writing $a=i-1=0,\ldots,k-2$, the term is $a\xor((k-2)-a)$, the full family of Lemma~\ref{lem:xor-realize} with $T=k-2\equiv2\pmod8$, so $\Phi(T)=\{0,2\}$. Hence the value set meets $\{0,1,2,3\}$ in $\{0,2,7\}\cap\{0,1,2,3\}=\{0,2\}$: $0$ occurs, $1$ does not, so $\nimp{2k}=1$.
 
For $n=2k+1$: the only pair with both indices $\le k+1$ is the self-pair $i=j=k+1$, contributing $0$; there is no pair with both indices $\ge k+2$. For $i=1,\ldots,k$ (partner $k+2,\ldots,2k+1$), writing $a=i-1=0,\ldots,k-1$, the term is $a\xor((k-1)-a)$, the full family with $T=k-1\equiv3\pmod8$, so $\Phi(T)=\{3\}$. The value set meets $\{0,1,2,3\}$ in $\{0,3\}$: $1$ is absent, $0$ is present, so $\nimp{2k+1}=1$.
 
For $n=2k+2$: no pair has both indices in the same piece (both $\le k+1$ forces $i\ge k+2$; both $\ge k+2$ forces $i\le k+1$). For $i=1,\ldots,k+1$ (partner $k+2,\ldots,2k+2$), writing $a=i-1=0,\ldots,k$, the term is $a\xor(k-a)$, the full family with $T=k\equiv4\pmod8$, so $\Phi(T)=\{0,2\}$. The value set meets $\{0,1,2,3\}$ in $\{0,2\}$: $1$ is absent, $0$ is present, so $\nimp{2k+2}=1$.
 
For $n=2k+3$: the only pair with both indices $\ge k+2$ is the self-pair $i=j=k+2$, contributing $0$; there is no pair with both indices $\le k+1$. For $i=1,\ldots,k+1$ (partner $k+3,\ldots,2k+3$), writing $a=i-1=0,\ldots,k$, the term is $a\xor((k+1)-a)$, the full family with $T=k+1\equiv5\pmod8$, so $\Phi(T)=\{1\}$. The value set meets $\{0,1,2,3\}$ in $\{0,1\}$: $2$ is absent, so $\nimp{2k+3}=2$.
 
For $n=2k+4$: the pair $i=1,j=2k+4$ contributes $g(1)\xor g(2k+4)=0\xor1=1$. The only pairs with both indices $\ge k+2$ are $i=k+2,j=k+3$ and its mirror, contributing $g(k+2)\xor g(k+3)=0\xor1=1$; there is no pair with both indices $\le k+1$ (that forces $i\ge k+4$). For $i=2,\ldots,k+1$ (partner $k+4,\ldots,2k+3$, disjoint from the two special pairs above), writing $a=i-1=1,\ldots,k$, the term is $a\xor((k+2)-a)$: this is the family of Lemma~\ref{lem:xor-realize} with $T=k+2\equiv6\pmod8$, restricted to $a\in\{1,\ldots,k\}$ instead of the full $\{0,\ldots,k+2\}$. By \eqref{eq:Phi-table}, the full family meets $\{0,1,2,3\}$ only in $\Phi(T)=\{0\}$, so the restricted family meets it in a subset of $\{0\}$; and $0$ does occur, since Lemma~\ref{lem:xor-realize}'s witness $a=\rho=(T-0)/2=k/2+1$ lies in $\{1,\ldots,k\}$ (as $k\ge4$). Collecting all pairs, the value set meets $\{0,1,2,3\}$ in $\{0,1\}$: $2$ is absent, so $\nimp{2k+4}=2$.
\end{proof}
 
\begin{lemma}[Block 5]\label{lem:block5}
For $j=0,\ldots,8b-1$, writing $n=2k+5+j$,
\[
\nimp{n} = B[j\bmod 8], \qquad B = (1,1,3,3,1,1,2,2).
\]
\end{lemma}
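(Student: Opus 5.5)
The plan is to mirror the proof of Lemma~\ref{lem:block3} almost verbatim. Zone by zone, the pair structure of $\nimp{n}$ for $n=2k+5+j$ reproduces the same two families $T_A\equiv j+2$ and $T_B\equiv j+7\pmod 8$. Then the table of $\Phi(T_A)\cup\Phi(T_B)$ already computed there yields $B[j\bmod 8]$ again.

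First I fix the range of indices. Since $j\le 8b-1=k-5$, we have $n\le 3k$, so every index $i\in\{1,\dots,n\}$ and its partner $n+1-i$ lie in $\{1,\dots,3k\}\subseteq\{1,\dots,3k+4\}$. Hence, by Lemmas~\ref{lem:nimpb-small} and~\ref{lem:nimpb-small2}, $\nimpb{}\equiv g$ with the three pieces $g(m)=m-1$ for $m\le k+1$, $g(m)=m-(k+2)$ for $k+2\le m\le 2k+3$, and $g(m)=m-(2k+3)$ for $2k+4\le m\le 3k+4$. We have $n+1=2k+6+j$. Using the symmetric form of the recurrence, I classify the unordered pairs $\{i,n+1-i\}$ as follows.

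\begin{itemize}[nosep]
\item[(A)] $i=1,\dots,j+2$, with partner $\ge 2k+4$ in piece~3. Setting $a=i-1$, the term is $a\xor\bigl((2k+6+j-i)-(2k+3)\bigr)=a\xor(j+2-a)$ for $a=0,\dots,j+2$. This is the full family with $T_A=j+2$.
\item[(B)] $i=j+3,\dots,k+1$, with partner in $\{k+5+j,\dots,2k+3\}$ in piece~2. With $a=i-1\in\{j+2,\dots,k\}$, the term is $a\xor(k+3+j-a)$. This is a restriction of the family with $T_B=k+3+j\equiv j+7\pmod 8$, using $2^h-1\equiv 7$.
\item[(D)] Both indices in piece~2, that is $i=k+2+c$ and partner $k+2+d$ with $c+d=j+2$ and $0\le c,d\le j+2\le k+1$. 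The term is $c\xor d$, which is the full family with $T=j+2$ again, so it contributes exactly $\Phi(T_A)$.
\end{itemize}

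Two further checks close the classification. Pairs with both indices in piece~1 would force $n+1\le 2k+2$, which is impossible. Pairs with one index in piece~2 and the other in piece~3 would force a sum $\ge 3k+6>n+1$, also impossible. So the value set meets $\{0,1,2,3\}$ inside $\Phi(T_A)\cup\Phi(T_B)$, and by Lemma~\ref{lem:xor-realize} it contains $\Phi(T_A)$ in full. The table in Lemma~\ref{lem:block3} then shows that $B[j\bmod 8]$ is absent.

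The only real obstacle, as in Block~3, is the presence direction. Every $v<B[j\bmod 8]$ (so $v\le 2$) that is supplied only by $\Phi(T_B)$ must have its Lemma~\ref{lem:xor-realize} witness $\rho=(T_B-v)/2$ inside the restricted range $\{j+2,\dots,k\}$ of zone~(B). I would verify this directly. The inequality $\rho\ge j+2$ is equivalent to $k-1-v\ge j$, which holds because $j\le k-5$ and $v\le 2$. The inequality $\rho\le k$ is equivalent to $j+3-v\le k$, which also holds. With both bounds in place, the mex equals $B[j\bmod 8]$ for all $j=0,\dots,8b-1$.
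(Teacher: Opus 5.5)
Your proposal is correct and follows the paper's proof essentially step for step. Your zones (A), (B), (D) are the paper's (P), (Q), (R), with the same reduction to the families $T_1=j+2$ and $T_2=k+3+j\equiv j+7\pmod 8$, the same reuse of the Block~3 table, and the same check that the witness $\rho=(T_2-v)/2$ lands in the restricted range. One harmless slip: zone (A) only gives $a=0,\ldots,j+1$, not up to $j+2$, so it is not the full $T_1$-family by itself; zone (D) is, exactly as the paper observes for (R), so both the absence and presence arguments go through unchanged.
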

 
\begin{proof}
Since $n=2k+5+j\le2k+4+8b=3k$ (using $8b=k-4$), every $i\in\{1,\ldots,n\}$ and its partner $n+1-i$ lie in $\{1,\ldots,3k\}\subset\{1,\ldots,3k+4\}$, so the three pieces of $g$ from Lemmas~\ref{lem:nimpb-small} and~\ref{lem:nimpb-small2} ($g(m)=m-1$ for $m\le k+1$, $g(m)=m-(k+2)$ for $k+2\le m\le2k+3$, $g(m)=m-(2k+3)$ for $2k+4\le m\le3k+4$) suffice.
 
One checks, using $0\le j\le8b-1=k-5$, that $i=1,\ldots,n$ splits into five consecutive, non-overlapping ranges:
\begin{itemize}[nosep]
\item[(P)] $i=1,\ldots,j+2$: here $n+1-i\ge2k+4$ (partner in the third piece);
\item[(Q)] $i=j+3,\ldots,k+1$: here $i\le k+1$ (first piece) and $n+1-i\in[k+2,2k+3]$ (second piece);
\item[(R)] $i=k+2,\ldots,k+4+j$: here both $i$ and $n+1-i$ lie in $[k+2,2k+3]$ (second piece);
\item[(Q$'$)] $i=k+5+j,\ldots,2k+3$: mirrors (Q) under $i\leftrightarrow n+1-i$ (same terms, by commutativity of $\xor$);
\item[(P$'$)] $i=2k+4,\ldots,n$: mirrors (P).
\end{itemize}
(Their sizes $j+2,\,k-j-1,\,j+3,\,k-j-1,\,j+2$ sum to $2k+5+j=n$.) Since (Q$'$) and (P$'$) contribute nothing beyond (Q) and (P), $\nimp{n}=\mex(\mathcal V_P\cup\mathcal V_Q\cup\mathcal V_R)$.
 
Zones P and R combine into a single full family. Writing $a=i-1$ on (P) (so $a=0,\ldots,j+1$): $g(i)=a$ and, since $n+1-i=n-a\ge2k+4$, $g(n+1-i)=(n-a)-(2k+3)=(j+2)-a$; the term is $a\xor((j+2)-a)$. Writing $a=i-(k+2)$ on (R) (so $a=0,\ldots,j+2$): $g(i)=a$ and $n+1-i=(k+4+j)-a\in[k+2,2k+3]$, so $g(n+1-i)=\bigl((k+4+j)-a\bigr)-(k+2)=(j+2)-a$; the term is again $a\xor((j+2)-a)$. Thus (P) and (R) are both sub-ranges of the same family $\{a\xor(T_1-a):a=0,\ldots,T_1\}$ with $T_1=j+2$, and (R) alone already covers it in full (as $a$ ranges over all of $0,\ldots,T_1$ there). Hence $\mathcal V_P\cup\mathcal V_R$ equals the full family of Lemma~\ref{lem:xor-realize} with $T=T_1=j+2$, so $(\mathcal V_P\cup\mathcal V_R)\cap\{0,1,2,3\}=\Phi(T_1)$, determined by \eqref{eq:Phi-table} via $T_1\bmod8=(j+2)\bmod8$.
 
On Zone Q, writing $a=i-1$ (so $a=j+2,\ldots,k$): $g(i)=a$, and since $n+1-i=n-a\in[k+2,2k+3]$, $g(n+1-i)=(n-a)-(k+2)=(k+3+j)-a$; the term is $a\xor((k+3+j)-a)$, the family of Lemma~\ref{lem:xor-realize} with $T_2=k+3+j$, restricted to $a\in\{j+2,\ldots,k\}$ instead of the full $\{0,\ldots,T_2\}$. Let $\mathcal V_Q^{\mathrm{full}}=\{a\xor(T_2-a):a=0,\ldots,T_2\}\supseteq\mathcal V_Q$; by Lemma~\ref{lem:xor-realize}, $\mathcal V_Q^{\mathrm{full}}\cap\{0,1,2,3\}=\Phi(T_2)$. Since $k\equiv4\pmod8$, $T_2=k+3+j\equiv(j+7)\pmod8$.
 
Combining modulo $8$: since $T_1\equiv j+2$ and $T_2\equiv j+7\pmod8$ are exactly as in the proof of Lemma~\ref{lem:block3} (with $j$ in the same role), \eqref{eq:Phi-table} gives the identical table of $\Phi(T_1)\cup\Phi(T_2)$ for $j\bmod8=0,\ldots,7$, and $\mex\bigl(\Phi(T_1)\cup\Phi(T_2)\bigr)=1,1,3,3,1,1,2,2$ respectively, exactly $B[j\bmod8]$.
 
As in Lemma~\ref{lem:block3}, since $\mathcal V_P\cup\mathcal V_Q\cup\mathcal V_R\subseteq(\mathcal V_P\cup\mathcal V_R)\cup\mathcal V_Q^{\mathrm{full}}$, the value $B[j\bmod8]$ is absent from the actual set; it remains to show every $v<B[j\bmod8]$ supplied only by $\Phi(T_2)$ (i.e., $v\in\{0,1,2\}$, as $B[j\bmod8]\le3$ always) has its witness $a=\rho=(T_2-v)/2$ inside the restricted range $\{j+2,\ldots,k\}$ of $\mathcal V_Q$. We bound
\[
\rho-(j+2) = \frac{(k+3+j-v)-2(j+2)}{2} = \frac{(k-1-j-v)}{2}\ \ge\ \frac{(k-1)-(k-5)-2}{2} = 1 > 0,
\]
using $v\le2$ and $j\le8b-1=k-5$. Hence $\rho\ge j+3$. Also
\[
k-\rho = k-\frac{k+3+j-v}{2} = \frac{k-3-j+v}{2}\ \ge\ \frac{k-3-(k-5)+0}{2} = 1 > 0,
\]
using $v\ge0$ and again $j\le k-5$. Hence $\rho\le k-1$. So $\rho\in\{j+3,\ldots,k-1\}\subset\{j+2,\ldots,k\}$, as required.
 
Hence every needed value $v<B[j\bmod8]$ is present in $\mathcal V_P\cup\mathcal V_Q\cup\mathcal V_R$, and $B[j\bmod8]$ is absent, so $\mex(\mathcal V_P\cup\mathcal V_Q\cup\mathcal V_R)=B[j\bmod8]$ for every $j=0,\ldots,8b-1$, which is precisely $\nimp{n}$ for $n=2k+5+j$.
\end{proof}
 
\begin{lemma}[Block 6]\label{lem:block6}
$\nimp{3k+1}=\nimp{3k+2}=1$, $\nimp{3k+3}=3$, $\nimp{3k+4}=0$, and $\nimp{3k+5}=2$.
\end{lemma}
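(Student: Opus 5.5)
The plan is to run the same zone decomposition as in Lemmas~\ref{lem:block4} and~\ref{lem:block5}, for each of the five values $n=3k+1,\ldots,3k+5$. I will use the symmetric form $\nimp{n}=\mex\{\nimpb{i}\xor\nimpb{n+1-i}:i=1,\ldots,n\}$ together with the piecewise description $\nimpb{}\equiv g$ from Lemmas~\ref{lem:nimpb-small} and~\ref{lem:nimpb-small2}:
\begin{itemize}[nosep]
\item $g(m)=m-1$ for $m\le k+1$;
\item $g(m)=m-(k+2)$ for $k+2\le m\le 2k+3$;
\item $g(m)=m-(2k+3)$ for $2k+4\le m\le 3k+4$;
\item $g(m)=m-(3k+4)$ for $3k+5\le m\le 4k+3$.
\end{itemize}
The fourth piece is touched only at $n=3k+5$, through the single index $i=3k+5$ (and its mirror $i=1$), giving the term $0\xor 1=1$.

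Write $n=2k+5+j$ with $j\in\{k-4,\ldots,k\}$, so $j$ has just moved past the range of Block~5. As there, I split $i\le\lceil n/2\rceil$ into three zones.
\begin{itemize}[nosep]
\item Zone (P): $i\le j+2$, first piece against third piece. It contributes $a\xor((j+2)-a)$.
\item Zone (R): both indices in the second piece. It contributes the same family $a\xor((j+2)-a)$, but now possibly truncated, because $j+2$ may exceed $k+1$.
\item Zone (Q): $i\in[j+3,k+1]$, first piece against second piece. It contributes $a\xor((k+3+j)-a)$ for $a=j+2,\ldots,k$, which is now only $0$ to $3$ terms.
\end{itemize}
Since $j\ge k-4$, zone (Q) is tiny and I will simply list its terms explicitly rather than use $\Phi$. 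For $n=3k+1$ these are $(k-2)\xor(k+1)$ and $(k-1)\xor k=7$; for $n=3k+2$ they are $(k-1)\xor(k+1)$ and $k\xor k$; for $n\ge 3k+3$ the zone is empty. All of these can be evaluated using $k\equiv 4\pmod 8$ and the fact that $k$ has $2$-adic valuation exactly $2$.

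For the combined zones (P)$\cup$(R), I will check that they still cover the full family $\{a\xor(T-a):0\le a\le T\}$ with $T=j+2$. Where the truncation at $k+1$ or $2k+3$ cuts it off, I will handle the few missing or boundary indices by hand. In particular I must watch the pairs involving $i=k+2$ (where $g=0$, coming from the exception $\nimpb{k+2}=0$ of Lemma~\ref{lem:nimpb-nu}) and $i=2k+3$ (where $g=k+1$). Then $\Phi(T)$ from \eqref{eq:Phi-table} with $T\equiv j+2\pmod 8$ gives the low values, namely $T\equiv 2,3,4,5,6$ for $n=3k+1,\ldots,3k+5$:
\begin{itemize}[nosep]
\item $n=3k+1$: $\{0,2\}$, so the $\mex$ is $1$.
\item $n=3k+2$: $\{3\}$, plus the self-pair $0$ from the zone (Q) term $k\xor k$, so the $\mex$ is $1$.
\item $n=3k+3$: $\{0,2\}$, plus the term $1$ produced by a boundary pair, so the $\mex$ is $3$.
\item $n=3k+4$: $\{1\}$, and $0$ must be shown absent.
\item $n=3k+5$: $\{0\}$, plus the term $1$ from $i=1$, so the $\mex$ is $2$.
\end{itemize}
For each listed value below the claimed $\mex$, I will exhibit an explicit witness pair, with the witness $a=\rho=(T-v)/2$ of Lemma~\ref{lem:xor-realize} checked to lie in the admissible range. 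For the claimed $\mex$ itself, I will show it is absent from the $\Phi$-table and from every explicitly listed boundary term.

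The main obstacle is $n=3k+4$, where Bob wins, so I must prove that the value $0$ is never produced. Here $n$ is even, so there is no self-pair. The families (P) and (R) have $T=k+1$, which is odd, so they give only odd values. It therefore remains only to check the boundary pairs where the pieces of $g$ switch: the pair $(k+2,2k+3)$ gives $0\xor(k+1)=k+1$, and the pair $(k+1,2k+4)$ gives $k\xor 1=k+1$. I expect to verify by parity, using $(\ast)$, that every such pair gives a nonzero (indeed odd) value, since $g(i)+g(n+1-i)$ is always congruent to $k+1$ or $k+1\pm(k+1)$ with odd total. This careful enumeration of the piece-boundary pairs is the delicate part; the other four values follow mechanically from the table.
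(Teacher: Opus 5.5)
Your plan is correct and is essentially the paper's proof: for each $n$ the first-versus-third and second-versus-second pairs merge into the single family $a\oplus(T-a)$ with $T=n-2k-3$, whose values in $\{0,1,2,3\}$ are read off from the $\Phi$-table, and the few leftover boundary pairs are computed by hand. Two small corrections. First, zone (Q) is not empty at $n=3k+3$: it consists of exactly the pair $(k+1,2k+3)$, which is the ``boundary pair'' supplying your needed term $k\oplus(k+1)=1$. Second, at $n=3k+4$ every pair satisfies $g(i)+g(n+1-i)=k+1$ exactly, and the ``$k+1\pm(k+1)$'' alternatives never arise, so the absence of $0$ follows immediately from $(\ast)$.
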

 
\begin{proof}
We use $\nimp{n}=\mex\{\nimpb{i}\xor\nimpb{n+1-i}:i=1,\ldots,n\}$ and the four pieces of $g$ from Lemmas~\ref{lem:nimpb-small} and~\ref{lem:nimpb-small2}: $g(m)=m-1$ for $m\le k+1$; $g(m)=m-(k+2)$ for $k+2\le m\le2k+3$; $g(m)=m-(2k+3)$ for $2k+4\le m\le3k+4$; $g(m)=m-(3k+4)$ for $3k+5\le m\le4k+3$; so $\nimpb{}\equiv g$ throughout. For $n=3k+1,\ldots,3k+4$ every index and its partner lie in $\{1,\ldots,3k+4\}$, so only the first three pieces are needed; for $n=3k+5$ the two extra indices $i=1,3k+5$ (mirrors of each other) touch the fourth piece, $g(3k+5)=1$. Throughout, $k\equiv4\pmod8$ ($h\ge3$), and we write $k=8m+4$ for the boundary computations.
 
For $n=3k+1$: no pair has both indices $\le k+1$ (would force $i\ge2k+1$) nor both $\ge2k+4$ (would force $i\le k-2$, contradicting $i\ge2k+4$). For $i=1,\ldots,k-2$ (partner in the third piece, $a=i-1=0,\ldots,k-3$) and $i=k+2,\ldots,2k$ (partner in the second piece, $a=i-(k+2)=0,\ldots,k-2$), both give the term $a\xor((k-2)-a)$; together they realize the full family of Lemma~\ref{lem:xor-realize} with $T=k-2\equiv2\pmod8$, so it meets $\{0,1,2,3\}$ in $\Phi(T)=\{0,2\}$. The remaining indices $i=k-1,k,k+1$ give the terms $(k-2)\xor(k+1)$, $(k-1)\xor k$, $k\xor(k-1)$: writing $k=8m+4$, the pairs $(k-2,k+1)=(8m+2,8m+5)$ and $(k-1,k)=(8m+3,8m+4)$ each differ only in their low $3$ bits ($010\xor101=111$ and $011\xor100=111$), so all three terms equal $7$. Hence the value set meets $\{0,1,2,3\}$ in $\{0,2,7\}\cap\{0,1,2,3\}=\{0,2\}$: $1$ is absent, so $\nimp{3k+1}=1$.
 
For $n=3k+2$: for $i=1,\ldots,k-1$ (partner in the third piece, $a=i-1=0,\ldots,k-2$) and $i=k+2,\ldots,2k+1$ (partner in the second piece, $a=i-(k+2)=0,\ldots,k-1$), both give the term $a\xor((k-1)-a)$, together realizing the full family with $T=k-1\equiv3\pmod8$, so $\Phi(T)=\{3\}$. The remaining indices $i=k,k+1$ give $(k-1)\xor(k+1)$ and $k\xor k$: writing $k=8m+4$, $(k-1,k+1)=(8m+3,8m+5)$ differ only in their low $3$ bits ($011\xor101=110=6$), and $k\xor k=0$. Hence the value set meets $\{0,1,2,3\}$ in $\{3,6,0\}\cap\{0,1,2,3\}=\{0,3\}$: $1$ is absent, so $\nimp{3k+2}=1$.
 
For $n=3k+3$: for $i=1,\ldots,k$ (partner in the third piece, $a=i-1=0,\ldots,k-1$) and $i=k+2,\ldots,2k+2$ (partner in the second piece, $a=i-(k+2)=0,\ldots,k$), both give the term $a\xor(k-a)$, together realizing the full family with $T=k\equiv4\pmod8$, so $\Phi(T)=\{0,2\}$. The remaining index $i=k+1$ gives $k\xor(k+1)=1$ ($k$ even, so $k$ and $k+1$ differ only in the last bit). Hence the value set meets $\{0,1,2,3\}$ in $\{0,2,1\}$: $3$ is absent, so $\nimp{3k+3}=3$.
 
For $n=3k+4$: every index lies in one of $i=1,\ldots,k+1$ (partner in the third piece), $i=k+2,\ldots,2k+3$ (partner in the second piece -- this range pairs entirely with itself), or its mirror $i=2k+4,\ldots,3k+4$; these sizes $(k+1)+(k+2)+(k+1)=3k+4=n$ leave no index over. Writing $a=i-1=0,\ldots,k$ on the first range and $a=i-(k+2)=0,\ldots,k+1$ on the second, both give the term $a\xor((k+1)-a)$: the full family with $T=k+1\equiv5\pmod8$, so $\Phi(T)=\{1\}$. Hence the value set meets $\{0,1,2,3\}$ in $\{1\}$: $0$ is absent, so $\nimp{3k+4}=0$.
 
For $n=3k+5$: the pair $i=1,j=3k+5$ contributes $g(1)\xor g(3k+5)=0\xor1=1$; the pair $i=k+2,j=2k+4$ contributes $g(k+2)\xor g(2k+4)=0\xor1=1$ (the first point of the second piece against the first point of the third). For $i=2,\ldots,k+1$ (partner in the third piece, $a=i-1=1,\ldots,k$) and $i=k+3,\ldots,2k+3$ (partner in the second piece, $a=i-(k+2)=1,\ldots,k+1$), both give the term $a\xor((k+2)-a)$: the family of Lemma~\ref{lem:xor-realize} with $T=k+2\equiv6\pmod8$, restricted to $a\in\{1,\ldots,k+1\}$ instead of the full $\{0,\ldots,k+2\}$. By \eqref{eq:Phi-table}, the full family meets $\{0,1,2,3\}$ only in $\Phi(T)=\{0\}$, so the restricted family meets it in a subset of $\{0\}$; and $0$ does occur, since Lemma~\ref{lem:xor-realize}'s witness $a=\rho=(T-0)/2=k/2+1$ lies in $\{1,\ldots,k+1\}$ (as $k\ge4$). Collecting all pairs (the remaining indices mirror those already listed), the value set meets $\{0,1,2,3\}$ in $\{0,1\}$: $2$ is absent, so $\nimp{3k+5}=2$.
\end{proof}
 
\begin{lemma}[Periodic tail]\label{lem:tail}
Let $k=2^h-4$ ($h\ge3$) and $b=(k-4)/8$. For every $n\ge3k+6$, write
\[
n\ =\ 3k+6+\ell(k+1)+P,\qquad \ell\ge0,\ 0\le P\le k,
\]
and set $W(P)=B[P\bmod8]$ if $0\le P\le8b-1$, $W(P)=C[P-8b]$ if $8b\le P\le k$, where $B=(1,1,3,3,1,1,2,2)$ and $C=(1,1,3,2,3)$. Then $\nimp{n}=W(P)$.
\end{lemma}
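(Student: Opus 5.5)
We compute $\nimp{n}$ directly from the symmetric recurrence $\nimp{n}=\mex\{\nimpb{i}\xor\nimpb{n+1-i}:i=1,\ldots,n\}$, using the uniform description of Lemma~\ref{lem:nimpb-nu}: $\nimpb{m}=\nu(m)$ except at the three points $m\in\{1,k+2,4k+4\}$, where $\nimpb{m}=0$. Since $\nu(m)=((m-2)\bmod(k+1))+1$, the generic pairs satisfy $\nu(i)-1\equiv i-2$ and $\nu(n+1-i)-1\equiv n-1-i \pmod{k+1}$. Writing $x=\nu(i)-1$ and $y=\nu(n+1-i)-1$, both in $\{0,\ldots,k\}$, we get $x+y\equiv n-3\pmod{k+1}$. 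Since $n-3\equiv 3k+3+P\equiv P\pmod{k+1}$, each generic term is $(x+1)\xor(y+1)$ with $x+y\in\{P,\,P+k+1\}$. So each generic term is $a\xor(T-a)$ with $a=x+1$ and $T\in\{T_1,T_2\}$, where $T_1=P+2$ and $T_2=P+k+3$; here $a$ ranges over a window $1\le a\le k+1$ with $1\le T-a\le k+1$. Thus the generic part of the value set is contained in $\Phi(T_1)\cup\Phi(T_2)$ when intersected with $\{0,1,2,3\}$. Because $n\ge3k+6$, as $i$ runs over $1,\ldots,n$ the pair $(x,y)$ sweeps every admissible split of both sums; each full period of $i$ realizes all $x$ once. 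So I expect both restricted families to be realized on their full admissible windows.

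The first step is the generic computation, mirroring Lemmas~\ref{lem:block3} and~\ref{lem:block5}. We have $T_1\equiv P+2$ and $T_2\equiv P+7\pmod 8$, using $k\equiv4\pmod8$. For $0\le P\le 8b-1$ we therefore recover exactly the table from Lemma~\ref{lem:block3}, and $\mex$ gives $B[P\bmod 8]$. For the five values $P=8b+t$, $t=0,\ldots,4$ (so $P=k-4,\ldots,k$), the same table gives candidates, but the windows now truncate. Here $T_1=k-2+t$, so $a=T_1-a$ is constrained by $a,T_1-a\le k+1$, which is harmless. On the other hand $T_2=2k-1+t$ forces $a\ge k-2+t$, leaving only a handful of pairs near $a\approx k$. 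These few terms I would evaluate by hand, writing $k=8m+4$ as in Lemma~\ref{lem:block6}. Each such term is an xor of two numbers in $\{k-2,\ldots,k+1\}$ and so lies in $\{0,1,2,3,6,7\}$. Combined with $\Phi(T_1)$, this should produce $C=(1,1,3,2,3)$. For the witness condition in the range $P\le 8b-1$, I would reuse the bound on $\rho=(T-v)/2$ from Lemma~\ref{lem:block3}: I need $\rho$ and $T-\rho$ in $\{1,\ldots,k+1\}$ and the corresponding index actually attained for $i\le n$.

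The second step handles the exceptional indices. These are $i\in\{1,k+2,4k+4\}$ and their mirrors $n+1-i$; they replace $\nu$ by $0$ and inject the extra values $\nimpb{n}$, $\nimpb{n-k-1}$ and $\nimpb{n-4k-3}$ (each $\nu$-value, possibly itself exceptional). I would also account for the fact that the generic pair at those positions is absent. For each I must show two things. First, the injected value is not $W(P)$. Second, removing the generic term there does not delete a needed small value; this holds because the needed witnesses $a=\rho$ sit in the interior, away from the boundary $a\in\{k+1\}$, and each needed value appears at least twice across periods once $\ell\ge0$ and $n\ge3k+6$. The injected values $\nu(n)-0$ etc. equal $\nu$ of indices congruent to $P+1$, $P+1$, and $P+2$ modulo $k+1$, namely roughly $P$ or $P+1$. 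I would check case by case whether these can equal $W(P)\in\{1,2,3\}$. This can only happen for $P\le 2$; in those cases the required value of $W$ already exceeds the injected value or matches it, and I must check consistency directly. For example, at $P=0$ an injected $1$ would contradict $W=1$, so the exceptional index must be shown unreachable, or the value must be computed correctly.

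I expect this exceptional-index bookkeeping at small $P$ and at the $C$-block boundary to be the main obstacle. The point $4k+4$ with value $0$ instead of $k$ is also where $\ell=0$ behaves differently from $\ell\ge1$, so I would first treat $\ell\ge1$ uniformly. After that I would verify $\ell=0$ (that is, $n\le 4k+6$) separately, exactly as in Block~6.
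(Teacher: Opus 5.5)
Your reduction matches the paper's. Every term whose two indices avoid $\{1,k+2,4k+4\}$ has the form $a\xor(T-a)$ with $a,T-a\in\{1,\ldots,k+1\}$ and $T\in\{T_1,T_2\}=\{P+2,\,P+k+3\}$. The mod-$8$ table for $\Phi$ then settles every $P\le k-1$ essentially as you describe. (Working with $\nu$ on all of $1,\ldots,n$ is, if anything, cleaner than the paper's Zones A and B, which for $\ell\ge1$ do not by themselves exhaust the indices.)

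\textbf{Gap at $P=k$, i.e.\ at $C[4]=3$.} There $T_2=2k+3$ admits no split, since both $\nu$-values are at most $k+1$. Also $T_1=k+2\equiv6\pmod 8$ gives $\Phi(T_1)=\{0\}$. So the generic terms meet $\{0,1,2,3\}$ only in $\{0\}$, and your assertion that the generic terms ``combined with $\Phi(T_1)$'' produce $C$ gives $\mex=1$ there, not $3$. The values $1$ and $2$ come \emph{only} from the exceptional indices:
\begin{itemize}
\item $1$ comes from the pairs $(1,n)$ and $(k+2,n-k-1)$, where $\nimpb{1}=\nimpb{k+2}=0$ meets $\nu(n)=1$;
\item $2$ comes only from the pair $(\ell(k+1)+3,\,4k+4)$, since $\nu(\ell(k+1)+3)=2$ and $\nimpb{4k+4}=0$.
\end{itemize}
Your second step treats exceptions purely as potential spoilers (the injected value is not $W(P)$; the deleted generic term is not needed). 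An argument of that kind can never raise the $\mex$. The missing idea is that at $P=k$ the anomaly $\nimpb{4k+4}=0$ is indispensable, for every $\ell\ge0$. This is exactly how the paper obtains $C[4]$. One then still checks that no pair gives $3$: the generic terms are even, and the exceptional ones are $1,1,2$.

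\textbf{Exceptional bookkeeping.} Since $n\equiv P+3$ and $n-4k-3\equiv P+4\pmod{k+1}$, the injected values are not ``roughly $P$ or $P+1$''. They are:
\begin{itemize}
\item $\nu(n)=\nu(n-k-1)=P+2$ for $P\le k-1$, and $1$ at $P=k$;
\item when $n\ge4k+4$, $\nu(n-4k-3)=P+3$ for $P\le k-2$, $1$ at $P=k-1$, and $2$ at $P=k$;
\item $0$ whenever the other index is itself exceptional.
\end{itemize}
With these values no injection ever equals $W(P)$, so the collision you fear at $P=0$ does not arise.

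\textbf{Witnesses at $P=k-1$.} Your claim that the needed witnesses avoid the boundary value $k+1$ and occur at least twice is false here. The value $0$ comes only from $\nu$-pairs $(k+1,k+1)$. The copies at $i=1$ and $i=k+2$ are destroyed by the exceptions, and for $\ell=0$ the only survivor is the self-pair $i=2k+3$. So you must verify directly, as the paper does, that this witness survives. It does, because the anomalous index $n-4k-3\equiv2\pmod{k+1}$ never lands on it.
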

 
\begin{proof}
Write $g:=\nimpb{}$. By Lemma~\ref{lem:nimpb-nu}, $g(m)=\nu(m)$ for every $m\ge1$ except at the three points $m\in\{1,k+2,4k+4\}$, where $g(m)=0$. We use $\nimp{n}=\mex\{g(i)\xor g(n+1-i):i=1,\ldots,n\}$.
 
We first set up two zones and compute the generic term on each. Split off Zone A ($i=1+a$, $a=0,\ldots,k$) and Zone B ($i=k+2+a'$, $a'=0,\ldots,k+1$), on which $g(i)=a$, resp.\ $a'$, directly (no exception, since these are exactly the defining formulas of $g$ on $1,\ldots,2k+3$). For $i$ in Zone A or B, the partner $j:=n+1-i$ satisfies $j\ge n-2k-2\ge k+4$ (as $n\ge3k+6$), so $j\notin\{1,k+2\}$; hence $g(j)=\nu(j)$ unless $j=4k+4$.
 
Fix such an $i=1+a$ in Zone A (Zone B is identical with $a$ replaced by $a'$) and suppose $j\ne4k+4$. Since $n-3=3(k+1)+\ell(k+1)+P$, we get $j-2=n-a-3\equiv 1+P-a\pmod{k+1}$, so $\nu(j)=r+1$ where $r=(1+P-a)\bmod(k+1)$, and the term is $a\xor(r+1)$. Three cases arise:
\begin{itemize}[nosep]
\item[(i)] $\max(0,P+1-k)\le a\le\min(k,P+1)$: $r=P+1-a$, term $=a\xor(T_1-a)$, $T_1:=P+2$;
\item[(ii)] $a>P+1$: $r=P+2-a+k$, term $=a\xor(T_2-a)$, $T_2:=P+k+3$;
\item[(iii)] $a<P+1-k$ (only possible when $P=k$, forcing $a=0$): $r=P-k-a$, term $=a\xor(T_H-a)$, $T_H:=1$.
\end{itemize}
Applied to both zones: for $P\le k-1$, case (i) gives $T_1$ for $a=0,\ldots,P+1$ in both zones, and case (ii) gives $T_2$ for $a=P+2,\ldots,k$ (Zone A) and $a'=P+2,\ldots,k+1$ (Zone B). For $P=k$, case (i) gives $T_1=k+2$ for $a=1,\ldots,k+1$ in both zones, and case (iii) gives the pair $a=0$ (resp.\ $a'=0$) of value $0\xor1=1$ in each zone. Since $k\equiv4\pmod8$ for every $h\ge3$, $T_1\equiv P+2\pmod8$ and $T_2\equiv P+7\pmod8$.
 
The anomaly at $j=4k+4$ cannot damage a value that has a doubly covered witness. Since $j=4k+4$ has a unique preimage $i=n-4k-3$ among $1,\ldots,n$, at most one index overall is affected by this exception; in particular, for a given $a$-value present in both zones (i.e.\ realized by two different actual indices, $1+a$ in Zone A and $k+2+a$ in Zone B), at most one of the two copies can be affected, and the other still supplies the term $a\xor(T-a)$ via $\nu$. By the displayed ranges, this applies to every $a=0,\ldots,P+1$ for $T_1$ (identical ranges in both zones), and to every $a=P+2,\ldots,k$ for $T_2$ (Zone A's range is a subset of Zone B's). Consequently, for every $v\in\{0,1,2,3\}$ whose Lemma~\ref{lem:xor-realize} witness $a=\rho=(T_1-v)/2$ lies in $\{0,\ldots,P+1\}$, or whose witness $a=\rho=(T_2-v)/2$ lies in $\{P+2,\ldots,k\}$, the value $v$ is realized, anomaly or not.
 
For $T_1$: $\rho\le P+1\iff -v\le P$, always true; so every $v$ realizable via the full family of $T_1$ is realized by an uncorrupted witness (the only point of the full family we do not reach at all, $a=T_1$ itself, gives $v=T_1=P+2\ge2$, and is only relevant when $P\le1$, where the target $W(P)=1<2\le T_1$ never needs it; see the generic computation below).
 
For $T_2$: $\rho\ge P+2\iff P\le k-1-v$ is the usual margin condition for $v$ to be realizable via $T_2$ at all (inside either zone), while $\rho\le k\iff P\le k-3+v$ decides whether the witness lands in Zone A's (doubly covered) range or only in Zone B's single extra point $a'=k+1$ (needing $v=P-k+1\ge0$, i.e.\ $P\ge k-1$). For $v=0,1,2$, the bound $\rho\le k$ reads $P\le k-3,k-2,k-1$ respectively; so once $P\le k-5$ (the generic range below), all three hold with room to spare. When $P\ge k-4$, we verify $\rho\le k$ directly for each value actually needed from $T_2$ in the explicit computations that follow (this never fails except once, at $v=0$, $P=k-1$, handled there by checking the single Zone-B point directly rather than relying on double coverage).
 
The anomaly can add at most one stray value, and it is never equal to the target. Besides possibly failing to confirm an otherwise-needed witness (ruled out above), the single index affected by $j=4k+4$, if, for the given $\ell$, it falls inside Zone A or B at all, contributes an actual value that may differ from what the generic $\nu$-based computation assumed, and this value was not accounted for in $\Phi(T_1)\cup\Phi(T_2)$. Solving $n+1-i^*=4k+4$ directly gives $i^*=\ell(k+1)+P-k+3$ (a single index, in Zone A if $i^*\le k+1$ and Zone B if $i^*\ge k+2$; it cannot be in both). Writing $a^*=i^*-1$ (Zone A) or $a^*=i^*-(k+2)$ (Zone B) -- these agree modulo $k+1$, as $k+2\equiv1\pmod{k+1}$ -- the true contributed term is $g(i^*)\xor g(4k+4)=a^*\xor0=a^*$, and $a^*\equiv i^*-1=\ell(k+1)+P-k+2\equiv P+3\pmod{k+1}$, so $a^*=(P+3)\bmod(k+1)$. For $1\le P\le k-3$, $a^*=P+3\ge4$ (no wraparound), hence outside $\{0,1,2,3\}$ and harmless. At the three remaining values of $P$ where $a^*\le3$, we check directly that $a^*\ne W(P)$: $P=0$ gives $a^*=3\ne1=W(0)$; $P=k-2$ gives $a^*=0\ne3=W(k-2)$; $P=k-1$ gives $a^*=1\ne2=W(k-1)$. (At $P=k$, the analogous point is not stray at all: it is exactly the witness used on purpose for the value $2$, treated on its own below.)
 
In the generic range $P=0,\ldots,8b-1$: using \eqref{eq:Phi-table} with $T_1\equiv P+2$, $T_2\equiv P+7\pmod8$, exactly as in Lemmas~\ref{lem:block3} and~\ref{lem:block5}, gives $\mex(\Phi(T_1)\cup\Phi(T_2))=B[P\bmod8]$ for every residue, using only values $v\in\{0,1,2\}$. Since $P\le8b-1=k-5$, the margin bound $\rho=(T_2-v)/2\ge P+2\iff P\le k-1-v$ holds with slack $\ge2$ for every $v\le2$, and likewise $\rho\le k$ holds with slack; by the coverage argument above every needed value is realized (via $T_1$ unconditionally, or via $T_2$ with an uncorrupted witness), and $B[P\bmod8]$ itself is absent (it is absent already from the full, unrestricted families, hence from our sub-families). Hence $\nimp{n}=B[P\bmod8]=W(P)$.
 
We now check the four explicit values $P=8b,8b+1,8b+2,8b+3$, i.e.\ $C[0],\ldots,C[3]$. Write $P=8b+c$, $c=0,1,2,3$; then $T_1\equiv c+2$, $T_2\equiv c+7\pmod8$ (as $8b\equiv0\pmod8$), the same residues as $j\bmod8=c$ in the table of Lemma~\ref{lem:block3}.
 
For $c=0$: $\Phi(T_1)=\{0,2\}$, $\Phi(T_2)=\{\}$; only $T_1$ is needed (unconditionally safe), so $\nimp{n}=\mex\{0,2\}=1=C[0]$.
 
For $c=1$: $\Phi(T_1)=\{3\}$, $\Phi(T_2)=\{0\}$; the value $0$ needs $T_2$'s witness $\rho=T_2/2=(8b+1+k+3)/2$, and $\rho\le k\iff8b+1+k+3\le2k\iff8b+4\le k$, true (equality, as $k=8b+4$); the value $0$ is genuine. So $\nimp{n}=\mex\{0,3\}=1=C[1]$.
 
For $c=2$ (i.e.\ $P=k-2$): $\Phi(T_1)=\{0,2\}$, $\Phi(T_2)=\{1\}$; the value $1$ needs $T_2$'s witness $\rho=(T_2-1)/2=(P+k+2)/2=(2k)/2=k$, exactly $k$, i.e.\ Zone A's own top value, realized directly (no coverage argument needed at all, as $\rho=k\le k$). So $\nimp{n}=\mex\{0,1,2\}=3=C[2]$.
 
For $c=3$ (i.e.\ $P=k-1$): $\Phi(T_1)=\{1\}$, and the full family of $T_2$ gives $\Phi(T_2)=\{0,2\}$; but here $P=k-1$ exceeds the safe margin $P\le k-2-v$ needed for $\rho\le k$ when $v=0$: indeed $\rho=(T_2-0)/2=(2k+2)/2=k+1$, Zone B's unshared point. We check directly that this point is never the one affected by the anomaly: it is realized by $i=2k+3$ (Zone B, $a'=k+1$), whose partner is $j=n+1-i=n-2k-2$; setting $j=4k+4$ gives $n=6k+6$, i.e.\ $\ell(k+1)+P=3k$, i.e.\ $\ell(k+1)=3k-(k-1)=2k+1=2(k+1)-1$, never an integer multiple of $k+1$. So this witness is always uncorrupted, and $0$ is genuine. For $v=2$ (the other member of $\Phi(T_2)$), the margin gives $\rho=(T_2-2)/2=k$, but we do not even need it: $\rho\ge P+2=k+1$ is required for the witness to lie in $\mathcal V$'s stated range at all (Zone A's range for $T_2$ is $a=P+2,\ldots,k=k+1,\ldots,k$, empty; Zone B's is $a'=k+1,\ldots,k+1$, a single point, namely $\rho=k+1$ itself for $v=0$, already used), so $2$ is in fact not realized via $T_2$ here at all, consistently with it not being needed. Hence the achieved subset of $\{0,1,2,3\}$ is exactly $\{0,1\}$ (from $T_1$'s $\{1\}$ and $T_2$'s witness for $0$), so $\nimp{n}=\mex\{0,1\}=2=C[3]$.
 
Finally, the point $P=k$, i.e.\ $C[4]$: by the generic computation above (case (i) with $P=k$), $T_1=k+2\equiv6\pmod8$ is realized, via both zones, for $a=1,\ldots,k+1$; by the coverage argument this range is fully doubly covered (both zones give exactly $a=1,\ldots,k+1$), so $\Phi(T_1)=\{0\}$ is realized (self-pair $a=\rho=(k+2)/2$, which lies in $\{1,\ldots,k+1\}$) and nothing else in $\{0,1,2,3\}$ comes from $T_1$. Case (iii) contributes the pair $a=0$ (Zone A) or $a'=0$ (Zone B), each of value $0\xor1=1$ (as computed above, $j=n$ in Zone A's case, and $j=n-k-1$ in Zone B's, neither ever equal to $4k+4$ since both exceed it for every $\ell\ge0$); so $1$ is realized.
 
For the value $2$: neither $T_1$ nor case (iii) supplies it, so it must come from the anomaly itself. The index $i_0:=\ell(k+1)+3$ satisfies, by construction, $n+1-i_0=3k+4+\ell(k+1)+P-i_0=4k+4$ (using $P=k$), so its partner is exactly the exceptional point. Since $i_0=\ell(k+1)+3\equiv3\pmod{k+1}$: $i_0=1$ would force $k+1$ to divide $2$; $i_0=k+2\equiv1\pmod{k+1}$ would force $k+1$ to divide $2$ as well (as $1\ne3\pmod{k+1}$ unless $k+1\mid2$); and $i_0=4k+4\equiv0\pmod{k+1}$ would force $k+1$ to divide $3$, all impossible once $k+1\ge5$, i.e.\ $k\ge4$. So $i_0\notin\{1,k+2,4k+4\}$ and $g(i_0)=\nu(i_0)=2$. Hence the pair $(i_0,4k+4)$ contributes $g(i_0)\xor g(4k+4)=2\xor0=2$, and $2$ is realized.
 
Finally, $3$ is never realized: it is not in $\Phi(T_1)=\{0\}$, so no witness of the (doubly covered) $T_1$-family produces it; the two case-(iii) pairs and the single pair $(i_0,4k+4)$ are three individual, explicitly computed terms ($1,1,2$), none equal to $3$; and no other index remains unaccounted for (Zone A and Zone B, together with their mirrors, exhaust $1,\ldots,n$ once $n\ge3k+6$, exactly as in Lemma~\ref{lem:block6}'s $n=3k+4$ case). Hence $\nimp{n}=\mex\{0,1,2\}=3=C[4]$.
\end{proof}
 
\begin{theorem}
\label{thm:Pn}
Let $k=2^h-4$ ($h\ge3$). Then the nimber sequence $\nimp{n}$ is periodic with period $k+1$, the transient has length $3k+5$, the only value of $n$ for which $\nimp{n}=0$ with $n>k$ is $n=3k+4$, and the sequence has the following expression:
\begin{align*}
\nimp{n}:&\ \ \ \ \ \ [1,0]^{k/2}, 1, 2, 2, B^b, 1, 1, 1, 2, 2, B^b, 1, 1, 3, 0, 2, [B^b, C]^*,
\end{align*}
where $b=(k-4)/8$, $B=11331122$ and $C=11323$.
\end{theorem}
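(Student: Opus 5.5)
The plan is essentially to assemble the block lemmas already proved. The index ranges of Blocks 1--6 in Table~\ref{tab:blocks} tile $\{1,\ldots,3k+5\}$ consecutively. Using $8b=k-4$, their lengths are $k$, $3$, $k-4$, $5$, $k-4$ and $5$, which sum to $3k+5$. Lemmas~\ref{lem:block1}, \ref{lem:block2}, \ref{lem:block3}, \ref{lem:block4}, \ref{lem:block5} and~\ref{lem:block6} then give, block by block, exactly the prefix $[1,0]^{k/2},1,2,2,B^b,1,1,1,2,2,B^b,1,1,3,0,2$ of the displayed expression. For every $n\ge3k+6$, Lemma~\ref{lem:tail} gives $\nimp{n}=W(P)$ with $P=(n-3k-6)\bmod(k+1)$. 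As $P$ runs over $0,\ldots,k$, the word $W(0)\cdots W(k)$ is $B[0\bmod 8]\cdots B[(8b-1)\bmod 8]$ followed by $C[0]\cdots C[4]$. This is precisely $B^bC$, of length $8b+5=k+1$, so the tail is $[B^b,C]^*$ and the whole expression is established.

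Periodicity with period $k+1$ for $n\ge3k+6$ is immediate, since $W$ depends only on $(n-3k-6)\bmod(k+1)$.

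To justify that the transient has length exactly $3k+5$, I will show $\nimp{3k+5}\ne\nimp{3k+5+(k+1)}$. By Lemma~\ref{lem:block6} the left side is $2$. The right side is $\nimp{4k+6}$, where $P=k$, so it equals $W(k)=C[4]=3$. Hence the periodic regime cannot start at $n=3k+5$ or earlier, because then $n=3k+5$ would already be inside the period. I expect the only bookkeeping subtlety to lie here: "transient length" should mean the least $n_0$ such that $\nimp{n+k+1}=\nimp{n}$ for all $n>n_0$. The single witness above settles minimality, while Lemma~\ref{lem:tail} gives sufficiency.

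Finally, for the zero set I read off the values. In Blocks 2--5 all values lie in $\{1,2,3\}$, because $B$ has no zero. In Block~6 the only zero is at $n=3k+4$. In the tail, neither $B$ nor $C$ contains $0$. Block~1 covers $n\le k$, which is excluded. Hence $n=3k+4$ is the unique $n>k$ with $\nimp{n}=0$, and by Theorem~\ref{thm-sprague} this is the only $n>k$ for which Bob wins $\CHG_{\ell_k}$ on $P_n$. There is no real obstacle, since all the substantive work is already contained in the block lemmas.
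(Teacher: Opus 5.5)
Your proposal is correct and is essentially the paper's proof: you concatenate Lemmas~\ref{lem:block1}--\ref{lem:block6} for $n\le 3k+5$ and Lemma~\ref{lem:tail} for $n\ge 3k+6$, reading the tail word $W(0)\cdots W(k)$ as $B^bC$. Your explicit witness $\nimp{3k+5}=2\neq 3=\nimp{4k+6}$, which shows the transient is exactly $3k+5$, and your read-off of the zero set are small, valid additions that the paper leaves implicit.
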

 
\begin{proof}
Lemmas~\ref{lem:block1}, \ref{lem:block2}, \ref{lem:block3}, \ref{lem:block4}, \ref{lem:block5} and \ref{lem:block6} establish the theorem for $n=1,\ldots,3k+5$, i.e., for all six blocks $[1,0]^{k/2},\,1,2,2,\,B^b,\,1,1,1,2,2,\,B^b,\,1,1,3,0,2$ of the sequence. The remaining periodic tail $[B^b,C]^*$, for every $n\ge3k+6$, is exactly Lemma~\ref{lem:tail}.
\end{proof}
 
Written as a word, the periodic block of Theorem~\ref{thm:Pn} (of length $k+1$) is $B^{\,b}C$, where $B=11331122$ and $C=11323$. Since $k=2^h-4$, $k-4=8\bigl(2^{h-3}-1\bigr)$, so $b=2^{h-3}-1$ is a nonnegative integer for every $h\ge3$: the period-$8$ block $B$ always divides the length-$(k+1)$ periodic block evenly, leaving exactly the fixed length-$5$ remainder $C$. Notice that, in contrast to $\nimpc{n}$ and $\nimpb{n}$, whose values grow linearly with $k$, $\nimp{n}$ never exceeds $3$ once periodic, and the value $0$ never recurs after the transient.

\section{Concluding remarks and future work}
 
In this paper we initiated the study of hull games in the monophonic and $\ell_k$-convexities.
We proved PSPACE-completeness of $\CHG_{\mc}$ and $\CHG_{\ell_k}$ for every $k\ge2$ even in graphs of diameter at most three, and we gave polynomial time algorithms, via the Sprague-Grundy Theory, to determine the winner of these games in disjoint unions of paths and cycles.
For $k$ odd, we obtained exact closed-form winner criteria in $P_n$ and $C_n$; for the family $k=2^h-4$ ($h\ge3$), we established the full periodic structure of the nimber sequences, again in both $P_n$ and $C_n$. For $k=2$, we uncovered a close relationship with Conway's \emph{Couples-are-Forever}, matching the sequences of $P'_n$ and $P''_n$ exactly and the sequence of $P_n$ itself with agreement rate $99.97\%$ for $n$ up to $10$ million.
 
Several questions remain open.
 
\begin{itemize}
\item The complexity of the interval game $\CIG_\mathcal{C}$ remains open for every convexity $\mathcal{C}$ considered here (geodesic, monophonic and $\ell_k$), in contrast with the hull game, which we now know to be PSPACE-complete in all these cases.
 
\item Our computational experiments (Section~\ref{sec:forever}) strongly suggest, but do not settle, that the nimber sequence of $\nim_{\ell_2}(P_n)$ eventually behaves like a shift of the sequence of \emph{Couples-are-Forever}, up to a small, structured set of exceptions. Since the periodicity of \emph{Couples-are-Forever} itself is a long-standing open problem \cite{caines99}, we leave as an open question whether $\nim_{\ell_2}(P_n)$ is eventually periodic, and whether the exceptional positions (the 12 values of $n$ where Alice loses) can be characterized without exhaustive search.
 
\item We showed that, for every $k=2^h-4$ with $h\ge3$, the nimber sequences of $\CHG_{\ell_k}$ in $P_n$ and $C_n$ are eventually periodic, and that this is the only family of even $k$ for which we observed periodicity in extensive computational testing. This suggests the following question: Is that true that the nimber sequence $\nim_{\ell_k}(P_n)$ is aperiodic for every even $k\ge2$ with $k\ne2^h-4$ for any integer $h\ge3$?
 
\item It would be interesting to determine the smallest diameter for which $\CHG_{\mc}$ and $\CHG_{\ell_k}$ (with $k\ge3$) remain PSPACE-hard. Our reduction (Theorem~\ref{teo-pspace1}) achieves diameter three; for $\ell_2$, PSPACE-hardness is already known to hold at diameter two \cite{araujo24}, so it is natural to ask whether the same holds for $\mc$ and for $\ell_k$ with $k\ge3$.
 
\item Finally, extending the polynomial time algorithms of Section~\ref{sec:poly} beyond disjoint unions of paths and cycles -- e.g., to trees, or to graphs of bounded treewidth -- is a natural next step, given that such extensions are known for several other graph convexity games.
\end{itemize}

\bibliographystyle{plain}
\bibliography{refs}

@article{caines99,
 author = {Ian Caines and Carrie Gates and Richard K. Guy and Richard J. Nowakowski},
 journal = {The American Mathematical Monthly},
 number = {4},
 pages = {359--361},
 title = {Periods in Taking and Splitting Games},
 volume = {106},
 year = {1999}
}

@Article{sprague36,
 Author = {Sprague, R.},
 Title = {{\"U}ber mathematische {Kampfspiele}},
 FJournal = {Tohoku Math J},
 Journal = {T\^ohoku Math J},
 Volume = {41},
 Pages = {438--444},
 Year = {1936},
 Language = {German},
 zbMATH = {3020856},
 Zbl = {0013.29004}
}

@Article{grundy39,
 Author = {Grundy, P. M.},
 Title = {Mathematics and games},
 Journal = {Eureka (The Archimedeans' Journal)},
 Volume = {2},
 Pages = {6--8},
 Year = {1939},
}

@ARTICLE{harary81,
    title = {Convexity in graphs},
    journal = {Journal of Differential Geometry},
    volume = {16},
    number = {1},
    pages = {185--190},
    year = {1981},
    author = {Harary, Frank and Nieminem, Juhani},
    mrnumber = {0638785},
    zbl = {0493.05037},
    doi = {},
    langid = {en},
}

@INCOLLECTION{harary84,
    author = {Harary, Frank},
    title = {Convexity in Graphs: Achievement and Avoidance Games},
    booktitle = {Convexity and Graph Theory},
    publisher = {North-Holland},
    volume = {87},
    pages = {323},
    year = {1984},
    mrnumber = {0791006},
    langid = {en},
    zbl = {},
}

@article{harary84b,
    author = {Harary, Frank},
    title = {Game theoretic aspects of graph theory},
    journal = {Annals of Discrete Mathematics},
    volume = {20},
    pages = {133-135},
    year = {1984},
}

@ARTICLE{buckley85,
    author = {Buckley, Fred and Harary, Frank},
    journal = {Quaestiones Mathematicae},
    pages = {321--334},
    title = {Geodetic games for graphs},
    volume = {8},
    year = {1985},
    mrnumber = {0854054},
    zbl = {0615.90093},
    langid = {en},
    zbl = {0615.90093},
}

@ARTICLE{haynes-2003,
    author = {Haynes, T. W. and Henning, M. A. and Tiller, C.},
    journal = {Quaestiones Mathematicae},
    pages = {389--397},
    title = {Geodetic achievement and avoidance games for graphs},
    volume = {26},
    year = {2003},
    mrnumber = {2046144},
    langid = {en},
    zbl = {1152.05377},
}

@BOOK{demaine-2009,
    author = {Hearn, R. and Demaine, E.},
    publisher = {A. K. Peters Ltd},
    title = {Games, Puzzles and Computation},
    year = {2009},
    mrnumber = {2537584},
    zbl = {1175.91035},
    doi = {},
    langid = {en},
    pages = {ix+237},
    zbl = {1175.91035},
}

@ARTICLE{nec-1988,
    author = {Nečásková, M.},
    journal = {Quaestiones Mathematicae},
    pages = {115--119},
    title = {A note on the achievement geodetic games},
    volume = {12},
    year = {1988},
    mrnumber = {0979252},
    langid = {en},
    zbl = {},
}

@ARTICLE{schaefer-1978,
    title = {On the complexity of some two-person perfect-information games},
    journal = {J. Comput. Syst. Sci.},
    volume = {16},
    number = {2},
    pages = {185--225},
    year = {1978},
    author = {Schaefer, Thomas J.},
    mrnumber = {0490917},
    langid = {en},
    zbl = {0383.90112},
}

@ARTICLE{dragan99,
    author = {Dragan, Feodor F. and Nicolai, Falk and Brandstädt, Andreas},
    title = {Convexity and {HHD}-free graphs},
    journal = {SIAM J. Discrete Math.},
    volume = {12},
    year = {1999},
    number = {1},
    pages = {119--135},
    doi = {10.1137/S0895480195321718},
    mrnumber = {1652261},
    zbl = {0916.05060},
    langid = {en},
    zbl = {0916.05060},
}

@article{chandran24,
title = {The general position avoidance game and hardness of general position games},
journal = {Theor. Comput. Sci.},
volume = {988},
pages = {114370},
year = {2024},
author = {Ullas {Chandran S. V.} and Sandi Klavžar and Neethu {P. K.} and Rudini M. Sampaio},
langid = {en},
}

@ARTICLE{jamison82,
    title = {A perspective on abstract convexity: classifying alignments by varieties},
    author = {Jamison, Robert E.},
    journal = {Convexity and Related Combinatorial Geometry},
    year = {1982},
    mrnumber = {0650310},
    zbl = {0482.52001},
    doi = {},
    langid = {en},
    zbl = {0482.52001},
}

@ARTICLE{gutierrez-protti-tondato2023,
    author = {Gutierrez, Marisa and Protti, Fábio and Tondato, Silvia},
    title = {Convex geometries over induced paths with bounded length},
    journal = {Discrete Mathematics},
    year = {2023},
    volume = {346},
    number = {1},
    pages = {113133},
}

@book{araujo-livro25,
title = {{I}ntroduction to {G}raph {C}onvexity: an algorithmic approach},
author = {Araújo, Júlio and Dourado, Mitre C and Protti, Fábio and Sampaio, Rudini M.},
year = {2025},
publisher = {Springer Cham},
}

@article{araujo-cocoon25,
author="Araújo, Samuel N. and Brito, J. Marcos
and Folz, Raquel and de Freitas, Rosiane and Sampaio, Rudini M.",
title = {Algorithms and complexity of graph convexity partizan games},
journal = {Theor. Comput. Sci.},
volume = {1044},
number = {},
pages = {115267},
year = {2025},
}

@article{araujo24,
title = {Graph convexity impartial games: Complexity and winning strategies},
journal = {Theor. Comput. Sci.},
volume = {998},
pages = {114534},
year = {2024},
author="Araújo, Samuel N. and Brito, J. Marcos
and Folz, Raquel and de Freitas, Rosiane and Sampaio, Rudini M.",
}
 
\end{document}